\documentclass{article}
\usepackage[utf8]{inputenc}
\usepackage[paper=a4paper,left=25mm,right=25mm,top=25mm,bottom=25mm]{geometry}

\usepackage{amsmath, amsfonts, amssymb, mathtools, amsthm}
\usepackage{enumitem}
\usepackage{bbm}
\usepackage{authblk}
\usepackage[authoryear]{natbib}

\newtheorem{lemma}{Lemma}

\usepackage[]{graphicx}

\title{Confirmatory Adaptive Hypothesis Tests in Markovian Illness-Death Models}
\author[1,*]{Rene Schmidt}
\author[1]{Moritz Fabian Danzer}

\date{\today}

\affil[1]{Institute of Biostatistics and Clinical Research, University of Münster, 48149 Münster, Germany}
\affil[*]{Corresponding author: rene.schmidt@ukmuenster.de}
\begin{document}

\maketitle

\begin{abstract}
Classic adaptive designs for time-to-event trials are based on the log-rank statistic and its increments. Thereby, only information from the time-to-event endpoint on which the selected log-rank statistic is based may be used for data-dependent design modifications in interim analyses. Further information (e.g. surrogate parameters) may not be used. As pointed out in a letter by P. Bauer and M. Posch in 2004, adaptive tests on overall survival (OS) based on the log-rank statistic do in general not control the significance level if interim information on progression-free survival (PFS) is used for sample size adjustments, because progression is associated with increased risk of death. 

In contrast, in adaptive designs for time-to-event trials, which are constructed according to the principle of patient-wise separation, all trial data observed in interim analyses may be used for design modifications without compromizing type one error rate control. But by design, this comes at the price of incomplete use of the primary endpoint data in the final test decision or worst-case considerations which lead to a loss of power. Thus, the patient-wise separation approach cannot be regarded as a general solution to the problem described by Bauer and Posch.

We address this problem within the framework of a comprehensive independent increments approach. We develop adaptive tests on OS in which sample size adjustments may be based on the observed interim data of both OS and PFS, while avoiding the problems of the patient-wise separation approach. We provide this methodology for both single-arm trials, in which a new therapy is compared with a pre-specified deterministic reference, and randomized trials, in which a new therapy is compared with a concurrent control group. The underlying assumption is that the joint distribution of OS and PFS is induced by a Markovian illness-death model. 
\end{abstract}

\noindent {\bf KEY WORDS:} Adaptive design; Illness-death model; Log-rank test; Survival data

\section{Introduction}\label{Sec01}

For ethical and economic reasons, interim analyses are carried out in clinical trials in order to be able to decide promptly on the effectiveness of a new treatment and, if necessary, to carry out design modifications (e.g. recalculation of the sample size) based on the data collected so far. Study designs that allow such flexibility while maintaining the probability of a type one error are referred to as confirmatory adaptive designs. The basic idea of adaptive designs was formulated by Peter Bauer \citep{B89,BauerPosch04}. For studies with short-term endpoints (i.e. the primary endpoints are more or less immediately observable), the work of \cite{PH95}, \cite{LW99}, \cite{SM01}, \cite{BPB02} and \cite{H01} provides a basic methodological toolkit of adaptive techniques that allows extensive flexibility (from sample size recalculation to data-dependent modification of (multiple) testing problems). A current overview, in particular of the classic adaptive methods and their application, can be found in \cite{BW16} and \cite{B16}. It is crucial for the applicability of these techniques that the stepwise p-values are independent or at least p-clud \cite[c.f.][]{BPB02}. This is usually fulfilled in studies with short-term endpoints, as each patient effectively only contributes data to one stage of the adaptive design. In chronic or potentially life-threatening diseases (e.g. in oncology), however, not only the short-term response but also the achievement of long-term remission is of central importance from the patient's point of view. In these studies, the primary endpoints are often event time endpoints, such as the time from the start of therapy to a predefined critical event. This justifies a need for adaptive methodology for event time endpoints.

However, subtle problems arise in the context of adaptive designs for time-to-event endpoints \citep[c.f.][]{BauerPosch04,B16}. This is because in a study with time-to-event endpoint, the outcome of the patients is usually recorded over long-term periods. This means that if the study is analyzed in several stages and a study patient is still event-free at the time of an interim analysis, they will continue to be at risk in subsequent stages until either an event occurs or the patient withdraws from the study without an event (censoring). Such a patient therefore contributes data to several stages of the adaptive design. Consequently, this patient and his data (unlike in studies with a short-term endpoint) cannot be clearly assigned to a fixed stage. This induces potential dependencies between the stages.

Historically, the first approaches to achieve independent stage-wise p-values in a time-to-event context and thus to construct adaptive designs were developed for trials with a single primary time-to-event endpoint. These procedures are based on the independent increment property of the underlying (sequence of) univariate test statistics \citep[c.f.][]{T81, T82, OS86, FL00,bib3}. For the two-sample log-rank test, adaptive designs were provided by \cite{SM01}, \cite{W06}, and \cite{DP07}; for the Cox proportional hazards model by \cite{J-EI09} and for situations with non-proportional hazards by \cite{BB17}. For single-arm phase IIa time-to-event trials, a corresponding adaptive one-sample log-rank test was developed by \cite{S18}. With these methods, however, the interim data that can be used for design modifications are very limited: On the one hand, these methods are tailored to study situations with a single primary event time endpoint. On the other hand, only information relating to this single primary time-to-event endpoint may be used in the context of data-dependent design modifications. Additional information from other time-to-event endpoints must not be used for design modifications, as this can lead to inflation of the type one error probability \cite[c.f.][]{BauerPosch04}. This need to restrict design modifications to a single time-to-event endpoint is unsatisfactory from a clinical perspective, because the dynamics and complexity of many diseases suggests a simultaneous consideration of several event time endpoints, especially for interim decisions.

Currently there are some proposals for adaptive designs that allow the use of short-term endpoints for design modifications in adaptive survival trials \cite[c.f.][]{S10,Fea11,F12}. However, these methods are based on specific assumptions about the joint distribution of time-to-event and short-term endpoint. Approaches have also been developed that allow the use of categorical surrogate endpoints for design modifications in adaptive event time studies in a non-parametric setting, e.g. tumor response status as a surrogate for overall survival \citep[c.f.][]{Bea09,BBB18}. However, these methods do not allow the simultaneous use of information from multiple time-to-event endpoints for design modifications. \cite{Rea16} have developed a two-stage adaptive design for studies with two event time endpoints, which strives for proof of significance for at least one of the two endpoints. The design allows sample size recalculation based on the event time endpoint that shows the smaller p-value in an interim analysis.

The first methods that allow comprehensive and simultaneous use of information from multiple time-to-event endpoints for design modifications were proposed by \cite{JSJ11}, \cite{IS12}, \cite{Mea14}, and \cite{CGB15}. By design, however, these methods have the disadvantage that a certain proportion of the primary outcome data generally cannot be taken into account in the final confirmatory test decision \citep[see][Table 2]{Mea16}. In addition, the possibility of an early rejection of the null hypothesis is limited, as the calculation of the p-value of an interim analysis by design only takes place at a predetermined (sufficiently late) time $t_{max}$. An extension of this approach proposed by \cite{Jea19} strengthens the possibility of an early rejection of the null hypothesis in the context of an interim analysis, whereby the problem of incomplete data use in the final test decision remains. A solution to the problem of incomplete data use in the final test decision was proposed by \cite{Mea16}. However, this approach is based on a worst-case consideration and results in a conservative test decision.

The subject of this paper is confirmatory adaptive tests for clinical trials with two correlated time-to-event endpoints that allow data-dependent design modifications (e.g. recalculation of the sample size) on the basis of both time-to-event endpoints while avoiding the problems described above (i.e. with full control and exhaustion of the significance level and full use of all available event time data in the final test decision). Due to the particular clinical relevance, we always consider the two time-to-event endpoints progression-free survival (PFS, time to progression or death whatever occurs first) and overall survival (OS, time to death) as examples in the following. We describe the joint distribution of PFS and OS using an illness-death model. This is because an approach using multi-stage models appears to be particularly suitable for clinical practice. In this setting, the following natural classes of test problems arise: Evidence of significant superiority of a new therapy 
\begin{itemize}
	\item[(i)] with regard to the joint distribution of PFS and OS, or
	\item[(ii)] with respect to the marginal distributions of both PFS and OS, or
	\item[(iii)] with respect to the marginal distribution of PFS, or
	\item[(iv)] with respect to the marginal distribution of OS.
\end{itemize}
Adaptive hypothesis tests on the joint distribution of PFS and OS, which allow the simultaneous use of PFS and OS interim data, were described for single-arm phase IIa trials in \cite{Danzer22}, and for two-arm randomized phase III trials in \cite{Danzer23}. Here, single-arm means that the proof of superiority is provided against a reference known in advance and thus regarded as deterministic, while in the two-arm randomized trials the proof of superiority is provided against a control group collected concurrently.

Adaptive hypothesis tests on the marginal distribution of PFS, which allow the simultaneous use of PFS and OS interim data, can be obtained as described in \cite{W06} on the basis of the classical log-rank test with respect to PFS. This is because the classic log-rank statistic for PFS is already adapted to the joint filtration generated by PFS and OS events \citep[c.f.][]{Danzer22}. This does not apply to the classic log-rank statistics for OS. Therefore, adaptive hypothesis tests of class (iii) and (iv) cannot be obtained as described in \cite{W06} on the basis of the classical log-rank test with respect to OS if PFS interim data are also to be used for design modifications \cite[c.f.][]{BauerPosch04}.

Simultaneous adaptive hypothesis tests on the marginal distribution of PFS and OS that allow simultaneous use of PFS and OS interim data are constructed in Chapter \ref{sec:single_arm_pfs+os}. Adaptive hypothesis tests on the marginal distribution of OS that allow simultaneous use of PFS and OS interim data are constructed in the setting of single-arm phase IIa trials in Chapter \ref{sec:single_arm_os}, and in the setting of two-arm randomized phase III trials in Chapter \ref{sec:rct_os}. In Chapter \ref{sec:rct_pfs_os_gap}, we additionally consider adaptive hypothesis test on the gap between the PFS and OS curves. The general underlying notation is introduced in Chapter \ref{Sec02}. We conclude with a discussion of the results and an outlook for further future research in Chapter \ref{Discussion}. The derivation of the distributional properties of all involved test statistics is based on martingale theory, and is outsourced to the Appendix.

\section{General Aspects and Notation}\label{Sec02}

In confirmatory adaptive trials, patients are recruited sequentially in consecutive stages $\kappa = 1, \ldots, K$, and are assigned to one or more treatment arm $x$ in each stage. In the context of single arm trials, we let the treatment arm variable take the value $x=E$, only, for the experimental treatment under consideration. In the context of  two-armed randomized trials, we let the treatment arm variable take the values $x=E$ for experimental treatment and $x=C$ for control treatment. We denote by $\aleph^{(x \kappa)}$ the set of patients from treatment group $x$ and stage $\kappa$, and by $n^{(x \kappa)}\coloneqq |\aleph^{(x \kappa)}|$ the number of such patients. Likewise let $\aleph^{(x)}\coloneqq  \cup_{\kappa} \aleph^{(x \kappa)}$, $\aleph^{(\kappa)}\coloneqq  \cup_{x} \aleph^{(x \kappa)}$, $\aleph\coloneqq  \cup_{x, \kappa} \aleph^{(x \kappa)}$ as well as $n^{(x)}\coloneqq  \sum_{\kappa} n^{(x \kappa)}$,  $n^{(\kappa)}\coloneqq  \sum_{x} n^{(x \kappa)}$, $n\coloneqq  \sum_{x, \kappa} n^{(x \kappa)}$.

We assume that each patient is at risk for two types of failure: Progression of disease (state $1$, transient) and death with or without prior progression (state $2$, absorbing), where we assume that the potential transitions of a patient from the initial state $0$ (trial entry) to state $1$ and/or state $2$ occur according to Markovian illness--death model with these states. We let $T_i^j$ denote the time from entry to the $j$th type of failure for patient $i$, and $C_i$ the time from entry to censoring. For each $i$, the failure time vector $(T_i^1, T_i^2)$ and the censoring time $C_i$ are assumed to be independent (non-informative censoring). Likewise, any two vectors $(T_i^1, T_i^2, C_i)$ and $(T_{i'}^1, T_{i'}^2, C_{i'})$ for different patients $i \neq i'$ are assumed to be independent. Vectors $(T_i^1, T_i^2, C_i)$ for patients $i \in  \aleph^{(x)}$ from the same treatment group $x$ are assumed to be independent and identically distributed. For a patient $i$ from treatment group $x$, the joint distribution of $(T_i^1, T_i^2)$ is determined by three hazards
\begin{align}\label{eq:02.01}
\begin{split}
%\alpha^{(j,\{1,2\}),(x)}(s,s,s) ds &\coloneqq  \frac{   \mathbb{P}(s < T_i^j \leq s + ds | T_i^1 > s, T_i^2 > s)  }{ \mathbb{P}(s < T_i^j | T_i^1 > s, T_i^2 > s) } \quad \text{for } j \in \{1,2\} \\ 
%\alpha^{(2,\{2\}),(x)}(s,s_1,s) ds &\coloneqq  \frac{   \mathbb{P}(s < T_i^2 \leq s + ds | T_i^1 = s_1, T_i^2 > s)  }{ \mathbb{P}(s < T_i^2 | T_i^1 = s_1, T_i^2 > s) }
\alpha^{(j,\{1,2\}),(x)}(s,s,s) ds &\coloneqq     \mathbb{P}(s < T_i^j \leq s + ds | T_i^1 > s, T_i^2 > s)   \quad \text{for } j \in \{1,2\} \\ 
\alpha^{(2,\{2\}),(x)}(s,s_1,s) ds &\coloneqq     \mathbb{P}(s < T_i^2 \leq s + ds | T_i^1 = s_1, T_i^2 > s)  
\end{split}
\end{align}
The Markov assumption says that $\alpha^{(2,\{2\}),(x)}(s,s_1,s)$ does not depend on the value of $s_1$. A sufficient criterion for that is given in Appendix \ref{subsec:appendix_markov_sufficient}. Presupposing the Markov assumption in the following, we will thus write $\alpha_{01}^{(x)}(s)$, $\alpha_{02}^{(x)}(s)$ and $\alpha_{12}^{(x)}(s)$ instead of $\alpha^{(1,\{1,2\}),(x)}(s,s,s)$ , $\alpha^{(1,\{1,2\}),(x)}(s,s,s)$ and $\alpha^{(2,\{2\}),(x)}(s,s_1,s)$ in the sequel in order to simplify notation. Let $\alpha_{0*}^{(x)}(s)\coloneqq  \alpha_{01}^{(x)}(s)+\alpha_{02}^{(x)}(s)$. When data for patient $i$ are reviewed $s$ time units after his entry (so called \emph{in--study time} scale), we assume that the following information is observed and available for potential design modifications (such as sample size recalculation): 
\begin{align}\label{eq:02.02}
\begin{split}
&I(T_i^1 \leq s \wedge C_i \wedge T_i^2),  T_i^1 \cdot I(T_i^1 \leq s \wedge C_i \wedge T_i^2), \\
&I(T_i^2 \leq s \wedge C_i), 							T_i^2 \cdot I(T_i^2 \leq s \wedge C_i), \\
&I(C_i   \leq s \wedge T_i^2), 						C_i \cdot I(C_i \leq s \wedge T_i^2). 
\end{split}
\end{align}
Let ${\cal{F}}_{i,s}$ denote the $\sigma$--algebra generated by the random variables in \eqref{eq:02.02}, and let ${\cal{F}}_{s} \coloneqq  \sigma(\cup_{i \in \aleph} {\cal{F}}_{i,s})$.
The main focus in this work is adaptive hypothesis testing on overall survival OS (i.e., time $T_i^{OS}\coloneqq T_i^2$ from entry to death) and progression-free survival PFS (i.e., time $T_i^{PFS}\coloneqq T_i^1 \wedge T_i^2$ from entry to death or progression, whichever comes first). Accordingly, let $N_i^{OS}(s)\coloneqq I(T_i^2  \leq s \wedge C_i)$ denote the indicator for death at time $s$ after entry for patient $i$, and $N_i^{PFS}(s)\coloneqq I(T_i^1 \wedge T_i^2  \leq s \wedge C_i)$ the corresponding indicator for event. Basis for our method are patient--wise treatment specific stochastic processes
\begin{align}\label{eq:pfs_os_indiv_martingale}
\begin{split}
M_i^{PFS}(s) &\coloneqq  N_i^{PFS}(s) - \int_0^{T_i^1 \wedge T_i^2 \wedge C_i \wedge s} \alpha_{0 *}^{(x)}(u) du \\
M_i^{OS}(s)  &\coloneqq  N_i^{OS}(s)  - \int_0^{T_i^1 \wedge T_i^2 \wedge C_i \wedge s} \alpha_{0 2}^{(x)}(u) du   - \int_{T_i^1 \wedge T_i^2 \wedge C_i \wedge s}^{T_i^2 \wedge C_i \wedge s} \alpha_{1 2}^{(x)}(u) du 
\end{split}
\end{align}
for patients $i \in \aleph^{(x)}$ from treatment group $x$. Notice that $M_i^{PFS}(s)$ and $M_i^{OS}(s)$ are mean--zero $({\cal{F}}_{s})_{s \geq 0}$--martingales (c.f., \cite{Danzer22}, Corollary 2). 
These are used to define, for each treatment group $x$ and stage $\kappa$, the cumulative $({\cal{F}}_{s})_{s \geq 0}$--martingales
\begin{align}\label{eq:02.04}
\begin{split}
M_{12}^{(x \kappa)}(s) &\coloneqq  \sqrt{n^{(x \kappa)}} \cdot \sum_{i \in \aleph^{(x \kappa)}} \int_0^s \frac{I_{12,i}(s)}{Y_{12}^{(x \kappa)}(s)} dM_i^{OS}(s) \\
M_{02}^{(x \kappa)}(s) &\coloneqq  \sqrt{n^{(x \kappa)}} \cdot \sum_{i \in \aleph^{(x \kappa)}} \int_0^s \frac{I_{02,i}(s)}{Y_{02}^{(x \kappa)}(s)} dM_i^{OS}(s) \\
M_{0*}^{(x \kappa)}(s) &\coloneqq  \sqrt{n^{(x \kappa)}} \cdot \sum_{i \in \aleph^{(x \kappa)}} \int_0^s \frac{I_{02,i}(s)}{Y_{02}^{(x \kappa)}(s)} dM_i^{PFS}(s) 
\end{split}
\end{align}
with at risk indicators 
\begin{align}\label{eq:def_at_risk}
\begin{split}
I_{02,i}(s) &\coloneqq I(T_i^1 \wedge T_i^2 \wedge C_i \geq s), 
\quad I_{12,i}(s)\coloneqq I(T_i^1 \wedge T_i^2 \wedge C_i < s \leq T_i^2 \wedge C_i), \\
Y_{02}^{(x \kappa)}(s)&\coloneqq  \sum_{i \in \aleph^{(x \kappa)}} I_{02,i}(s),
\quad Y_{12}^{(x \kappa)}(s)\coloneqq  \sum_{i \in \aleph^{(x \kappa)}} I_{12,i}(s),  \\
J_{02}^{(x \kappa)}(s)&\coloneqq  I( Y_{02}^{(x \kappa)}(s) > 0 ),
\quad J_{12}^{(x \kappa)}(s)\coloneqq  I( Y_{12}^{(x \kappa)}(s) > 0 ). 
\end{split}
\end{align}
In particular, notice that $I_{12,i}(s) \cdot I_{02,i}(s) = 0$. The pairwise quadratic covariation processes of the martingales from \eqref{eq:02.04} are
\begin{align}\label{eq:na_martingale_covariation}
\begin{split}
[M_{12}^{(x \kappa)}](s) &= n^{(x \kappa)} \sum_{i \in \aleph^{(x \kappa)}} \int_0^s \frac{I_{12,i}(u)}{Y_{12}^{(x \kappa)}(u)^2} dN_i^{OS}(u), \\
[M_{02}^{(x \kappa)}](s) &= n^{(x \kappa)} \sum_{i \in \aleph^{(x \kappa)}} \int_0^s \frac{I_{02,i}(u)}{Y_{02}^{(x \kappa)}(u)^2} dN_i^{OS}(u), \\
[M_{0*}^{(x \kappa)}](s) &= n^{(x \kappa)} \sum_{i \in \aleph^{(x \kappa)}} \int_0^s \frac{I_{02,i}(u)}{Y_{02}^{(x \kappa)}(u)^2} dN_i^{PFS}(u), \\
[M_{12}^{(x \kappa)},M_{02}^{(x \kappa)}](s) &=0, \\
[M_{12}^{(x \kappa)},M_{0*}^{(x \kappa)}](s) &=0, \\
[M_{02}^{(x \kappa)},M_{0*}^{(x \kappa)}](s) &= n^{(x \kappa)} \sum_{i \in \aleph^{(x \kappa)}} \int_0^s \frac{I_{02,i}(u)}{Y_{02}^{(x \kappa)}(u)^2} dN_i^{OS}(u) \equiv [M_{02}^{(x \kappa)}](s).
\end{split}
\end{align}
The predictable variation processes are given in differential form by
\begin{align}\label{eq:na_martingale_covariation_prev}
	\begin{split}
		d\langle M_{12}^{(x \kappa)}\rangle (s) &= n^{(x \kappa)} \frac{J_{12}^{(x \kappa)}(s)}{Y_{12}^{(x \kappa)}(s)} \alpha_{12}(s) ds, \\
		d\langle M_{02}^{(x \kappa)}\rangle (s) &= n^{(x \kappa)} \frac{J_{02}^{(x \kappa)}(s)}{Y_{02}^{(x \kappa)}(s)} \alpha_{02}(s) ds, \\
		d\langle M_{0*}^{(x \kappa)}\rangle (s) &= n^{(x \kappa)} \frac{J_{01}^{(x \kappa)}(s)}{Y_{01}^{(x \kappa)}(s)} \alpha_{01}(s) ds + d\langle M_{02}^{(x \kappa)}\rangle (s), \\
		d\langle M_{12}^{(x \kappa)},M_{02}^{(x \kappa)}\rangle (s) &=0, \\
		d\langle M_{12}^{(x \kappa)},M_{0*}^{(x \kappa)}\rangle (s) &=0, \\
		d\langle M_{02}^{(x \kappa)},M_{0*}^{(x \kappa)}\rangle (s) &= d\langle M_{02}^{(x \kappa)}\rangle (s).
	\end{split}
\end{align}
%with $ N_i^{OS,PFS}(u) = I(T_i^1 \wedge T_i^2 \leq C_i \wedge u, T_i^1 \wedge T_i^2 = T_i^2)$. 
Finally, for each treatment group $x$, we will also need stage-wise Nelsen--Aalen type estimators 
\begin{align}\label{eq:na_estimators}
\begin{split}
\hat{\Lambda}_{12}^{(x \kappa)}(s)  &= \sum_{i \in \aleph^{(x \kappa)}} \int_0^s \frac{I_{12,i}(u)}{Y_{12}^{(x \kappa)}(u)} dN_i^{OS}(u)   \\
\hat{\Lambda}_{02}^{(x \kappa)} (s) &= \sum_{i \in \aleph^{(x \kappa)}} \int_0^s \frac{I_{02,i}(u)}{Y_{02}^{(x \kappa)}(u)} dN_i^{OS}(u)  \\
\hat{\Lambda}_{0*}^{(x \kappa)} (s) &= \sum_{i \in \aleph^{(x \kappa)}} \int_0^s \frac{I_{02,i}(u)}{Y_{02}^{(x \kappa)}(u)} dN_i^{PFS}(u)  
\end{split}
\quad \longleftrightarrow \quad
\begin{split}
d\hat{\Lambda}_{12}^{(x \kappa)}(s) &= \sum_{i \in \aleph^{(x \kappa)}}  \frac{I_{12,i}(s)}{Y_{12}^{(x \kappa)}(s)} dN_i^{OS}(s)  \\
d\hat{\Lambda}_{02}^{(x \kappa)}(s) &= \sum_{i \in \aleph^{(x \kappa)}}  \frac{I_{02,i}(s)}{Y_{02}^{(x \kappa)}(s)} dN_i^{OS}(s)\\
d\hat{\Lambda}_{0*}^{(x \kappa)}(s) &= \sum_{i \in \aleph^{(x \kappa)}}  \frac{I_{02,i}(s)}{Y_{02}^{(x \kappa)}(s)} dN_i^{PFS}(s) 
\end{split}
\end{align}
of the cumulative hazard functions $\Lambda_{12}^{(x)}(s)$, $\Lambda_{02}^{(x)}(s)$, and $\Lambda_{0*}^{(x)}(s)$. 
Notice that in differential form, we may thus write
\begin{align}\label{eq:na_martingale}
	\begin{split}
		dM_{j2}^{(x \kappa)}(s)&= \sqrt{n^{(x \kappa)}}  \left( d\hat{\Lambda}_{j2}^{(x \kappa)}(s)  - J_{j2}^{(x \kappa)}(s) d{\Lambda}_{j2}^{(x \kappa)}(s)  \right), \quad \text{for } j=1,2, \\ 
		dM_{0*}^{(x \kappa)}(s)&= \sqrt{n^{(x \kappa)}}  \left( d\hat{\Lambda}_{0*}^{(x \kappa)}(s)  - J_{02}^{(x \kappa)}(s) d{\Lambda}_{0*}^{(x \kappa)}(s)  \right).
	\end{split}
\end{align}
Since $\Lambda_{01}^{(x)}(s)=\Lambda_{0*}^{(x)}(s) - \Lambda_{02}^{(x)}(s)$, also set
\begin{align*}
\begin{split}
\hat{\Lambda}_{01}^{(x \kappa)}(s) &\coloneqq  \hat{\Lambda}_{0*}^{(x \kappa)}(s) - \hat{\Lambda}_{02}^{(x \kappa)}(s) \\
M_{01}^{(x \kappa)}(s) &\coloneqq  M_{0*}^{(x \kappa)}(s) - M_{02}^{(x \kappa)}(s) 
\end{split}
\end{align*}
Finally, let us also introduce the progression-free survival function $S_{PFS}^{(x)}(s)\coloneqq \mathbb{P}(T_i^1 \wedge T_i^2 \geq s)$ and the overall survival function $S_{OS}^{(x)}(s)\coloneqq \mathbb{P}(T_i^2 \geq s)$ within treatment group $x$, together with the corresponding cumulative hazards functions $\Lambda_{PFS}^{(x)}(s) \coloneqq  \log(S_{PFS}^{(x)}(s))$ and $\Lambda_{OS}^{(x)}(s) \coloneqq  \log(S_{OS}^{(x)}(s))$ where $\log(\cdot)$ is the natural logarithm. 

\section{Simultaneous adaptive testing of hypotheses on PFS and OS}\label{sec:single_arm_pfs+os}

\subsection{The null hypothesis}\label{Sec03.01}
We consider a single-arm clinical trial in which patients are recruited in successive stages $\kappa = 1, \ldots, K$ in order to be treated with some experimental treatment $x=E$, say. In this setting, we consider testing the two-sided null-hypothesis that the progression-free survival $S_{PFS}^{(E)}(s)$ and the overall survival $S_{OS}^{(E)}(s)$ of the patients both coincide with prefixed reference survival functions $S_{PFS,0}(s)$ and $S_{OS,0}(s)$, respectively, i.e.
\begin{align}\label{eq03.01}
H_0: S_{PFS}^{(E)}(s) = S_{PFS,0}(s) \text{ and } S_{OS}^{(E)}(s) = S_{OS,0}(s). 
\end{align}
%Equivalently, $H_0$ may be written on the level of the corresponding cumulative hazard functions $\Lambda_{PFS}(s) \coloneqq  \log[S_{PFS}(s)]$ and $\Lambda_{OS}(s) \coloneqq  \log[S_{OS}(s)]$ where $\log(\cdot)$ is natural logarithm. 
For purposes of power calculation we additionally introduce the (stage-wise) \emph{contiguous alternatives}
\begin{align}\label{eq:contiguous_alternatives_both}
H_{1,n}: S_{PFS}^{(E)}(s) = S_{PFS,0}(s)^{\omega_{PFS}} \text{ and } S_{OS}^{(E)}(s) = S_{OS,0}(s)^{\omega_{OS}}
\end{align} 
with hazard ratios ${\omega_{PFS}}= \exp(- {\gamma_{PFS}}/\sqrt{n^{(E \kappa )}})$ and ${\omega_{OS}}= \exp(- {\gamma_{OS}}/\sqrt{n^{(E \kappa)}})$ for $\gamma_{PFS}, \gamma_{OS} > 0$. Whereas ${\omega_{OS}}$ and ${\omega_{PFS}}$ are the clinically relevant hazard ratios, the parameters $\gamma_{PFS}, \gamma_{OS} > 0$ are the quantities that will appear in the normal approximation of our test statistic.

\subsection{The test statistic for the OS component of $H_0$}\label{Sec03.02}
On first sight, the classical one--sample log--rank statistic $LR_{OS}(s)\coloneqq  \sum_i [N_i^{OS}(s) - \Lambda_{OS}(T_i^2 \wedge C_i \wedge s)]$ appears as natural candidate to test the OS component of null hypothesis $H_0$, as it is a mean--zero martingale when $H_0$ holds true. The martingale property of $LR_{OS}(s)$, however, only holds with respect to the filtration generated exclusively by OS-events. This is not sufficient for our purposes, as we aim for an adaptive test of $H_0$ that allows for data-dependent sample size adjustments based on PFS and OS interim data simultaneously. We will instead construct alternative stage-wise statistics $\Psi_{OS}^{(\kappa)}(s)$ that ensure the martingale property with respect to the filtration $({\cal{F}}_{s})_{s \geq 0}$ generated jointly by OS- and PFS-events.

Starting point is the following identity for the overall survival function that holds true under the Markov assumption (see Appendix \ref{subsec:appendix_os_identity})
\begin{align}\label{eq:os_identity}
S_{OS}^{(E)}(t) = 1 - \int_0^t S_{PFS}^{(E)}(u) \alpha_{02}^{(E)}(u) du - \int_0^t [S_{OS}^{(E)}(u) - S_{PFS}^{(E)}(u)] \alpha_{12}^{(E)}(u) du.
\end{align}
Identity \eqref{eq:os_identity} motivates to introduce the stochastic process
\begin{align}\label{eq03.03}
\Psi_{OS}^{(\kappa)}(s)\coloneqq  \sqrt{n^{(E \kappa)}} \left(   
										\int_0^s S_{PFS,0}(u) d\hat{\Lambda}_{02}^{(E \kappa)}(u) 
										+ \int_0^s [S_{OS,0}(u) - S_{PFS,0}(u)] d\hat{\Lambda}_{12}^{(E \kappa)}(u) 
										+ S_{OS,0}(u) - 1
										\right)
\end{align}
As detailed in Appendix \ref{sec:appendix_distributional_properties}, if null hypothesis $H_0$ from \eqref{eq03.01}) holds true, $\Psi_{OS}(s)$ is by Eqs. \eqref{eq:na_estimators} and \eqref{eq:os_identity} an $({\cal{F}}_{s})_{s \geq 0}$--martingale with mean zero and quadratic covariation process
\begin{align}\label{eq03.04}
[\Psi_{OS}^{(\kappa)}](s)= n^{(E \kappa)} \sum_{i \in \aleph^{(E \kappa)}}     
										\int_0^s \left( S_{PFS,0}(u)^2 \frac{I_{02,i}(u)}{Y_{02}^{(E \kappa)}(u)^2} 
									+ [S_{OS,0}(u) - S_{PFS,0}(u)]^2 \frac{I_{12,i}(u)}{Y_{12}^{(E \kappa)}(u)^2} \right) dN_i^{OS}(u). 							
\end{align}
The process $\Psi_{OS}^{(\kappa)}$ suffices the assumptions of Lemma \ref{lemma:rebolledo_condition_iii} and $[\Psi_{OS}^{(\kappa)}](s) \to \sigma_{OS}(s)^2$ in probability as $n^{(E \kappa)} \to \infty$ for some non--decreasing deterministic function $\sigma_{OS}(s)^2$  (see Appendix \ref{sec:appendix_distributional_properties}). By a central limit theorem for local martingales \citep{bib5}, if null hypothesis $H_0$ is true, $\Psi_{OS}^{(\kappa)}(s)$ thus converges in distribution to a mean zero, $({\cal{F}}_{s})_{s \geq 0}$--adapted  Gaussian process with independent increments whose variance function $\sigma_{OS}(s)^2$ can be consistently estimated in stage $\kappa$ by $\hat{\sigma}_{OS}^{(\kappa)}(s)^2\coloneqq [\Psi_{OS}^{(\kappa)}](s)$ in the large sample limit $n^{(E \kappa)} \to \infty$. The stage-wise processes $\Psi_{OS}^{(\kappa)}(s)$ are thus predestined to construct adaptive tests of $H_0$ as will be done in Section \ref{Sec03.07}, as they are adapted to the filtration $({\cal{F}}_{s})_{s \geq 0}$ generated jointly by PFS and OS events.

\subsection{Distribution of $\Psi_{OS}^{(\kappa)}(s)$ under the contiguous alternatives}\label{Sec03.03}

For sample size calculation we are interested in the distribution of $\Psi_{OS}(s)$ under the contiguous alternatives $H_1$ from Eq. \eqref{eq:contiguous_alternatives_both}. As detailed in Appendix \ref{sec:appendix_distributional_properties}, if $H_1$ is true, in the large sample limit $n^{(E \kappa)} \to \infty$ the process $\Psi_{OS}^{(\kappa)}(s)$ from \eqref{eq03.03} converges in distribution to a Gaussian process with independent increments, drift function
\begin{align}\label{eq:os_drift_intersection}
\begin{split}
\mu_{OS}(s) &\coloneqq   \gamma_{OS} \cdot \mu_{OS,A}(s)  +  \gamma_{PFS} \cdot \mu_{OS,B}(s) 
\end{split}
\end{align}
with nuisance parameters
\begin{align*}
\begin{split}
\mu_{OS,A}(s) &\coloneqq  S_{OS,0}(s) \cdot \log(S_{OS,0}(s)) + \int_0^s S_{OS,0}(u) \cdot \log(S_{OS,0}(u)) d\Lambda_{12}^{(E )}(u) \\
\mu_{OS,B}(s) &\coloneqq  \int_0^s S_{PFS,0}(u) \cdot \log(S_{PFS,0}(u)) d(\Lambda_{02}^{(E)}(u) - \Lambda_{12}^{(E)}(u))
\end{split}
\end{align*}
and variance function $\sigma_{OS}(s)^2$ that can be consistently estimated in stage $\kappa$ by $\hat{\sigma}_{OS}^{(E \kappa)}(s)^2 \coloneqq  [\Psi_{OS}^{(\kappa)}](s)$ from Eq. \eqref{eq03.04}.
The nuisance parameters $\mu_{OS,A}(s)$ and $\mu_{OS,B}(s)$ may be estimated in stage $\kappa$ via
\begin{align*}
\begin{split}
\hat{\mu}_{OS,A}^{(\kappa)}(s) &\coloneqq  S_{OS,0}(s) \cdot \log(S_{OS,0}(s)) + \int_0^s S_{OS,0}(u) \cdot \log(S_{OS,0}(u)) d\hat{\Lambda}_{12}^{(E \kappa)}(u) \\
\hat{\mu}_{OS,B}^{(\kappa)}(s) &\coloneqq  \int_0^s S_{PFS,0}(u) \cdot \log(S_{PFS,0}(u)) d(\hat{\Lambda}_{02}^{(E \kappa)}(u) - \hat{\Lambda}_{12}^{(E \kappa)}(u))
\end{split}
\end{align*}
where $d\hat{\Lambda}_{ij}^{(E \kappa)}$ is the stage-$\kappa$ estimate for $d{\Lambda}_{ij}^{(E)}$ from \eqref{eq:na_estimators}. In particular, since $E(\Psi_{OS}^{(\kappa)}(s)) = \mu_{OS}(s)$, the hazard ratio $\omega_{OS}$ w.r.t. OS may be estimated at stage $\kappa$ by 
\begin{align}\label{eq03.07}
  \hat{\omega}_{OS}^{(\kappa)} \coloneqq 
%1 - \frac{  n^{-1/2} \Psi_{OS}(s) + (1 - \hat{\omega}_{PFS}) \int_0^s S_{PFS,0}(u) \cdot \log(S_{PFS,0}(u)) d(\hat{\Lambda}_{02}(u) - \hat{\Lambda}_{12}(u))  }{  S_{OS,0}(s) \cdot \log(S_{OS,0}(s)) - \int_0^s S_{OS,0}(u) \cdot \log(S_{OS,0}(u)) d\hat{\Lambda}_{02}(u)  }
1 - \frac{    \frac{\Psi_{OS}^{(\kappa)}(s) }{ \sqrt{n^{(E \kappa)}}}       + \hat{\mu}_{OS,B}^{(\kappa)}(s) \cdot \hat{\omega}^{\kappa}_{PFS}  }{  \hat{\mu}_{OS,A}^{(\kappa)}(s)  }
\end{align}
where $\hat{\omega}^{\kappa}_{PFS}$ is the stage-$\kappa$ estimate for the true hazard ratio with respect to PFS from \eqref{eq03.51}. Notice that the parameter $n^{(E \kappa)}$ cancels out in \eqref{eq03.07} due to the chosen normalization of $\Psi_{OS}^{(\kappa)}$.

\subsection{The test statistic for the PFS component of $H_0$}\label{Sec03.04}
Unlike in the case of OS, the one--sample log--rank statistic $LR_{PFS}(s)\coloneqq  \sum_i [N_i^{PFS}(s) - \Lambda_{PFS}(T_i^1 \wedge T_i^2 \wedge C_i \wedge s)]$ for PFS is a valid candidate to test the PFS component of null hypothesis $H_0$ also in an adaptive setting, because it is a mean--zero martingale with respect to the filtration $({\cal{F}}_{s})_{s \geq 0}$ generated jointly by OS- and PFS-events when $H_0$ holds true. However, for symmetry reasons, we will use alternative stage-wise statistics $\Psi_{PFS}^{(\kappa)}(s)$ that is structurally similar to the statistics $\Psi_{OS}^{(\kappa)}(s)$ introduced in Chapter \ref{Sec03.04}.

Starting point is the identity  
\begin{align}\label{eq03.08}
S_{PFS}^{(E)}(s) = 1 - \int_0^t S_{PFS}^{(E)}(u) \alpha_{0*}^{(E)}(u) du
\end{align}
that immediately follows from the relations $S_{PFS}^{(E)}(s)=e^{- \Lambda_{PFS}^{(E)}(s)}$ and $\Lambda_{PFS}^{(E)}(s) = \int_0^s \alpha_{0*}^{(E)}(u) du$. Identity \eqref{eq03.08} motivates to introduce the stage-wise stochastic processes
\begin{align}\label{eq03.09}
\Psi_{PFS}^{(\kappa)}(s)\coloneqq  \sqrt{n^{(E \kappa)}} \left( \int_0^s S_{PFS,0}(u) d\hat{\Lambda}_{0*}^{(E \kappa)}(u) + S_{PFS,0}(s) - 1 \right).
\end{align}
When null hypothesis $H_0$ from \eqref{eq03.01} holds true, $\Psi_{PFS}^{(\kappa)}(s)$ is an $({\cal{F}}_{s})_{s \geq 0}$--martingale with mean zero by \eqref{eq03.08} and quadratic covariation process
\begin{align}\label{eq03.10}
[\Psi_{PFS}^{(\kappa)}](s)= n^{(E\kappa)} \sum_{i \in \aleph^{(E \kappa)}}    
										\int_0^s S_{PFS,0}(u)^2 \frac{I_{02,i}(u)}{Y_{02}^{(E \kappa)}(u)^2} dN_i^{PFS}(u). 							
\end{align}
The process $\Psi_{PFS}^{(\kappa)}$ suffices the assumptions of Lemma \ref{lemma:rebolledo_condition_iii} and $[\Psi_{PFS}^{(\kappa)}](s) \to \sigma_{PFS}(s)^2$ in probability as $n^{(E \kappa)} \to \infty$ for some non--decreasing deterministic function $\sigma_{PFS}(s)^2$ by a law of large numbers (c.f. Appendix \ref{sec:appendix_distributional_properties}). By a central limit theorem for local martingales \citep{bib5}, if null hypothesis $H_0$ is true and as sample size increases, the stage-wise processes $\Psi_{PFS}^{(\kappa)}(s)$ thus converge in distribution to a mean zero, $({\cal{F}}_{s})_{s \geq 0}$--adapted  Gaussian process with independent increments whose variance function $\sigma_{PFS}(s)^2$ can be consistently estimated in stage $\kappa$ by $\hat{\sigma}_{PFS}^{(\kappa)}(s)^2\coloneqq [\Psi_{PFS}^{(\kappa)}](s)$.

\subsection{Distribution of $\Psi_{PFS}^{(\kappa)}(s)$ under the contiguous alternatives}\label{Sec03.05}

For sample size calculation we also need the distribution of $\Psi_{PFS}^{(\kappa)}(s)$ under the contiguous alternatives $H_1$. As detailed in Appendix \ref{sec:appendix_distributional_properties}, if $H_1$ holds true and as sample size $n^{(E \kappa)}$ increases, the process $\Psi_{PFS}^{(\kappa)}(s)$ from \eqref{eq03.09} converges in distribution to a Gaussian process with independent increments, drift function
\begin{align}\label{eq03.50}
\begin{split}
\mu_{PFS}(s) &\coloneqq   \gamma_{PFS} \cdot \left( 1 - S_{PFS,0}(s) \right) 
\end{split}
\end{align}
and variance function $\sigma_{PFS}(s)^2$ that may be estimated consistently by $\hat{\sigma}_{PFS}^{(\kappa)}(s)^2\coloneqq [\Psi_{PFS}^{(\kappa)}](s)$ from \eqref{eq03.10}. In particular, since $E(\Psi_{PFS}^{(\kappa)}(s)) = \mu_{PFS}(s)$, the hazard ratio $\omega_{PFS}$ w.r.t. PFS may be estimated in stage $\kappa$ by 
\begin{align}\label{eq03.51}
  \hat{\omega}_{PFS}^{(\kappa)} \coloneqq 
1 - \frac{   \Psi_{PFS}^{(\kappa)}(s) }{ \sqrt{ n^{(E \kappa)} } \left( 1 - S_{PFS,0}(s) \right)  }.
\end{align}
Notice that the parameter $n^{(E \kappa)}$ cancels out in \eqref{eq03.51} due to the chosen normalization of $\Psi_{PFS}^{(\kappa)}$.

\subsection{Correlation between $\Psi_{OS}^{(\kappa)}(s)$ and $\Psi_{PFS}^{(\kappa)}(s)$}\label{Sec03.06}

Assume that either null hypothesis $H_0$ or the contiguous alternatives $H_1$ (see Section \ref{Sec03.01}) hold true. Then there is even the stronger statement that, as $n \to \infty$, the bivariate process $\mathbf{\Psi}^{(\kappa)}(s)\coloneqq (\Psi_{OS}^{(\kappa)}(s),\Psi_{PFS}^{(\kappa)}(s))$ converges in distribution to a bivariate Gaussian process $\mathbf{\Psi}^{\infty}(s)$ with independent increments. 
Thus, for each fixed $0 \leq s_1 \leq s_2$, $\mathbf{\Psi}^{\infty}(s_1) \sim N(\mathbf{\mu}(s_1),\mathbf{\Sigma}(s_1))$ and $\mathbf{\Psi}^{\infty}(s_2) - \mathbf{\Psi}^{\infty}(s_1) \sim N(\mathbf{\mu}(s_2)-\mathbf{\mu}(s_1),\mathbf{\Sigma}(s_2)-\mathbf{\Sigma}(s_1))$ are independent and normally distributed with
\begin{equation*}
{{\mathbf{\mu}(s)}} \coloneqq  
\left(\begin{array}{c}{\mu_{OS}(s)}  \\ {\mu_{PFS}(s)} \end{array}\right) \ 
{\mathrm{ and }} \ 
\mathbf{\Sigma}(s) \coloneqq 
\left(\begin{array}{cc}
\sigma^{2}_{OS}(s) 							   &    \rho(s) \cdot \sigma_{OS}(s) \cdot \sigma_{PFS}(s)\\
\rho(s) \cdot \sigma_{OS}(s) \cdot \sigma_{PFS}(s) &    \sigma^{2}_{PFS}(s)
\end{array}\right).
\end{equation*}
The drift components $\mu_{OS}(s)$ and $\mu_{PFS}(s)$ results according to Eqs. \eqref{eq:os_drift_intersection} and \eqref{eq03.50}, respectively. In particular, we have zero drift $\mathbf{\mu}(s)=0$ when null hypothesis $H_0$ holds true. The variances $\sigma^{2}_{OS}(s)$ and $\sigma^{2}_{PFS}(s)$ may be estimated consistently in stage $\kappa$ by $\hat{\sigma}_{OS}^{(\kappa)}(s)^2\coloneqq [\Psi_{OS}^{(\kappa)}](s)$ and $\hat{\sigma}_{PFS}^{(\kappa)}(s)^2\coloneqq [\Psi_{PFS}^{(\kappa)}](s)$ from Eqs. \eqref{eq03.04} and \eqref{eq03.10}. The correlation function $\rho(s)$ may be estimated consistently in stage $\kappa$ by
\begin{align}\label{eq03.60}
\begin{split}
\hat{\rho}^{(\kappa)}(s) &\coloneqq  \frac{ n^{(E \kappa)} }{ \hat{\sigma}_{OS}^{(\kappa)}(s) \cdot \hat{\sigma}_{PFS}^{(\kappa)}(s) } \cdot  \sum_{i \in \aleph^{(E \kappa)} } \int_0^s S_{PFS,0}(u)^2 \frac{I_{02,i}(u)}{Y_{02}^{(E \kappa)}(u)^2} dN_i^{OS}(u)
\end{split}
\end{align}
Notice that the parameter $n^{(E \kappa)}$ cancels out in \eqref{eq03.60} due to the chosen normalization of $\hat{\sigma}_{OS}^{(\kappa)}(s)$ and $\hat{\sigma}_{PFS}^{(E \kappa)}(s)^2$.

\subsection{A two--stage adaptive test of $H_0$}\label{Sec03.07}

In many oncological diseases, the early occurrence of progression is associated with an unfavorable course of disease. The occurrence of early progression can then act as a surrogate parameter for later deaths in the long-term course. For a trial investigating long-term progression-free and overall survival, it therefore makes sense to perform an interim analysis to decide early on the continuation and, if necessary, sample size adjustment of the trial based on the observed frequency of early progressions. To this end, we propose a two-stage adaptive test of $H_0$ from Eq. \eqref{eq03.01}, which allows data-dependent sample size adjustment of the trial based on the observed short-term PFS data.

In a non--adaptive setting, a test of null hypothesis $H_0$ is commonly based on the union--intersection--test $p$--value $\tilde{P}\coloneqq 2 \min(\tilde{P}_{OS},\tilde{P}_{PFS})$ where $\tilde{P}_{OS}$ and $\tilde{P}_{PFS}$ are the $p$--values of the classical one--sample log--rank statistic $L_{OS}$ and $L_{PFS}$ for OS and PFS, respectively. This approach is not adequate in our adaptive setting, because $L_{OS}$ is not adapted to the joint filtration $({\cal{F}}_s)_{s \geq 0}$ of PFS and OS events, which implies that PFS interim data cannot be used to inform design changes. Instead, we will use stage-wise union--intersection--test $p$--values derived from the statistics $\Psi_{OS}^{(\kappa)}$ and $\Psi_{PFS}^{(\kappa)}$. 

Let the trial start at calendar time zero. At calender time $t_1>0$, an interim analysis will be performed to determine short-term PFS and OS within the first $s_1$ months since start of therapy, e.g. $s_1 =$ 6-months after start of treatment. Of course, we require $s_1 < t_1$. Patients recruited prior to calender time $a_1 \coloneqq  t_1 - s_1$ define the set $\aleph^{(E1)}$ of stage $1$ patients, and will be part of the interim analysis. Assume there are $n^{(E1)}$ stage $1$ patients. To quantify the short-term PFS and OS, the interim analysis calculates the statistics 
\begin{align*}
\tilde{Z}_{x,11} & \coloneqq \frac{ \Psi_{x}^{(1)}(s_1) }{ \sqrt{ [\Psi_{x}^{(1)}](s_1) } } 
\end{align*}
for $x=PFS$ or $x = OS$. Herein, $\Psi_{OS}^{(1)}$ and $\Psi_{PFS}^{(1)}$ denote the processes from Eqs. \eqref{eq03.03} and \eqref{eq03.09} calculated within the set of stage $1$ patients, respectively. The statistics $Z_{PFS,11}$ and $Z_{OS,11}$ are normally distributed with a drift proportional to the true hazard ratio with respect to PFS and OS, respectively (c.f. Section \ref{Sec03.06}). $Z_{PFS,11}$ and $Z_{OS,11}$ may thus be used as interim measures for the short-term treatment effect w.r.t. PFS and OS. Therefore, the observed value of $Z_{x,11}$ shall be used to decide whether the trial stops at the interim analysis or goes on to a second stage.

In case that the trial goes on, assume that $n^{(E2)}$ new patients are recruited during further $A_2$ months. These $n^{(E2)}$ patients recruited between calendar time $a_1$ and $a_1 + A_2$ define the set $\aleph^{(E2)}$ of stage $2$ patients. The final analysis will be at calendar time $a_1 + A_2 +f$. The adaptive component of the design is that $A_2$ (and in turn $n^{(E2)}$) may be chosen as a function of $Z_{PFS,11}$ and $Z_{OS,11}$, and is thus random. In principle, we may use for sample size recalculation even the complete sigma algebra ${\cal{F}}_{s_1}^{(1)}\coloneqq  \sigma( \cup_{ i \in \aleph^{(E1)}} {\cal{F}}_{i,s_1}  )$ of short--term PFS and OS data observed in stage $1$ patients within their first $s_1$ months of follow--up. Now, let $\Psi_{OS}^{(2)}$ and $\Psi_{PFS}^{(2)}$ be the processes from Eqs. \eqref{eq03.03} and \eqref{eq03.09} calculated within the set $\aleph^{(E2)}$ of stage 2 patients, respectively. Also set $S_2 \coloneqq  a_1 + A_2 +f$. At the time of final analysis, the increment statistics in stage $1$ patients
\begin{align*}
\begin{split}
\tilde{Z}_{x,12} & \coloneqq  \frac{  \Psi_{x}^{(1)}(S_2) - \Psi_{x}^{(1)}(s_1)  }{  \sqrt{   [\Psi_{x}^{(1)}](S_2) - [\Psi_{x}^{(1)}](s_1)   }    } 
\end{split}
\end{align*}
as well as the overall statistic in stage $2$ patients
\begin{align*}
\begin{split}
\tilde{Z}_{x,2} & \coloneqq  \frac{  \Psi_{x}^{(2)}(S_2)  }{  \sqrt{   [\Psi_{x}^{(2)}](S_2)   }    } \\
\end{split}
\end{align*}
will be calculated for $x= PFS$ or $x=OS$. For arbitrary weights $w_1$, $w_2 >0$ with $w_1^2+w_2^2=1$, these may be combined to cause--specific stage--wise statistics 
\begin{align*}
\begin{split}
Z_{x,1} & \coloneqq  \tilde{Z}_{x,11} \quad  Z_{x,2}\coloneqq  w_1 \tilde{Z}_{x,12} + w_2 \tilde{Z}_{x,2}, %\\
%P_{x,1} &\coloneqq  \Phi(Z_{x,1}) \quad P_{x,2} \coloneqq  \Phi(Z_{x,2})
\end{split}
\end{align*}
for $x= PFS$ or $x=OS$. The adaptive test of null hypothesis $H_0$ will finally be based on the stage--wise $p$--values
\begin{align*}
\begin{split}
P_{j} &\coloneqq  \min(P_{PFS,j}/\eta_{PFS,j},P_{OS,j}/\eta_{OS,j}) 
\end{split}
\end{align*}
with cause--specific stage--wise $p$--values $P_{x,j} \coloneqq  \Phi(Z_{x,j})$ for $x=OS, PFS$ and $j=1,2$, and prefixed weights $\eta_{x,j}>0$ such that $\eta_{PFS,j} + \eta_{OS,j} \leq 1$ for $j=1,2$. 

The two--stage adaptive test of $H_0$ with significance level $\alpha$ is defined by the rejection region $\{ P_1 \leq \alpha_1 \} \cup \{ \alpha_1 < P_1 \leq \alpha_0, P_2 \leq \alpha(P_1) \}$ for some stage--one rejection bound $0 < \alpha_1 < \alpha$, some binding stopping for futility bound $\alpha_0 \in (\alpha_1,1]$, and some non-increasing conditional error function $\alpha(\cdot):(\alpha_1,\alpha_0] \to [0,1]$ with $\int_{\alpha_1}^{\alpha_0} \alpha(x) dx = \alpha - \alpha_1$. 

Notice that the weights $w_j$, $\eta_{x,j}$ and design parameters  $\alpha_1$, $\alpha_0$ and $\alpha(\cdot)$ have to be fixed in advance and must remain unchanged during the trial.  At the interim analysis, however, the parameter $S_2 \equiv a_1 + A_2 + f$ (i.e. the length of accrual $a\coloneqq a_1 + A_2$ and of follow-up $f$ and thus the stage-two sample size $n^{(E2)}$) may be modified in the light of all short--term PFS and OS data ${\cal{F}}_{s_1}^{(1)}$ observed in stage $1$ patients within their first $s_1$ months of follow--up. 
%A proof of type one error rate control in given in Appendix \ref{}.
The reason is that for sufficiently large sample size $Z_{OS,2}$ and $Z_{PFS,2}$ are statistically independent from $Z_{OS,1}$, $Z_{PFS,1}$ (and even ${\cal{F}}_{s_1}^{(1)}$) due to the asymptotic independent increments structure of $(\Psi_{OS}^{(1)},\Psi_{PFS}^{(1)})$. This ensures strict type one error rate control of the adaptive test despite of data-dependent sample size recalculation based on $Z_{OS,1}$, $Z_{PFS,1}$ and even ${\cal{F}}_{s_1}^{(1)}$.

\section{Adaptive Testing of Hypotheses on OS in Single-Arm Trials}\label{sec:single_arm_os}

\subsection{The Null Hypothesis}\label{Sec04.01}
We consider a single-arem clinical trial in which patients are recruited in successive stages $\kappa = 1, \ldots, K$ in order to be treatment with some experimental treatment $x=E$, say. In this setting, we consider testing the two-sided null-hypothesis that the overall survival $S_{OS}^{(E)}(s)$ of the patients coincides with a prefixed reference survival function $S_{OS,0}(s)$, i.e.
\begin{align*}
H_0: S_{OS}^{(E)}(s) = S_{OS,0}(s). 
\end{align*}
We aim for an adaptive test of $H_0$ that allows for data-dependent sample size adjustments based on PFS and OS interim data simultaneously. On first sight, the one--sample log--rank statistic $LR_{OS}(s)\coloneqq  \sum_i [N_i^{OS}(s) - \Lambda_{OS,0}(T_i^2 \wedge C_i \wedge s)]$, $\Lambda_{OS,0} \coloneqq  \log(S_{OS,0})$, appears as natural candidate to test null hypothesis $H_0$, as it is a mean--zero martingale when $H_0$ holds true. The martingale property of $LR_{OS}(s)$, however, only holds with respect to the filtration generated exclusively by OS-events. This is not sufficient for our purposes, as we would like to make sample size adjustments based on interim information regarding OS and PFS both. In Section \ref{Sec04.02}, we will thus construct an alternative statistic $\Psi(s)$ that ensures the martingale property with respect to the filtration $({\cal{F}}_{s})_{s \geq 0}$ generated jointly by OS- and PFS-events.

For purposes of power calculation we additionally introduce the (stage-wise) \emph{contiguous alternatives}
\begin{align}\label{eq:contiguous_alternatives_os}
H_{1,n}: S_{OS}^{(E)}(s) = S_{OS,0}(s)^{\omega_{OS}} \text{ with hazard ratio }\omega_{OS}= \exp(- {\gamma_{OS}}/\sqrt{n^{(E \kappa)}}).
\end{align} 
for $\gamma_{OS} > 0$. Whereas ${\omega_{OS}}$ is the clinically relevant hazard ratio, the parameter $\gamma_{OS} > 0$ is the quantity that will appear in the normal approximation of our test statistic.

\subsection{The Test Statistic}\label{Sec04.02}

Starting point is the following identity for the overall survival function under experimental treatment $E$ that holds true under the Markov assumption (see Appendix \ref{sec:appendix_distributional_properties})
\begin{align*}
S_{OS}^{(E)}(s) = 1 - \int_0^t S_{PFS}^{(E)}(u) \alpha_{02}^{(E)}(u) du - \int_0^t [S_{OS}^{(E)}(u) - S_{PFS}^{(E)}(u)] \alpha_{12}^{(E)}(u) du.
\end{align*}
Identity \eqref{eq:os_identity} motivates to introduce the stage-wise statistics
\begin{align}\label{eq04.04}
\Psi^{(\kappa)}(s)\coloneqq  \sqrt{n^{(E \kappa)}} \left(   
										\int_0^s \hat{S}_{PFS}^{(E \kappa)}(u) d\hat{\Lambda}_{02}^{(E \kappa)}(u) 
										+ \int_0^s [S_{OS,0}(u) - \hat{S}_{PFS}^{(E \kappa)}(u)] d\hat{\Lambda}_{12}^{(E \kappa)}(u) 
										+ S_{OS,0}(u) - 1
										\right)
\end{align}
where $\hat{S}_{PFS}^{(E \kappa)}(s)\coloneqq  \exp(- \hat{\Lambda}_{0*}^{(E \kappa)}(s-))\coloneqq \lim_{\epsilon \to 0}\exp(- \hat{\Lambda}_{0*}^{(E \kappa)}(s-\epsilon))$ is a consistent estimator of the true progression-free survival function $S_{PFS}^{(E)}(s)$ under experimental treatment $E$. 
As detailed in Appendix \ref{sec:appendix_distributional_properties}, $\Psi^{(\kappa)}(s)$ is approximately normally distributed with mean $\mu(s)\coloneqq  \gamma_{OS} \cdot S_{OS,0}(s) \log(S_{OS,0}(s))$ under the contiguous alternatives \eqref{eq:contiguous_alternatives_os}. In particular, if null hypothesis $H_0$ from \eqref{eq03.01} holds true, $\Psi^{(\kappa)}(s)$ has mean zero. In order to formulate the statistical test, we will need some further notation.
First, we define %functions $F_{j2}$ by the differential equation $dF_{j2}\coloneqq \exp(-\Lambda_{0*})d\Lambda_{j2}$ for $j=0,1$, and estimates $\hat{F}_{j2}$ of $F_{j2}$ by
\begin{align*}
\hat{F}_{j2}^{(E \kappa)}(s)\coloneqq \int_0^s \exp(-\hat{\Lambda}_{0*}^{(E \kappa)}(u-))d\hat{\Lambda}_{j2}^{(E \kappa)}(u), \qquad \text{for } j=0,1, \text{ and } \kappa=1, \ldots, K.
\end{align*}
We also introduce the stage-wise matrices
\begin{equation*}
\hat{V}^{(E \kappa)}(s) \coloneqq 
[\hat{\Theta}^{(E \kappa)}](s) \coloneqq 
\begin{pmatrix} [\hat{\Theta}^{(E \kappa)}]_{11}(s) & 0  & [\hat{\Theta}^{(E \kappa)}]_{13}(s) & [\hat{\Theta}^{(E \kappa)}]_{14}(s) \\ 0 & [\hat{\Theta}^{(E \kappa)}]_{22}(s) & 0 & 0 \\ [\hat{\Theta}^{(E \kappa)}]_{31}(s) & 0  & [\hat{\Theta}^{(E \kappa)}]_{33}(s) & [\hat{\Theta}^{(E \kappa)}]_{34}(s)  \\    [\hat{\Theta}^{(E \kappa)}]_{41}(s) & 0  & [\hat{\Theta}^{(E \kappa)}]_{43}(s) & [\hat{\Theta}^{(E \kappa)}]_{44}(s)   
\end{pmatrix}
\end{equation*}
with non-zero components
\begin{align}\label{eq:covariance_matrix_estimate_perstage}
\begin{split}
[\hat{\Theta}^{(E \kappa)}]_{11}(s) 																						&= n^{(E \kappa)} \cdot \sum_{i \in \aleph^{(E \kappa)}} \int_0^s \hat{S}_{PFS}^{(E \kappa)}(u)^2 																																	\frac{I_{02,i}(u)}{Y_{02}^{(E \kappa)}(u)^2} dN_i^{OS}(u)  \\
[\hat{\Theta}^{(E \kappa)}]_{13}(s) \equiv [\hat{\Theta}^{(E \kappa)}]_{31}(s)  &= n^{(E \kappa)} \cdot \sum_{i \in \aleph^{(E \kappa)}} \int_0^s \hat{S}_{PFS}^{(E \kappa)}(u) 																																		\frac{I_{02,i}(u)}{Y_{02}^{(E \kappa)}(u)^2} dN_i^{OS}(u)  \\
[\hat{\Theta}^{(E \kappa)}]_{14}(s) \equiv [\hat{\Theta}^{(E \kappa)}]_{41}(s)  &= n^{(E \kappa)} \cdot \sum_{i \in \aleph^{(E \kappa)}} \int_0^s \hat{S}_{PFS}^{(E \kappa)}(u) \cdot [\hat{F}_{02}^{(E \kappa)}(u)-\hat{F}_{12}^{(E \kappa)}(u)]	\frac{I_{02,i}(u)}{Y_{02}^{(E \kappa)}(u)^2} dN_i^{OS}(u)  \\
[\hat{\Theta}^{(E \kappa)}]_{22}(s) 																						&= n^{(E \kappa)} \cdot \sum_{i \in \aleph^{(E \kappa)}} \int_0^s [{S}_{OS,0}(u)-\hat{S}_{PFS}^{(E \kappa)}(u)]^2 																									\frac{I_{12,i}(u)}{Y_{12}^{(E \kappa)}(u)^2} dN_i^{OS}(u)  \\
[\hat{\Theta}^{(E \kappa)}]_{33}(s) 																						&= n^{(E \kappa)} \cdot \sum_{i \in \aleph^{(E \kappa)}} \int_0^s                          																																				\frac{I_{02,i}(u)}{Y_{02}^{(E \kappa)}(u)^2} dN_i^{PFS}(u) \\
[\hat{\Theta}^{(E \kappa)}]_{34}(s) \equiv [\hat{\Theta}^{(E \kappa)}]_{43}(s)  &= n^{(E \kappa)} \cdot \sum_{i \in \aleph^{(E \kappa)}} \int_0^s [\hat{F}_{02}^{(E \kappa)}(u)-\hat{F}_{12}^{(E \kappa)}(u)] 	    																\frac{I_{02,i}(u)}{Y_{02}^{(E \kappa)}(u)^2} dN_i^{PFS}(u) \\
[\hat{\Theta}^{(E \kappa)}]_{44}(s) 																						&= n^{(E \kappa)} \cdot \sum_{i \in \aleph^{(E \kappa)}} \int_0^s [\hat{F}_{02}^{(E \kappa)}(u)-\hat{F}_{12}^{(E \kappa)}(u)]^2  																	\frac{I_{02,i}(u)}{Y_{02}^{(E \kappa)}(u)^2} dN_i^{PFS}(u) 
\end{split}
\end{align}
with $\hat{S}_{PFS}^{(E \kappa)}(u)\coloneqq \exp(-\hat{\Lambda}_{0*}^{(E \kappa)}(u-))\coloneqq \lim_{\epsilon \to 0}\exp(- \hat{\Lambda}_{0*}^{(E \kappa)}(s-\epsilon))$. Finally, we also need
\begin{align}\label{eq04.08}
\begin{split}
\hat{\sigma}^{(\kappa)}(s)^2					&\coloneqq \hat{L}^{(E \kappa)}(s)   \cdot \hat{V}^{(E \kappa)}(s) \cdot \hat{L}^{(E \kappa)}(s)^{\textsc{T}} \\
\hat{\varsigma}^{(\kappa)}(s_2,s_1)^2	&\coloneqq \hat{L}^{(E \kappa)}(s_2) \cdot [\hat{V}^{(E \kappa)}(s_2) - \hat{V}^{(E \kappa)}(s_1)] \cdot \hat{L}^{(E \kappa)}(s_2)^{\textsc{T}} 
\end{split}
\end{align}
with $\hat{L}^{(E \kappa)}(s)\coloneqq (1,1,\hat{F}_{12}^{(E \kappa)}(s)-\hat{F}_{02}^{(E \kappa)}(s),1)$.

%\subsection{Distribution of the Test Statistic under the Contiguous Alternatives}\label{Sec04.03}

\subsection{A Two-Stage Adaptive Test of $H_0$}\label{Sec04.04}

In many oncological diseases, the early occurrence of progression is associated with an unfavorable course. The occurrence of early progression can then act as a surrogate for later deaths in the long-term course. For a trial investigating long-term progression-free and overall survival, it therefore makes sense to perform an interim analysis to decide early on the continuation and, if necessary, sample size adjustment of the trial based on the observed frequency of early progressions. To this end, we propose a two-stage adaptive test of $H_0: S_{OS}^{(E)}(s) = S_{OS,0}(s)$, which allows data-dependent sample size adjustment of the trial based on the observed short-term PFS data.

In a non--adaptive setting, a test of null hypothesis $H_0$ is commonly based on the classical one--sample log--rank statistic $L_{OS}$ for OS. This is not appropriate in the adaptive setting, because $L_{OS}$ is not adapted to the joint filtration $({\cal{F}}_s)_{s \geq 0}$ of PFS- and OS-events, which in turn means that PFS-data must not be used for data-dependent design changes, contrary to what is desired (c.f. \cite{BauerPosch04}). Instead, we will use an approach based on the stage-wise statistics $\Psi^{(\kappa)}$ introduced in \eqref{eq04.04} which are adapted to $({\cal{F}}_s)_{s \geq 0}$, thus allowing simultaneous use of PFS- and OS-data for data-dependent design changes.

Let the trial start at calendar time zero. At calender time $t_1>0$, an interim analysis will be performed to determine short-term PFS and OS within the first $s_1$ months after entry (e.g. $s_1 =$ 6-months after entry). Of course, we have to require $s_1 < t_1$. Patients recruited prior to calender time $a_1 \coloneqq  t_1 - s_1$ define the set $\aleph^{(E1)}$ of stage $1$ patients, and will be part of the interim analysis. Let there be $n^{(E1)}$ stage-$1$ patients. 
At the interim analysis, we derive the interim statistics
\begin{align*}
\begin{split}
\hat{Z}_{11} & \coloneqq  \frac{  \Psi^{(1)}(s_1)  }{  \hat{\sigma}^{(1)}(s_1)   } \qquad \text{and} \qquad  \hat{\Lambda}_{0*}^{(E1)}(s_1),
\end{split}
\end{align*}
where $\hat{\sigma}^{(1)}(s_1)^2$ is an estimate of the variance of $\Psi^{(1)}(s_1)$ calculated within the set of stage $1$ patients acc. to Eq. \eqref{eq04.08}. The statistics $\hat{Z}_{11}$ is normally distributed with mean $\gamma_{OS} \cdot S_{OS,0}(s) \log(S_{OS,0}(s))$ which is proportional to the true hazard ratio w.r.t. OS (c.f. Appendix \ref{sec:appendix_distributional_properties}, Eq.\eqref{A6.eq102}). $\hat{Z}_{11}$ is thus an interim measure for the treatment effect w.r.t. OS. Analogously, $\hat{\Lambda}_{0*}^{(E 1)}(s_1)$ from Eq. \eqref{eq:na_estimators} estimates the cumulative hazard ${\Lambda}_{0*}^{(E)}(s_1)$ w.r.t. PFS under experimental treatment $E$ at time $s_1$, and thus quantifies the short-term PFS. The observed values of $\hat{Z}_{11}$ and $\hat{\Lambda}_{0*}^{(E 1)}(s_1)$ will be used to decide whether the trial stops at the interim analysis or goes on to a second stage.

In case that the trial goes on, let the recruitment phase be extended for an additional $A_2$ months, followed by a follow-up phase of $f$ months. 
Accordingly, the final analysis of the patients will be at calendar time $a_1 + A_2 + f$ which is equally the maximum in-study time $S_2 \coloneqq  a_1 + A_2 + f$ of the first patient.
These new patients recruited between calendar time $a_1$ and $a_1 + A_2$ form the set $\aleph^{(E2)}$ of stage $2$ patients.  Let there be $n^{(E 2)}$ stage-$2$ patients.  

The adaptive component of the design is that $S_2$ (and in turn $A_2$ and $n^{(E2)}$) may be chosen in dependence of $\hat{Z}_{11}$ and $\hat{\Lambda}_{0*}^{(E 1)}(s_1)$, and are thus random. From now on, we will write $S_2\coloneqq \xi(\hat{Z}_{11}, \hat{\Lambda}_{0*}^{(E 1)}(s_1) - {\Lambda}_{0*,0}(s_1) )$ instead of $s_2$ for some continuous function $\xi:R \times [0,\infty) \to (s_1,\infty)$, $(x_1,x_2) \mapsto \xi(x_1,x_2)$ to indicate this randomness. The function $\xi$ is called the \emph{adaptation rule} that we assume is established in advance of the interim analysis. An example for an adaption rule will be given at the end of this section.

At the time of final analysis, the increment statistics in stage $1$ patients
\begin{align*}
\begin{split}
\hat{Z}_{12} & \coloneqq  \frac{  \Psi^{(1)}(S_2) - \Psi^{(1)}(s_1)  }{  \hat{\varsigma}^{(1)}(S_2,s_1)   }     
\end{split}
\end{align*}
as well as the overall statistic in stage $2$ patients
\begin{align*}
\begin{split}
\hat{Z}_{2} & \coloneqq  \frac{  \Psi^{(2)}(S_2)  }{\hat{\sigma}^{(2)}(S_2)} \\
\end{split}
\end{align*}
is calculated. For arbitrary but prefixed weights $w_1$, $w_2$ with $w_1^2+w_2^2=1$, these are combined to stage--wise statistics 
\begin{align*}
\begin{split}
Z_{1} & \coloneqq  \hat{Z}_{11}, \quad \text{and} \quad  Z_{2}\coloneqq  w_1 \hat{Z}_{12} + w_2 \hat{Z}_{2}. %\\
%P_{x,1} &\coloneqq  \Phi(Z_{x,1}) \quad P_{x,2} \coloneqq  \Phi(Z_{x,2})
\end{split}
\end{align*}
The adaptive test of null hypothesis $H_0:S_{OS}^{(E)}(s)=S_{OS,0}(s)$ is defined by the rejection region ${\cal{R}}\coloneqq  \{  Z_{1} \geq c_1\} \cup \{ c_0 \leq Z_1 < c_1, Z_2 \geq c_2 \}$ with stage-wise rejection bounds $c_1$ and $c_2$, and stopping for futility bound $c_0$. The critical bounds $c_1,c_2$ are chosen such that the stage 1 type one error rate $P_{H_0}(\{ Z_{11} \geq c_1 \})$ equals $\alpha_1$ and such that the overall type one error rate $P_{H_0}({\cal{R}})$ equals the desired significance level $\alpha>\alpha_1$. 
As shown in Appendix \ref{sec:appendix_distributional_properties}, this means choosing
\begin{align*}
\begin{split}
c_1\coloneqq  \Phi^{-1}(1- \alpha_1),
\end{split}
\end{align*}
and choosing $c_2$ such that
\begin{align}\label{eq04.15}
\begin{split}
\alpha - \alpha_1  
= 	\int_{c_0}^{c_1} dz \int_{-\infty}^{\infty} dm f(z,m) \Phi\left( -{c}_2 + w_1 \frac{[\Delta \hat{F}(\hat{s}_2(z,m)) - \Delta \hat{F}(s_1) ] \cdot m ]}{ \hat{{\varsigma}}(\hat{s}_2(z,m),s_1)} \right) 
\end{split}
\end{align}
where $m, z \in R$ are real numbers, 
\begin{align}\label{eq04.16}
\begin{split}
%\hat{s}_2(z,m)\coloneqq \xi\left(z, \frac{m}{\sqrt{n^{(E 1)}}} - \hat{\Lambda}_{0*}^{(E 1)}(s_1) \right) 
\hat{s}_2(z,m) \coloneqq  \xi(x_1,x_2)|_{x_1=z, x_2 = \frac{m}{\sqrt{n^{(E 1)}}} - \hat{\Lambda}_{0*}^{(E 1)}(s_1) } 
\equiv \xi\left( z, \frac{m}{\sqrt{n^{(E 1)}}} - \hat{\Lambda}_{0*}^{(E 1)}(s_1) \right)
\end{split}
\end{align}
for the prefixed adaptation rule $\xi(x_1,x_2)$, 
\begin{align*}
\begin{split}
f(z,m) \coloneqq  \frac{ \exp\left( - \frac{1}{2} (z,m) \cdot \Sigma^{-1} \cdot (z,m)^{\textsc{T}} \right) }{ 2 \pi  \sqrt{\det\Sigma } }   
\end{split}
\end{align*}
is the density of a bivariate normal distribution with mean zero and covariance matrix
\begin{align*}
\begin{split}
\Sigma \coloneqq  \begin{pmatrix} 1 & \frac{\hat{L}^{(E 1)}(s_1) \cdot \hat{V}_{\cdot 3}^{(E 1)}(s_1)}{\hat{\sigma}^{(E 1)}(s_1)^{-1}} \\ \frac{\hat{L}^{(E 1)}(s_1) \cdot \hat{V}_{\cdot 3}^{(E 1)}(s_1)}{\hat{\sigma}^{(E 1)}(s_1)^{-1}} & \hat{V}_{33}^{(E 1)}(s_1)
\end{pmatrix}
\end{split}
\end{align*}
with $\hat{V}_{\cdot 3}^{(E 1)}(s_1)$ denoting the third column of the matrix $\hat{V}^{(E 1)}(s_1)$, and with $\hat{V}_{3 3}^{(E 1)}(s_1)$ being the entry in row three and column three of $\hat{V}^{(E 1)}(s_1)$, and 
\begin{align*}
\begin{split}
\Delta \hat{F}(s)\coloneqq  \hat{F}_{12}^{(E 1)}(s) - \hat{F}_{02}^{(E 1)}(s) \qquad \text{for any } s \geq 0.
\end{split}
\end{align*}
Notice that the critical bound $c_2$ only becomes known at the interim analysis, after the adaptation rule $\xi(x_1,x_2)$ has been established and the quantities $\hat{\varsigma}(\cdot,\cdot)$, $\Sigma$, $\hat{\Lambda}_{0*}^{(E 1)}(s_1)$ and $\Delta \hat{F}(\cdot)$ become evaluable.

To give an example of a design modification, let the adaptive decision rule be such that enrollment after the interim analysis is continued for another 36 months only if (i) the observed $s_1$-months PFS in the experimental arm is at least 80\% and (ii) the treatment effect w.r.t OS is in favour of the experimental treatment (i.e. $Z_{11} \leq 0$), and otherwise if (i) or (ii) is not fulfilled accrual to the study is stopped at the interim analysis. In formulas, the modified time of final analysis would then be
\begin{align*}
S_2 \coloneqq  a_1 + 36 \cdot I\left( \hat{\Lambda}_{0*}^{(E 1)}(s_1) \leq - \log(0.8) \right) \cdot I(Z_{11} \leq 0) + f.
\end{align*}
corresponding to the adaptation rule 
\begin{align*}
\begin{split}
\xi(x_1,x_2) = a_1 + 36 \cdot I(x_2 \leq - \log(0.8) ) \cdot I(x_1 \leq 0) + f, 
\end{split}
\end{align*}
because this choice for $\xi(\cdot,\cdot)$ yields $S_2 = \xi(x_1,x_2)|_{x_1=Z_{11}, x_2 = \hat{\Lambda}_{0*}^{(E 1)}(s_1)  } \equiv \xi(Z_{11}, \hat{\Lambda}_{0*}^{(E 1)}(s_1) ) $.
Analogously, the function $\hat{s}_2(z,m)$ defined in Eq. \eqref{eq04.16} is 
\begin{align*}
\begin{split}
\hat{s}_2(z,m) &\coloneqq  \xi(x_1,x_2)|_{x_1=z, x_2 = \frac{m}{\sqrt{n^{(E 1)}}} - \hat{\Lambda}_{0*}^{(E 1)}(s_1)  }   \\
& \equiv   a_1 + 36 \cdot I\left( \frac{m}{\sqrt{n^{(E 1)}}} - \hat{\Lambda}_{0*}^{(E 1)}(s_1)  \leq - \log(0.8) \right) \cdot I(z \leq 0) + f.
\end{split}
\end{align*}
With these expressions, the integral in Eq. \eqref{eq04.15} is evaluated to determine the critical bound $c_2$. The power of the adaptive test is obtained by calculating the probability of the event ${\cal{R}}$ under the contiguous alternatives from Eq. \eqref{eq:contiguous_alternatives_os}. For this purpose we need the distribution of $\Psi^{(\kappa)}$ under the contiguous alternatives which is provided in Appendix \ref{sec:appendix_distributional_properties}.

\section{Adaptive Testing of Hypotheses on OS in Randomized Clinical Trials}\label{sec:rct_os}

\subsection{The Null Hypothesis}\label{Sec05.01}
We consider a clinical trial in which patients are recruited in successive stages $\kappa = 1, \ldots, K$ and are randomly assigned to experimental treatment $E$ or control treatment $C$ in proportion $r^{(\kappa)}\coloneqq n^{(E \kappa)}/n^{(C\kappa)}$ at each stage. Here, $n^{(x\kappa)}$ is the number of patients in treatment arm $x=E,C$ at stage $\kappa$. In this setting, we consider testing the two-sided null-hypothesis of no difference in overall survival between the treatment groups
\begin{align}\label{eq05.01}
H_0: S_{OS}^{(E)}(s) = S_{OS}^{(C)}(s). 
\end{align}
We aim for an adaptive test of $H_0$ that allows for data-dependent sample size adjustments based on all PFS and OS interim data simultaneously. On first sight, the two--sample log--rank statistic $LR_{OS}(s)$ appears as natural candidate to test null hypothesis $H_0$, as it is a mean--zero martingale when $H_0$ holds true. The martingale property of $LR_{OS}(s)$, however, only holds with respect to the filtration generated exclusively by OS-events. This is not sufficient for our purposes, as we would like to make sample size adjustments based on interim information regarding OS and PFS both. In Section \ref{Sec05.02}, we will thus propose alternative stage-wise statistics $\Delta\Psi^{(\kappa)}(s)$ that ensure the martingale property with respect to the filtration $({\cal{F}}_{s})_{s \geq 0}$ generated jointly by OS- and PFS-events.

For purposes of power calculation we additionally introduce the (stage-wise) \emph{contiguous alternatives}
\begin{align}\label{eq05.02}
H_1: S_{OS}^{(E)}(s) = [S_{OS}^{(C)}(s)]^{\omega_{OS}}
\end{align} 
with hazard ratio ${\omega_{OS}}= \exp(- {\gamma_{OS}}/\sqrt{n^{(\kappa)}})$ for some $\gamma_{OS} > 0$ and with $n^{(\kappa)}\coloneqq n^{(E \kappa)}+n^{(C\kappa)}$. Whereas ${\omega_{OS}}$ is the clinically relevant hazard ratio, the parameter $\gamma_{OS} > 0$ is the quantity that will appear in the normal approximation of our test statistic.

\subsection{The Test Statistic}\label{Sec05.02}

Starting point is the following identity for the overall survival function in treatment group $x=E,C$ that holds true under the Markov assumption (see Appendix \ref{subsec:appendix_os_identity})
\begin{align}\label{eq05.03}
S_{OS}^{(x)}(t) = e^{-\Lambda_{12}^{(x)}(t)} \cdot \left\{ 1 + \int_0^t  e^{- [\Lambda_{0*}^{(x)}(u)-\Lambda_{12}^{(x)}(u)]} d[\Lambda_{12}^{(x)}(u)-\Lambda_{02}^{(x)}(u)]     \right\}. 
\end{align}
Identity \eqref{eq05.03} motivates to introduce the stage- and group-wise stochastic processes
\begin{align}\label{eq05.04}
\Psi^{(x \kappa)}(s)\coloneqq  e^{-\hat{\Lambda}_{12}^{(x \kappa)}(s)} \cdot \left( c  +  \int_0^s  e^{-[\hat{\Lambda}_{0*}^{(x \kappa)}(u-)-\hat{\Lambda}_{12}^{(x \kappa)}(u-)]} d[\hat{\Lambda}_{12}^{(x \kappa)}(u)-\hat{\Lambda}_{02}^{(x \kappa)}(u)]    \right),
\end{align}
calculated from patients from treatment group $x=E,C$ and stage $\kappa$. These are used to define the stage-wise statistics 
\begin{align}\label{eq05.05}
\Delta\Psi^{(\kappa)}(s) \coloneqq  \Psi^{(C \kappa)}(s) - \Psi^{(E \kappa)}(s), \qquad \kappa = 1, \ldots, K,
\end{align}
for adaptively testing null hypothesis $H_0:S_{OS}^{(E)}(s) = S_{OS}^{(C)}(s)$ as detailed below in Section \ref{Sec05.04}.

\subsection{Specific Notation}\label{Sec05.03}

Notice that $\Psi^{(x \kappa)}(s)$ has the general algebraic structure 
\begin{align}\label{eq05.06}
\Psi^{(x \kappa)}(s)\coloneqq  e^{-\hat{\Lambda}_3^{(x \kappa)}(s)} \cdot \left( c  +  \int_0^s  e^{-\hat{\Lambda}_1^{(x \kappa)}(u-)} d\hat{\Lambda}_2^{(x \kappa)}(u)    \right),
\end{align}
with parameter values $c\coloneqq 1$, and
%$\hat{\Lambda}_1^{(x \kappa)} \coloneqq  \hat{\Lambda}_{0*}^{(x \kappa)} + \hat{\Lambda}_{12}^{(x\kappa)}$, $\hat{\Lambda}_2^{(x \kappa)} \coloneqq  \hat{\Lambda}_{12}^{(x\kappa)}-\hat{\Lambda}_{02}^{(x\kappa)}$, and $\hat{\Lambda}_3^{(x \kappa)} \coloneqq  \hat{\Lambda}_{12}^{(x\kappa)}$, which has been studied in detail in Appendix \ref{}.
\begin{align*}
\begin{split}
%c&\coloneqq 1 \\
\hat{\Lambda}_1^{(x \kappa)}(s) &\coloneqq  \hat{\Lambda}_{0*}^{(x \kappa)}(s) - \hat{\Lambda}_{12}^{(x\kappa)}(s), \\
\hat{\Lambda}_2^{(x \kappa)}(s) &\coloneqq  \hat{\Lambda}_{12}^{(x\kappa)}(s)  - \hat{\Lambda}_{02}^{(x\kappa)}(s), \\
\hat{\Lambda}_3^{(x \kappa)}(s) &\coloneqq  \hat{\Lambda}_{12}^{(x\kappa)}(s),
\end{split}
\end{align*}
whose distribution properties have been studied in detail in Appendix \ref{sec:appendix_distributional_properties}.
For the implementation of the statistical test, we will need some further notation. First, let
\begin{align*}
\hat{F}^{(x \kappa)}(s) &\coloneqq  \int_0^s  e^{-\hat{\Lambda}_1^{(x \kappa)}(u-)} d\hat{\Lambda}_2^{(x \kappa)}(u).
\end{align*}
Then, we will also need the pairwise covariation structure of the compensating $({\cal{F}}_{s})_{s \geq 0}$-martingales
\begin{align}\label{eq05.08}
\begin{split}
M_1^{(x \kappa)}(s) &\coloneqq  M_{0*}^{(x \kappa)}(s) - M_{12}^{(x \kappa)}(s), \\
M_2^{(x \kappa)}(s) &\coloneqq  M_{12}^{(x \kappa)}(s) - M_{02}^{(x \kappa)}(s), \\
M_3^{(x \kappa)}(s) &\coloneqq  M_{12}^{(x \kappa)}(s).
\end{split}
\end{align}
Using Eq. \eqref{eq:na_martingale_covariation}, this covariance structure reads as follows in differential form:
\begin{align*}
\begin{split}
d[M_1^{(x \kappa)}](s) 									   &=   n^{(x \kappa)} \sum_{i \in \aleph^{(x \kappa)}} \left(   \frac{I_{02,i}(s)}{Y_{02}^{(x \kappa)}(s)^2} dN_i^{PFS}(s) +   \frac{I_{12,i}(s)}{Y_{12}^{(x \kappa)}(s)^2} dN_i^{OS}(s)   \right) \\
d[M_1^{(x \kappa)},M_2^{(x \kappa)}](s)    &= - n^{(x \kappa)} \sum_{i \in \aleph^{(x \kappa)}} \left(   \frac{I_{12,i}(s)}{Y_{12}^{(x \kappa)}(s)^2} dN_i^{OS}(s)  +   \frac{I_{02,i}(s)}{Y_{02}^{(x \kappa)}(s)^2} dN_i^{OS}(s)   \right) \\
d[M_1^{(x \kappa)},M_3^{(x \kappa)}](s)    &= - n^{(x \kappa)} \sum_{i \in \aleph^{(x \kappa)}} \left(   \frac{I_{12,i}(s)}{Y_{12}^{(x \kappa)}(s)^2} dN_i^{OS}(s)      \right) \\
d[M_1^{(x \kappa)},M_{0*}^{(x \kappa)}](s) &=   n^{(x \kappa)} \sum_{i \in \aleph^{(x \kappa)}} \left(   \frac{I_{02,i}(s)}{Y_{02}^{(x \kappa)}(s)^2} dN_i^{PFS}(s)     \right) \\
d[M_2^{(x \kappa)}](s)    						     &=   n^{(x \kappa)} \sum_{i \in \aleph^{(x \kappa)}} \left(   \frac{I_{12,i}(s)}{Y_{12}^{(x \kappa)}(s)^2} dN_i^{OS}(s)  +   \frac{I_{02,i}(s)}{Y_{02}^{(x \kappa)}(s)^2} dN_i^{OS}(s)   \right) \\
d[M_2^{(x \kappa)},M_3^{(x \kappa)}](s)    &=   n^{(x \kappa)} \sum_{i \in \aleph^{(x \kappa)}} \left(   \frac{I_{12,i}(s)}{Y_{12}^{(x \kappa)}(s)^2} dN_i^{OS}(s)      \right) \\
d[M_2^{(x \kappa)},M_{0*}^{(x \kappa)}](s) &= - n^{(x \kappa)} \sum_{i \in \aleph^{(x \kappa)}} \left(   \frac{I_{02,i}(s)}{Y_{02}^{(x \kappa)}(s)^2} dN_i^{OS}(s)     \right) \\
d[M_3^{(x \kappa)}](s)    								 &=   n^{(x \kappa)} \sum_{i \in \aleph^{(x \kappa)}} \left(   \frac{I_{12,i}(s)}{Y_{12}^{(x \kappa)}(s)^2} dN_i^{OS}(s)      \right) \\
d[M_3^{(x \kappa)},M_{0*}^{(x \kappa)}](s) &=   0 \\
d[M_{0*}^{(x \kappa)}](s) 								 &=   n^{(x \kappa)} \sum_{i \in \aleph^{(x \kappa)}} \left(   \frac{I_{02,i}(s)}{Y_{02}^{(x \kappa)}(s)^2} dN_i^{PFS}(s)     \right) 
\end{split}
\end{align*}
On this basis, we finally introduce the $4 \times 4$ matrix $\hat{V}^{(x \kappa)}(s) \coloneqq  \left( \hat{[\Theta]}_{ij}(s) \right)_{i,j=1, \ldots, 4}$ with components
\begin{align}\label{eq05.11}
\begin{split}
\hat{[\Theta]}_{11}(s) 																&= \int_0^s \left( \hat{F}^{(x \kappa)}(u) \right)^2                                      d[M_1^{(x \kappa)}](u)  \\
\hat{[\Theta]}_{12}(s) \equiv \hat{[\Theta]}_{21}(s)  &= \int_0^s        \hat{F}^{(x \kappa)}(u) \cdot e^{- \hat{\Lambda}_1^{(x \kappa)}(u-)}   d[M_1^{(x \kappa)},M_2^{(x \kappa)}](u)  \\
\hat{[\Theta]}_{13}(s) \equiv \hat{[\Theta]}_{31}(s)  &= \int_0^s        \hat{F}^{(x \kappa)}(u)                                                d[M_1^{(x \kappa)},M_{0*}^{(x \kappa)}](u)  \\
\hat{[\Theta]}_{14}(s) \equiv \hat{[\Theta]}_{41}(s)  &= \int_0^s        \hat{F}^{(x \kappa)}(u)                                                d[M_1^{(x \kappa)},M_{3}^{(x \kappa)}](u)  \\
\hat{[\Theta]}_{22}(s) 																&= \int_0^s        e^{-2 \hat{\Lambda}_1^{(x \kappa)}(u-)}   															 d[M_2^{(x \kappa)}](u) \\
\hat{[\Theta]}_{23}(s) \equiv \hat{[\Theta]}_{32}(s)  &= \int_0^s        e^{- \hat{\Lambda}_1^{(x \kappa)}(u-)}                                  d[M_2^{(x \kappa)},M_{0*}^{(x \kappa)}](u)  \\
\hat{[\Theta]}_{24}(s) \equiv \hat{[\Theta]}_{42}(s)  &= \int_0^s        e^{- \hat{\Lambda}_1^{(x \kappa)}(u-)}                                  d[M_2^{(x \kappa)},M_{3}^{(x \kappa)}](u)  \\
\hat{[\Theta]}_{33}(s) 																&= [M_{0*}^{(x \kappa)}](s) \\
\hat{[\Theta]}_{34}(s) \equiv \hat{[\Theta]}_{43}(s)  &= 0 \\
\hat{[\Theta]}_{44}(s) 																&= [M_3^{(x \kappa)}](s),
\end{split}
\end{align}
as well as further derived quantities
\begin{align}\label{eq:covariance_estimate_two_arm_os}
\begin{split}
\hat{\mu}^{(x \kappa)}(s) &\coloneqq  e^{-\hat{\Lambda}_3^{(x \kappa)}(u)}[\hat{F}^{(x \kappa)}(s) + c], \\
\hat{L}^{(x \kappa)}(s)   &\coloneqq  e^{-\hat{\Lambda}_3^{(x \kappa)}(u)} \cdot (1,1,-\hat{F}^{(x \kappa)}(s),-c) \\
\hat{\cal{L}}^{(C \kappa)}(s) &\coloneqq  \sqrt{ 1 + r^{(\kappa)}  } \cdot \hat{L}^{(C \kappa)}(s), \\
\hat{\cal{L}}^{(E \kappa)}(s) &\coloneqq  \sqrt{ 1 + 1/r^{(\kappa)}  } \cdot \hat{L}^{(E \kappa)}(s), \\
\hat{\sigma}^{(\kappa)}(s)^2 &\coloneqq \sum_{x=C,E} \hat{\cal{L}}^{(x \kappa)}(s) \cdot \hat{V}^{(x \kappa)}(s) \cdot \hat{\cal{L}}^{(x \kappa)}(s)^{\textsc{T}}, \\
\hat{\varsigma}^{(\kappa)}(s_2,s_1)^2 &\coloneqq \sum_{x=C,E} \hat{\cal{L}}^{(x \kappa)}(s_2) \cdot \left( \hat{V}^{(x \kappa)}(s_2) - \hat{V}^{(x \kappa)}(s_1) \right) \cdot \hat{\cal{L}}^{(x \kappa)}(s_2)^{\textsc{T}}.
\end{split}
\end{align}

\subsection{A Two-Stage Adaptive Test of $H_0$}\label{Sec05.04}

We propose a two-stage adaptive test of $H_0: S_{OS}^{(E)}(s) = S_{OS}^{(C)}(s)$ from Eq. \eqref{eq05.01}, which allows data-dependent sample size adjustment of the trial based on the observed short-term PFS data. In a non--adaptive setting, a test of null hypothesis $H_0$ is commonly based on the classical two--sample log--rank statistic $L_{OS}$ for OS. This is not appropriate in the adaptive setting, because $L_{OS}$ is not adapted to the joint filtration $({\cal{F}}_s)_{s \geq 0}$ of PFS and OS events, which in turn means that PFS-data must not be used for data-dependent design changes, contrary to what is desired (c.f. \cite{BauerPosch04}). Instead, we will use an approach based on the stage-wise statistics $\Delta \Psi^{(\kappa)}$ introduced in \eqref{eq05.05} which are adapted to $({\cal{F}}_s)_{s \geq 0}$, thus allowing simultaneous use of PFS- and OS-data for data-dependent design changes.

Let the trial start at calendar time zero. At calender time $t_1>0$, say, an interim analysis is performed to determine the short-term PFS and OS within the first $s_1$ months since start of therapy (e.g. think of $s_1 =$ 6-months after start of treatment). Of course, we have to require $s_1 < t_1$. Patients recruited prior to calender time $a_1 \coloneqq  t_1 - s_1$ define the set $\aleph^{(1)}$ of stage $1$ patients, and will be part of the interim analysis. Assume there are $n^{(1)} \coloneqq  n^{(E1)} + n^{(C1)}$ stage-1 patients, with $n^{(E1)}$ and $n^{(C1)}$ of them being randomly allocated to experimental and control treatment with stage-1 randomization ratio $r^{(1)}\coloneqq n^{(E1)}/n^{(C1)}$, respectively. At the interim analysis, we derive the interim statistics
\begin{align*}
\begin{split}
\hat{Z}_{11} & \coloneqq  \frac{ \Delta \Psi^{(1)}(s_1)  }{  \hat{\sigma}^{(1)}(s_1)   }, \quad \hat{\Lambda}_{0*}^{(E1)}(s_1), \quad \text{and} \quad  \hat{\Lambda}_{0*}^{(C1)}(s_1).
\end{split}
\end{align*}
Notice that $\hat{\sigma}^{(1)}(s_1)^2$ is an estimate of the variance of $\Psi^{(1)}(s_1)$ calculated within the set of stage $1$ patients. The statistics $\hat{Z}_{11}$ is normally distributed with mean $\gamma_{OS} \cdot \mu^{(C)}(s_1) \log(\mu^{(C)}(s_1))$, which is proportional to the true hazard ratio $\gamma_{OS}$ w.r.t. OS and otherwise only depends on some control group baseline parameter $\mu^{(C)}(s_1)\coloneqq  e^{- \Lambda_3^{(C)}(s_1)} [1 + \int_0^{s_1} e^{- \Lambda_1^{(C)}(u)} d\Lambda_2^{(C)}(u) ]$ (c.f. Appendix \ref{sec:appendix_distributional_properties}, Eq. \eqref{A8.Eq07bb}). $\hat{Z}_{11}$ thus is an interim measure for the treatment effect w.r.t. OS. Analogously, $\hat{\Lambda}_{0*}^{(x 1)}(s_1)$ estimates the cumulative hazard ${\Lambda}_{0*}^{(x)}(s_1)$ w.r.t. PFS in treatment group $x=E,C$ at time $s_1$, and thus quantifies the treatment group specific short-term PFS. The observed values of $\hat{Z}_{11}$ and $\hat{\Lambda}_{0*}^{(x 1)}(s_1)$ for $x=E,C$ will be used to decide whether the trial stops at the interim analysis or goes on to a second stage.

In case that the trial goes on, let the recruitment phase be extended for an additional $A_2$ months, followed by a follow-up phase of $f$ months. 
Accordingly, final analysis of all patients will be at calendar time $a_1 + A_2 + f$ which is equally the maximum in-study time $S_2 \coloneqq  a_1 + A_2 + f$ of the first patient.
These new patients recruited between calendar time $a_1$ and $a_1 + A_2$ form the set $\aleph^{(E2)}$ of stage $2$ patients. Assume there are $n^{(2)} \coloneqq  n^{(E2)} + n^{(C2)}$ stage-2 patients, with $n^{(E2)}$ and $n^{(C2)}$ of them being randomly allocated to experimental and control treatment with stage-2 randomization ratio $r^{(2)}\coloneqq n^{(E2)}/n^{(C2)}$, respectively. 

The adaptive component of the design is that $S_2$ (and in turn $A_2$ and $n^{(E2)}$) may be chosen in dependence of $\hat{Z}_{11}$ and $\hat{\Lambda}_{0*}^{(C1)}(s_1)-\hat{\Lambda}_{0*}^{(E1)}(s_1)$, and are thus random. From now on, we will write $S_2\coloneqq \xi(\hat{Z}_{11}, \hat{\Lambda}_{0*}^{(C 1)}(s_1) - \hat{\Lambda}_{0*}^{(E 1)}(s_1))$ instead of $s_2$ for some continuous function $\xi: \mathbb{R} \times \mathbb{R} \to (s_1,\infty)$, $(x_1,x_2) \mapsto \xi(x_1,x_2)$ to indicate this randomness, where $R$ denotes the set of real number. The function $\xi$ is called the \emph{adaptation rule} that we assume is established in advance of the interim analysis. An example for an adaptation rule will be given at the end of the section.

At the time of final analysis, the increment statistics in stage $1$ patients
\begin{align*}
\begin{split}
\hat{Z}_{12} & \coloneqq  \frac{  \Delta \Psi^{(1)}(S_2) - \Delta \Psi^{(1)}(s_1)  }{  \hat{{\varsigma}}^{(1)}(S_2,s_1)   }     
\end{split}
\end{align*}
as well as the overall statistic in stage $2$ patients
\begin{align*}
\begin{split}
\hat{Z}_{2} & \coloneqq  \frac{  \Delta \Psi^{(2)}(S_2)  }{\hat{\sigma}^{(2)}(S_2)} \\
\end{split}
\end{align*}
is calculated. For arbitrary but prefixed weights $w_1$, $w_2$ with $w_1^2+w_2^2=1$, these are combined to stage--wise statistics 
\begin{align*}
\begin{split}
Z_{1} & \coloneqq  \hat{Z}_{11}, \quad \text{and} \quad  Z_{2}\coloneqq  w_1 \hat{Z}_{12} + w_2 \hat{Z}_{2}. %\\
%P_{x,1} &\coloneqq  \Phi(Z_{x,1}) \quad P_{x,2} \coloneqq  \Phi(Z_{x,2})
\end{split}
\end{align*}
The adaptive test of null hypothesis $H_0:S_{OS}^{(E)}(s)=S_{OS}^{(C)}(s)$ is defined by the rejection region ${\cal{R}}\coloneqq  \{  Z_{1} \geq c_1\} \cup \{ c_0 \leq Z_1 < c_1, Z_2 \geq c_2 \}$ with stage-wise rejection bounds $c_1$ and $c_2$, and stopping for futility bound $c_0$. The critical bounds $c_1,c_2$ are chosen such that the stage 1 type one error rate $P_{H_0}(\{ Z_{1} \geq c_1 \})$ equals $\alpha_1$, and such that the overall type one error rate $P_{H_0}({\cal{R}})$ equals the desired significance level $\alpha>\alpha_1$. 
As proven in Appendix \ref{sec:appendix_distributional_properties}, this means choosing
\begin{align*}
\begin{split}
c_1\coloneqq  \Phi^{-1}(1- \alpha_1),
\end{split}
\end{align*}
and choosing $c_2$ such that
\begin{align}\label{eq05.46}
\begin{split}
&\alpha - \alpha_1  
= 	\int_{c_0}^{c_1} dz \int_{\mathbb{R}^4} d\theta_E \int_{\mathbb{R}^4} d\theta_C  f(z,\theta_E,\theta_C) \\ %f_{Z_{11},\hat{\Theta}^{(E \kappa)}(s_1), \hat{\Theta}^{(C \kappa)}(s_1)}(z,\theta_E,\theta_C) \\
& \qquad \qquad		\Phi\left( -c_2 + w_1 \frac{   \sum_{x=C,E} (-1)^{I(x=E)} \cdot \theta_x \cdot \left( \hat{\cal{L}}^{(x \kappa)}(\hat{s}_2(z, \theta_E, \theta_C)) - \hat{\cal{L}}^{(x \kappa)}(s_1) \right)    }{  \hat{\varsigma}^{(1)}(\hat{s}_2(z,\theta_E, \theta_C),s_1) }  \right),
\end{split}
\end{align}
where $z \in R$ is a real number, $\theta_E,\theta_C \in \mathbb{R}^4$ are four-dimensional real numbers,
\begin{align*}
\begin{split}
\hat{s}_2(z,\theta_E,\theta_C) \coloneqq   
\xi(x_1,x_2)|_{x_1=z, x_2 = \frac{ \text{pr}_3(\theta_C) }{\sqrt{n^{(C 1)}}}  -  \frac{ \text{pr}_3(\theta_E) }{\sqrt{n^{(E 1)}}}  + \hat{\Lambda}_{0*}^{(C1)}(s_1) - \hat{\Lambda}_{0*}^{(E1)}(s_1) }
\end{split}
\end{align*} 
with  $\xi(\cdot,\cdot)$ being the prespecified adaptation rule and $\text{pr}_3: \mathbb{R}^4 \to \mathbb{R}, (x_1,x_2,x_3,x_4)^t \mapsto x_3$ denoting projection onto the third component, 
and where
\begin{align*}
\begin{split}
f(z,\theta_E,\theta_C) \coloneqq  \frac{   \exp\left( - \frac{1}{2}  (z,\theta_E,\theta_C) \cdot  \Sigma^{-1} \cdot (z,\theta_E,\theta_C)^t    \right)   }{  \sqrt{(2 \pi)^{9} \cdot \det(\Sigma)}  } 
\end{split}
\end{align*}
is the density of a nine-dimensional multivariate normal distribution with mean zero and covariance matrix
\begin{align*}
\begin{split}
\Sigma \coloneqq   
  \begin{pmatrix} 
	1 																													   &  \frac{\hat{\cal{L}}^{(E 1)}(s_1) \cdot \hat{V}^{(E 1)}(s_1)}{\hat{\sigma}^{(E 1)}(s_1)}  
																																 &  \frac{\hat{\cal{L}}^{(C 1)}(s_1) \cdot \hat{V}^{(C 1)}(s_1)}{\hat{\sigma}^{(C 1)}(s_1)}     \\
	\frac{ \hat{V}^{(E 1)}(s_1) \cdot \hat{\cal{L}}^{(E 1)}(s_1)^{\textsc{T}} }{\hat{\sigma}^{(E 1)}(s_1)} 	 &                                  \hat{V}^{(E 1)}(s_1)       &  0_{4 \times 4}   \\
	\frac{ \hat{V}^{(C 1)}(s_1) \cdot \hat{\cal{L}}^{(C 1)}(s_1)^{\textsc{T}} }{\hat{\sigma}^{(C 1)}(s_1)}    &                                  0_{4 \times 4}             &  \hat{V}^{(C 1)}(s_1)   \\
  \end{pmatrix} .
\end{split}
\end{align*}
Here $ 0_{4 \times 4}$ is a  $4 \times 4$ matrix with zero entries, only. Notice that the critical bound $c_2$ only becomes known at the interim analysis, after the adaptation rule $\xi(x_1,x_2)$ has been established and the quantities $\hat{\varsigma}^{(1)}(\cdot,\cdot)$, $\Sigma$, $\hat{\Lambda}_{0*}^{(x 1)}(s_1)$ and ${\cal{L}}^{(x 1)}(\cdot)$ become evaluable.

To give an example of a design modification, let the adaptive decision rule be such that enrollment after the interim analysis is continued for another 36 months only if (i) the $s_1$-months PFS in the experimental treatment group is at least $q$\% higher then in the control group (i.e. $S^{(E)}(s_1) \geq (1 + q) \cdot S^{(C)}(s_1)$) and (ii) the treatment effect w.r.t OS is in favour of the experimental treatment (i.e. $Z_{11} \leq 0$), and otherwise if (i) or (ii) is not fulfilled accrual to the study is stopped at the interim analysis. In formulas, the modified time of final analysis would then be
\begin{align*}
\begin{split}
S_2 \coloneqq  a_1 + 36 \cdot I\left( \hat{\Lambda}_{0*}^{(C 1)}(s_1) - \hat{\Lambda}_{0*}^{(E 1)}(s_1) \geq \log(1+q) \right) \cdot I(Z_{11} \leq 0) + f
\end{split}
\end{align*}
corresponding to the adaptation rule 
\begin{align*}
\begin{split}
\xi(x_1,x_2) = a_1 + 36 \cdot I(x_2 \geq \log(1+q) ) \cdot I(x_1 \leq 0) + f, 
\end{split}
\end{align*}
because this choice for $\xi(\cdot,\cdot)$ yields $S_2 = \xi(x_1,x_2)|_{x_1=Z_{11}, x_2 = \hat{\Lambda}_{0*}^{(C 1)}(s_1) - \hat{\Lambda}_{0*}^{(E 1)}(s_1)  }$. 
%= \xi(Z_{11}, \hat{\Lambda}_{0*}^{(C 1)}(s_1)- -\hat{\Lambda}_{0*}^{(E 1)}(s_1) ) $.
Analogously, the function $\hat{s}_2(z,m)$ defined in Eq. \eqref{eq04.16} is 
\begin{align*}
\begin{split}
\hat{s}_2(z,m) \coloneqq & \xi(x_1,x_2)|_{x_1=z, x_2 = \frac{ \text{pr}_3(\theta_C) }{\sqrt{n^{(C 1)}}}  -  \frac{ \text{pr}_3(\theta_E) }{\sqrt{n^{(E 1)}}} + \hat{\Lambda}_{0*}^{(C 1)}(s_1) - \hat{\Lambda}_{0*}^{(E 1)}(s_1) }   \\
= & a_1 + 36 \cdot I\left( \frac{ \text{pr}_3(\theta_C) }{\sqrt{n^{(C 1)}}}  -  \frac{ \text{pr}_3(\theta_E) }{\sqrt{n^{(E 1)}}} + \hat{\Lambda}_{0*}^{(C 1)}(s_1) - \hat{\Lambda}_{0*}^{(E 1)}(s_1) \geq \log(1+q) \right) \cdot I(z \leq 0) + f.
\end{split}
\end{align*}
With these expressions, the integral in Eq. \eqref{eq05.46} is evaluated to determine the critical bound $c_2$.\\
The power of the adaptive test is obtained by calculating the probability of the event ${\cal{R}}$ under the contiguous alternatives from Eq. \eqref{eq05.02}. For this purpose we need the distribution of $\Delta \Psi^{(\kappa)}$ under the contiguous alternatives which is provided in Appendix \ref{sec:appendix_distributional_properties}.

\section{Adaptive Testing of Hypotheses on the Gap between PFS and OS in Randomized Clinical Trials}\label{sec:rct_pfs_os_gap}

\subsection{The Null Hypothesis}\label{Sec06.01}
We consider a clinical trial in which patients are recruited in successive stages $\kappa = 1, \ldots, K$ and are randomly assigned to experimental treatment $E$ or control treatment $C$ in proportion $r^{(\kappa)}\coloneqq n^{(E \kappa)}/n^{(C\kappa)}$ at each stage. Here, $n^{(x\kappa)}$ is the number of patients in treatment arm $x=E,C$ at stage $\kappa$. In this setting, we consider testing the two-sided null-hypothesis of no difference in the gap between overall survival (OS) and progression-free survival (PFS)
\begin{align*}
H_0: S_{OS}^{(E)}(s) - S_{PFS}^{(E)}(s) = S_{OS}^{(C)}(s) - S_{PFS}^{(C)}(s). 
\end{align*}
In Section \ref{Sec06.02}, we propose stage-wise statistics $\Delta\Psi^{(\kappa)}(s)$ that ensure the martingale property with respect to the filtration $({\cal{F}}_{s})_{s \geq 0}$ generated jointly by OS- and PFS-events, and thus are suitable for the desired adaptive test.

For purposes of power calculation we additionally introduce the (stage-wise) \emph{contiguous alternatives}
\begin{align*}
H_1: S_{OS}^{(E)}(s) - S_{PFS}^{(E)}(s) = [S_{OS}^{(C)}(s) - S_{PFS}^{(C)}(s)]^{\omega}
\end{align*} 
with hazard ratio ${\omega}= \exp(- {\gamma}/\sqrt{n^{(\kappa)}})$ for some $\gamma > 0$ and with $n^{(k)}\coloneqq n^{(E \kappa)}+n^{(C\kappa)}$. Whereas ${\omega}$ is the clinically relevant hazard ratio, the parameter $\gamma > 0$ is the quantity that will appear in the normal approximation of our test statistic.

\subsection{The Test Statistic}\label{Sec06.02}

Starting point is the following identity for the overall survival function in treatment group $x=E,C$ that holds true under the Markov assumption (see Appendix \ref{subsec:appendix_os_identity})
\begin{align*}
S_{OS}^{(x)}(t) = e^{-\Lambda_{12}^{(x)}(t)} \cdot \int_0^t  e^{- [\Lambda_{0*}^{(x)}(u)-\Lambda_{12}^{(x)}(u)]} d\Lambda_{01}^{(x)}(u).    
\end{align*}
Identity \eqref{eq05.03} motivates to introduce the stage- and treatment-wise stochastic processes
\begin{align}\label{eq06.04}
\Psi^{(x \kappa)}(s)\coloneqq  e^{-\hat{\Lambda}_{12}^{(x \kappa)}(s)} \cdot \int_0^s  e^{-[\hat{\Lambda}_{0*}^{(x \kappa)}(u-)-\hat{\Lambda}_{12}^{(x \kappa)}(u-)]} d\hat{\Lambda}_{01}^{(x \kappa)}(u),
\end{align}
calculated from patients from treatment group $x=E,C$ and stage $\kappa$. These are used to define the stage-wise statistics 
\begin{align}\label{eq06.05}
\Delta\Psi^{(\kappa)}(s) \coloneqq  \Psi^{(C \kappa)}(s) - \Psi^{(E \kappa)}(s), \qquad \kappa = 1, \ldots, K,
\end{align}
for adaptively testing null hypothesis $H_0:S_{OS}^{(E)}(s) = S_{OS}^{(C)}(s)$ as detailed below in Section \ref{Sec06.04}.

\subsection{Specific Notation}\label{Sec06.03}

Notice that $\Psi^{(x \kappa)}(s)$ has the general algebraic structure 
\begin{align*}
\Psi^{(x \kappa)}(s)\coloneqq  e^{-\hat{\Lambda}_3^{(x \kappa)}(s)} \cdot \left( c  +  \int_0^s  e^{-\hat{\Lambda}_1^{(x \kappa)}(u-)} d\hat{\Lambda}_2^{(x \kappa)}(u)    \right),
\end{align*}
with parameter values $c\coloneqq 0$, and
%$\hat{\Lambda}_1^{(x \kappa)} \coloneqq  \hat{\Lambda}_{0*}^{(x \kappa)} + \hat{\Lambda}_{12}^{(x\kappa)}$, $\hat{\Lambda}_2^{(x \kappa)} \coloneqq  \hat{\Lambda}_{12}^{(x\kappa)}-\hat{\Lambda}_{02}^{(x\kappa)}$, and $\hat{\Lambda}_3^{(x \kappa)} \coloneqq  \hat{\Lambda}_{12}^{(x\kappa)}$, which has been studied in detail in Appendix \ref{}.
\begin{align*}
%c&\coloneqq 1 \\
\hat{\Lambda}_1^{(x \kappa)}(s) &\coloneqq  \hat{\Lambda}_{0*}^{(x \kappa)}(s) - \hat{\Lambda}_{12}^{(x\kappa)}(s) \\
\hat{\Lambda}_2^{(x \kappa)}(s) &\coloneqq  \hat{\Lambda}_{0*}^{(x \kappa)}(s) - \hat{\Lambda}_{02}^{(x\kappa)}(s) \\
\hat{\Lambda}_3^{(x \kappa)}(s) &\coloneqq  \hat{\Lambda}_{12}^{(x \kappa)}(s),
\end{align*}
whose distribution properties have been studied in detail in Appendix \ref{sec:appendix_distributional_properties}.
For the implementation of the statistical test, we will need some further notation. First, let
\begin{align}\label{eq06.07b}
\hat{F}^{(x \kappa)}(s) &\coloneqq  \int_0^s  e^{-\hat{\Lambda}_1^{(x \kappa)}(u-)} d\hat{\Lambda}_2^{(x \kappa)}(u).
\end{align}
Then, we will also need the pairwise covariation structure of the compensating $({\cal{F}}_{s})_{s \geq 0}$-martingales
\begin{align}\label{eq06.08}
\begin{split}
M_1^{(x \kappa)}(s) &\coloneqq  M_{0*}^{(x \kappa)}(s) - M_{12}^{(x \kappa)}(s) \\
M_2^{(x \kappa)}(s) &\coloneqq  M_{0*}^{(x \kappa)}(s) - M_{02}^{(x \kappa)}(s) \\
M_3^{(x \kappa)}(s) &\coloneqq  M_{12}^{(x \kappa)}(s)
\end{split}
\end{align}
Recalling Eq. \eqref{eq:na_martingale_covariation}, this covariance structure reads as follows in differential form: 
\begin{align}\label{eq06.09}
\begin{split}
d[M_1^{(x \kappa)}](s) 									   &=   n^{(x \kappa)} \sum_{i \in \aleph^{(x \kappa)}} \left(   \frac{I_{02,i}(s)}{Y_{02}^{(x \kappa)}(s)^2} dN_i^{PFS}(s) +   \frac{I_{12,i}(s)}{Y_{12}^{(x \kappa)}(s)^2} dN_i^{OS}(s)   \right) \\
d[M_1^{(x \kappa)},M_2^{(x \kappa)}](s)    &=   n^{(x \kappa)} \sum_{i \in \aleph^{(x \kappa)}} \left(   \frac{I_{02,i}(s)}{Y_{02}^{(x \kappa)}(s)^2} dN_i^{PFS}(s)  -   \frac{I_{02,i}(s)}{Y_{02}^{(x \kappa)}(s)^2} dN_i^{OS}(s)   \right) \\
d[M_1^{(x \kappa)},M_3^{(x \kappa)}](s)    &= - n^{(x \kappa)} \sum_{i \in \aleph^{(x \kappa)}} \left(   \frac{I_{12,i}(s)}{Y_{12}^{(x \kappa)}(s)^2} dN_i^{OS}(s)      \right) \\
d[M_1^{(x \kappa)},M_{0*}^{(x \kappa)}](s) &=   n^{(x \kappa)} \sum_{i \in \aleph^{(x \kappa)}} \left(   \frac{I_{02,i}(s)}{Y_{02}^{(x \kappa)}(s)^2} dN_i^{PFS}(s)     \right) \\
d[M_2^{(x \kappa)}](s)    						     &=   n^{(x \kappa)} \sum_{i \in \aleph^{(x \kappa)}} \left(   \frac{I_{02,i}(s)}{Y_{02}^{(x \kappa)}(s)^2} dN_i^{PFS}(s)  -   \frac{I_{02,i}(s)}{Y_{02}^{(x \kappa)}(s)^2} dN_i^{OS}(s)   \right) \\
d[M_2^{(x \kappa)},M_3^{(x \kappa)}](s)    &=   0 \\
d[M_2^{(x \kappa)},M_{0*}^{(x \kappa)}](s) &=   n^{(x \kappa)} \sum_{i \in \aleph^{(x \kappa)}} \left(   \frac{I_{02,i}(s)}{Y_{02}^{(x \kappa)}(s)^2} dN_i^{PFS}(s)  -   \frac{I_{02,i}(s)}{Y_{02}^{(x \kappa)}(s)^2} dN_i^{OS}(s)   \right) \\
d[M_3^{(x \kappa)}](s)    								 &=   n^{(x \kappa)} \sum_{i \in \aleph^{(x \kappa)}} \left(   \frac{I_{12,i}(s)}{Y_{12}^{(x \kappa)}(s)^2} dN_i^{OS}(s)      \right) \\
d[M_3^{(x \kappa)},M_{0*}^{(x \kappa)}](s) &=   0 \\
d[M_{0*}^{(x \kappa)}](s) 								 &=   n^{(x \kappa)} \sum_{i \in \aleph^{(x \kappa)}} \left(   \frac{I_{02,i}(s)}{Y_{02}^{(x \kappa)}(s)^2} dN_i^{PFS}(s)     \right) 
\end{split}
\end{align}
On this basis, we finally define the $4 \times 4$ matrix $\hat{V}^{(x \kappa)}(s) \coloneqq  \left( \hat{[\Theta]}_{ij}(s) \right)_{i,j=1, \ldots, 4}$ as in Eq. \eqref{eq05.11}, but with $\hat{F}^{(x \kappa)}(s)$ from Eq. \eqref{eq06.07b} and the pairwise covariation processes $d[M_i^{(x \kappa)},M_i^{(x \kappa)}](s)$ according to Eq. \eqref{eq06.09}. Analogously, we also introduce the derived quantities $\hat{\mu}^{(x \kappa)}(s)$, $\hat{L}^{(x \kappa)}(s)$, $\hat{\cal{L}}^{(C \kappa)}(s)$, $\hat{\cal{L}}^{(E \kappa)}(s)$, $\hat{\sigma}^{(\kappa)}(s)^2$, and $\hat{\varsigma}^{(\kappa)}(s_2,s_1)^2$ using the same definitions as in Eq. \eqref{eq:covariance_estimate_two_arm_os}.

\subsection{A Two-Stage Adaptive Test of $H_0: S_{OS}^{(E)}(s) - S_{PFS}^{(E)}(s) = S_{OS}^{(C)}(s) - S_{PFS}^{(C)}(s)$}\label{Sec06.04}

We propose a two-stage adaptive test of null hypothesis $H_0$ which allows data-dependent sample size modifications of the trial based on the observed short-term PFS. Notice that the stage-wise statistics $\Delta \Psi^{(\kappa)}$ from Eq. \eqref{eq05.05} for testing $H_0: S_{OS}^{(E)}(s) = S_{OS}^{(C)}(s)$ and the stage-wise statistics $\Delta \Psi^{(\kappa)}$ from Eq. \eqref{eq06.05} for testing $S_{OS}^{(E)}(s) - S_{PFS}^{(E)}(s) = S_{OS}^{(C)}(s) - S_{PFS}^{(C)}(s)$ both have the same algebraic structure given in Eq. \eqref{eq05.06}. The desired adaptive test of $H_0: S_{OS}^{(E)}(s) - S_{PFS}^{(E)}(s) = S_{OS}^{(C)}(s) - S_{PFS}^{(C)}(s)$ can thus be implemented as described in Chapter \ref{Sec05.04}, by replacing all expressions from Chapter \ref{Sec05.04} with the expressions of the same name from Chapters \ref{Sec06.02} and \ref{Sec06.03}.

\section{Adaptive Hypothesis Testing In General Markovian Multi-State Models}\label{Sec07}

A multi-state model is a stochastic process $(X(s))_{s \geq 0}$ with values in a discrete state space ${\cal{S}}\coloneqq \{0,1,\ldots,D\}\}$ for some natural number $D>0$. Here $0$ is the initial state, and $1, \ldots, D$ are further transient or absorbing states. The probabilities of transition between states
\begin{align*}
	P_{gh}(s,t)\coloneqq  \mathbb{P}\left( X(t)=h | X(s)=g, {\cal{F}}_{s-} \right), \quad s<t,
\end{align*}
are quantities of key interest. Here, the $\sigma$-algebra ${\cal{F}}_{s-}$ is the information about the transitions of the process $X$ prior to time $s$. $P_{gh}(s,t)$ is the $gh$-element of the transition probability matrix $\mathbf{P}(s,t)$ that shows the probabilites transitions between the states in the time interval from $s$ to $t$. We additionally introduce the transition intensities 
\begin{align*}
	\begin{split}
	\alpha_{gh}(t)&\coloneqq  \lim_{\Delta t \to 0} \frac{P_{gh}(t,t+\Delta t)}{\Delta t}, \quad \text{for } g \neq h, \text{ and} \\
	\alpha_{gg}(t)&\coloneqq  - \sum_{ h \neq g} \alpha_{gh}(t).
\end{split}
\end{align*}
These may be summarized to a transition intensity matrix {\boldmath${\alpha}$}$(t)$, and a corresponding matrix $\mathbf{A}(t)\coloneqq \int_0^t \mathbf{\alpha}(u) du$ of cumulative transition intensities.

In sequel, we assume that the process $X(t)$ is Markov, i.e. we assume that
\begin{align*}
	\mathbb{P}\left( X(t)=h | X(s)=g, {\cal{F}}_{s-} \right) = \mathbb{P}\left( X(t)=h | X(s)=g \right)
\end{align*}
We consider hypothesis testing on $\mathbf{P}$ when the process $X$ is Markovian. In particular, notice that we have $\mathbf{P}(s,t)=\mathbf{P}(s,u) \cdot \mathbf{P}(u,t)$ in the Markovian case. The matrices $\mathbf{P}$ and $\mathbf{A}$ are related via the matrix product-integral
\begin{align*}
	\mathbf{P}(s,t) = 
	\underset{ u \in (s,t]  }\prod \{  \mathbf{I} + d\mathbf{A}(u)  \}.
\end{align*}
To estimate the transition probability matrix, we need to estimate the matrix of cumulative transition intensities $\mathbf{A}(u)$. Let $N_{gh}(u)$ be the number of individuals who are observed to move from state $g$ to $h$ in the time interval $[0,u]$. Let $Y_g(u)$ be the number of patients who are obversed in stage $g$ just before time $t$, and $J_g(u)\coloneqq  I(Y_g(u)>0)$ for each $g \in {\cal{S}}$. Then we may estimate $\mathbf{A}(u)$ by a matrix $\hat{\mathbf{A}}(u)$ of Nelsen-Aalen estimators, where the $gh$-element is given by   
\begin{align*}
	\begin{split}
			\hat{{A}}_{gh}(t)&\coloneqq  \int_0^t \frac{dN_{gh}(u)}{Y_g(u)}, \quad \text{for } g \neq h, \text{ and} \\
			\hat{{A}}_{gg}(t)&\coloneqq  - \sum_{h \neq g} \hat{{A}}_{gh}(t).
	\end{split}
\end{align*}
This yields the following estimate of the transition probabilities
\begin{align*}
	\hat{\mathbf{P}}(s,t) = 
	\underset{ u \in (s,t]  }\prod \{  \mathbf{I} + d\hat{\mathbf{A}}(u)  \}.
\end{align*}
To characterize the large sample distribution of $\hat{\mathbf{P}}(s,t)$ we additionally define for $g,h \in {\cal{S}}$
\begin{align*}
	A_{gh}^*(t)\coloneqq \int_0^t J_{g}(u)dA_{gh}(u).
\end{align*}
Again we let $\mathbf{A}^*(t)$ the maxtrix of these elements, and  
\begin{align*}
	{\mathbf{P}}^*(s,t) = 
	\underset{ u \in (s,t]  }\prod \{  \mathbf{I} + d{\mathbf{A}}^*(u)  \}.
\end{align*}
On this basis, the matrix of transition probabilities may be rewritten as
\begin{align}\label{Sec07.eq09}
	\begin{split}
			\sqrt{n}\left( \hat{\mathbf{P}}(s,t) - {\mathbf{P}}^*(s,t) \right)  
		&= {\mathbf{Q}}(s,t) \cdot {\mathbf{P}}^*(s,t) \\
	\end{split}
\end{align}
with
\begin{align*}
	\begin{split}
		{\mathbf{Q}}(s,t) &\coloneqq  \int_0^t \hat{\mathbf{P}}(s,u-) \cdot d\mathbf{M}(u) \cdot {\mathbf{P}}^*(u,s), \qquad		\mathbf{M}(u)\coloneqq  \sqrt{n}(\hat{\mathbf{A}}(u) - \mathbf{A}^*(u))
	\end{split}
\end{align*}
This allows to derive its finite dimensional distributions in the large sample limit. First, $\mathbf{M}(t)$ is well-known to be a mean-zero ${\cal{F}}_{t}$-martingale that converges to a Gaussian process with independent increments as $n \to \infty$ by a central limit theorem of \cite{bib5}, because its optional covariation process converges in distribution to some deterministic covariance function and because its jumpsize is of order $n^{-1/2}$. Since $\hat{\mathbf{P}}$ and ${\mathbf{P}}^*$ are bounded by $1$, for any fixed $s \geq 0$, the process ${\mathbf{Q}}(s,t)$ itself is thus a mean-zero ${\cal{F}}_{t}$-martingale that converges to a Gaussian process with independent increments as $n \to \infty$ by the central limit theorem of \cite{bib5}. 
The covariance between any to components $Q_{gh}(s,t)$ and $Q_{g'h'}(s,t)$ of $\mathbf{Q}$ may be estimated using the covariation process
\begin{align*}
	\begin{split}
\sum_{i,i',j,j' \in {\cal{S}}} \int_0^t \hat{P}_{gi}(s,u-) \hat{P}_{g'i'}(s,u-)  \hat{P}_{jh}(u-,s)  \hat{P}_{j'h'}(u-,s) 
d[M_{ij},M_{i'j'}](u),
	\end{split}
\end{align*}
because $[\int H dM] = \int H^2 d[M]$ for a martingale $M$ and predictable process $H$. Exploting the independent increments structure, we have $\text{Cov}(Q_{gh}(s,t_2),Q_{g'h'}(s,t_1))= \text{Cov}(Q_{gh}(s,t_1),Q_{g'h'}(s,t_1))$ for any $t_1 \leq t_2$ in the large sample limit. This characterizes the finite dimensional distributions of the process $(\mathbf{Q}(s,t))_{t \geq 0}$ for any fixed $s \geq 0$, and in combination with Eq. \eqref{Sec07.eq09}  the finite dimensional distributions of the process $(\hat{\mathbf{P}}(s,t))_{t \geq 0}$ for any fixed $s \geq 0$: As linear transformation of a Gaussian process, $(\hat{\mathbf{P}}(s,t))_{t \geq 0}$ is in fact itself a Gaussian process. Notice, however, that in contrast to $(\mathbf{Q}(s,t))_{t \geq 0}$ the process $(\hat{\mathbf{P}}(s,t))_{t \geq 0}$ in general does not have independent increments. Since ${\mathbf{P}}^*(s,t)$ is almost the same as ${\mathbf{P}}(s,t)$ when sample size is large, we immediately see from Eq. \eqref{Sec07.eq09} that $\hat{\mathbf{P}}(s,t)$ consistently estimates ${\mathbf{P}}(s,t)$.

Knowledge of the finite dimensional distributions of $(\hat{\mathbf{P}}(s,t))_{t \geq 0}$ now enabled adaptive hypothesis test of hypotheses on the true transition probabilities ${\mathbf{P}}(s,t)$. Let us exemplarily consider a single component $P_{gh}(s,t)$ and a two arm randomized trial. Assume we wish to test the null hypothesis $H_0:P_{gh}^{(E)}(s,t) = P_{gh}^{(C)}(s,t)$ that the probability of transition from stage $g$ to $h$ between time $s$ and $t$ does not differ between some experimental treatment $E$ and control $C$. Let $\hat{\mathbf{P}}^{(x)}$ denote the estimated transition probability matrix observed in treatment group $x=E,C$. A natural statistic for testing $H_0$ is given by $\Psi(t,n)\coloneqq {\hat{P}}_{gh}^{(E)}(s,t) - {\hat{P}}_{gh}^{(C)}(s,t)$, which is normally distributed with mean zero when $H_0$ holds true by what we have seen above. Let the initial design reject $H_0$ whenever, $\Psi(t,n) \geq c$ for some critical value $c$. Now assume that we perform an interim analysis and gain interim estimates of the treatment specific shor-term transition probabilites $\hat{\mathbf{P}}^{(x)}(s,t_0)$ for some $s < t_0 < t$ and $x=C,E$. The interim estimate $\hat{\mathbf{P}}^{(x)}(s,t_0)$ is a surrogate for the long-term transition probability ${{P}}_{gh}^{(x)}(s,t)$. Assume we wish to adapt the sample size $n$ (and/or the upper limit of the transition time interval $t$) based on the observed value of $\hat{\mathbf{P}}^{(x)}(s,t_0)$. Let $n'$ and $t'$ denote the modified values for $n$ and $t$. By the conditional error principle of \cite{SM01}, control of the type I error rate is guaranteed if we also choose a new critical bound $c'$ such that  
\begin{align}\label{Sec07.eq12}
	\begin{split}
\mathbb{P}_{H_0}(\Psi(t,n)\geq c | \hat{\mathbf{P}}^{(x)}(s,t_0) = {\mathbf{p}}^{(x)}(s,t_0) ) = P_{H_0}(\Psi(t',n')\geq c'| \hat{\mathbf{P}}^{(x)}(s,t_0) = {\mathbf{p}}^{(x)}(s,t_0))
	\end{split}
\end{align}
for any ${\mathbf{p}}^{(x)}(s,t_0)$. To calculate the probabilities in Eq. \eqref{Sec07.eq12}, it suffices to know the finite dimensional distributions of the processes $(\hat{\mathbf{P}}^{(x)}(s,t))_{t \geq 0}$, $x=E,C$, which have been provided above. This achieves the implementation of the desired confirmatory adaptive design changes. Instead of hypotheses about individual components of the matrix ${\mathbf{P}}^{(x)}(s,t)$, hypotheses about linear combinations of components can also be tested by forming linear combinations of the statistics described above.

The methods described in earlier Chapters \ref{sec:single_arm_os}, \ref{sec:rct_os}, and \ref{sec:rct_pfs_os_gap} could also be placed in this general context. In a three state illness-death model $X(t) \in {\cal{S}} = \{0,1,2\}$ with the transient state of progression $1$ and the absorbing state of death $2$, we have 
\begin{align*}
	\begin{split}
S_{OS}(t)&\coloneqq  \mathbb{P}(OS > t) = P_{00}(0,t) +  P_{01}(0,t) \\
S_{PFS}(t)&\coloneqq  \mathbb{P}(PFS > t) = P_{00}(0,t) \\
S_{OS}(t)-S_{PFS}(t) &= P_{01}(0,t),
	\end{split}
\end{align*}
where PFS is the waiting time in the initial state $0$, and OS is the waiting time until absorption in state $2$.

\section{Discussion}\label{Discussion}

We considered confirmatory adaptive designs for survival trials with two correlated time-to-event endpoints. The special feature is that interim data from both time-to-event endpoints may be used to inform design modifications, while avoiding those problem arising from the patient-wise separation approach of \cite{JSJ11}. The most serious disadvantage oft he patient-wise separation approach is that either part of the primary endpoint data has to be neglected in the final test decision or worst case adjustment have to be done that result in a conservative test procedure \citep[c.f.][]{Mea16}.

Our approach exploits the independent increments structure of the underlying test statistics. Whereas the independent increment structure in classical adaptive survival trials \citep[e.g.][]{W06} only holds w.r.t. the filtration generated by a single time-to-event endpoint, however, the independent increment structure of our test statistics holds w.r.t. to the joint filtration generated by two time-to-event endpoints. In consequence, with our approach, design modifications may be informed by interim data from both time-to-event endpoints. Conversely, however, the full interim must still not be used to inform design modifications, because then the problems described by \cite{BauerPosch04} would reappear again. If interim data from three or more time-to-endpoints is to be used for design modification, all counting processes would have to be compensated for the finer filtration generated by three or more endpoints.

This sheds light on possible extensions to the methodology presented here. On the one hand, our methodology for adaptive survival trials with two correlated time-to-event endpoints could be expanded to settings where three or more time-to-event endpoints are present simultaneously as outlined in Section \ref{Sec07}. 
%The counting process martingales resulting from compensating with respect to such a finer filtration generated by several have already been computed in great generality by \cite{Danzer22}. 
On the other hand, the methodology presented here assumes that all transition intensities obey the Markov property. While the Markov property makes clinical sense from our point of view and is advantageous in terms of calculation, it is not necessary from a purely mathematical point of view. In fact the counting process martingales derived in \cite{Danzer22} are valid in general without further assumptions on the transition intensities. Another possible extension is therefore to elaborate our methodology in settings where transition intensities obey other or potentially even milder structural assumptions than the Markov assumption (e.g. Semi-Markov models oder kernel based approaches).

Finally, the distribution of the test statistics was already calculated analytically in this manuscript not only under the null hypothesis but also under the contiguous alternatives. This opens up the possibility of also performing analytical power and sample size calculations for the methods presented. 

%In these ways, the methodology presented here can be extended and expanded.

\section*{Acknowledgement}
This work was funded by the Deutsche Forschungsgemeinschaft (DFG, German Research Foundation) – 413730122. 

\vspace*{1pc}

\noindent {\bf{Conflict of Interest}}
\noindent {\it{The authors have declared no conflict of interest.}}

\appendix
\section*{Appendices}
\renewcommand{\thesection}{\Alph{section}}

\section{Technical details about the Markovian illness-death model}\label{sec:appendix:markov_idm}

\subsection{A Sufficient Criterion for Markovity}\label{subsec:appendix_markov_sufficient}

The criterion to be derived holds on the patient level. We thus fix a single patient and consider the treament group index $x$ and the stage index $\kappa$ as fixed in this Chapter. Accordingly, the index $(x \kappa)$ indicating treatment group and stage will be omitted in all expressions to simply notation in this Chapter.

In the context of the Illness-Death-Model introcuded in Section \ref{Sec02}, the Markov assumption says that $\alpha^{(2,\{2\})}(s,s_1,s)$ from \eqref{eq:02.01} does not depend on the value of $s_1$. We claim that this independence from $s_1$ is given, when there exists a (potentially unknown) survival time variable $\tilde{T}$ such that 
$
\mathbb{P}(T_i^2 > s| T_i^1 = s_1) = \mathbb{P}(\tilde{T} > s | \tilde{T} > s_1)
$, 
because then
\begin{align*}
\begin{split}
\mathbb{P}(s < T_i^2 \leq s + ds| T_i^2 > s , T_i^1 = s_1)
%\frac{ \mathbb{P}(s < T_i^2 \leq s + ds| T_i^2 > s , T_i^1 = s_1) }{  \mathbb{P}(s < T_i^2 | T_i^2 > s , T_i^1 = s_1)  }
%&= \frac{  \mathbb{P}(s < T_i^2 \leq s + ds, T_i^2 > s | T_i^1 = s_1) / \mathbb{P}( T_i^2 > s | T_i^1 = s_1)  }{ \mathbb{P}(s < T_i^2 , T_i^2 > s | T_i^1 = s_1) / \mathbb{P}( T_i^2 > s | T_i^1 = s_1)   } \\
&= \frac{ \mathbb{P}(s < T_i^2 \leq s + ds| T_i^1 = s_1) }{  \mathbb{P}( T_i^2 > s | T_i^1 = s_1)  } \\
&= \frac{ - \mathbb{P}( T_i^2 > s + ds| T_i^1 = s_1) + \mathbb{P}(T_i^2 > s | T_i^1 = s_1) }{  \mathbb{P}( T_i^2 > s | T_i^1 = s_1)  } \\
&= 1 - \frac{ \mathbb{P}( T_i^2 > s + ds| T_i^1 = s_1) }{ \mathbb{P}( T_i^2 > s | T_i^1 = s_1)  } \\
&= 1 - \frac{ \mathbb{P}( \tilde{T} > s + ds| \tilde{T} > s_1) }{ \mathbb{P}( \tilde{T} > s | \tilde{T} > s_1 )  } \\
&= 1 - \frac{ \mathbb{P}( \tilde{T} > s + ds)/\mathbb{P}(\tilde{T} > s_1) }{ \mathbb{P}( \tilde{T} > s)/\mathbb{P}(\tilde{T} > s_1 )  } \\
&= 1 - \frac{ \mathbb{P}( \tilde{T} > s + ds) }{ \mathbb{P}( \tilde{T} > s) } \\
&= \frac{  \mathbb{P}( \tilde{T} > s) - \mathbb{P}( \tilde{T} > s + ds) }{ \mathbb{P}( \tilde{T} > s) } \\
&= \frac{  \mathbb{P}( s< \tilde{T} \leq s + ds) }{ \mathbb{P}( \tilde{T} > s) } \\
&=   \mathbb{P}( s < \tilde{T} \leq s + ds  |  \tilde{T} > s) =: \alpha_{\tilde{T}}(s) \cdot ds
\end{split}
\end{align*}
is independent from $s_1$. Heuristically, the model approach $\mathbb{P}(T_i^2 > s| T_i^1 = s_1) = \mathbb{P}(\tilde{T} > s | \tilde{T} > s_1)$ is not implausible, because this way $\mathbb{P}(T_i^2 > s| T_i^1 = s_1)$ is interpreted to ask for the probability of ''surviving'' beyond time $s$ given that one has ''survived'' at least up to time $s_1$.

\subsection{An Identity for the OS function under Markov assumption}\label{subsec:appendix_os_identity}

The identity to be derived holds true for the OS function of each patient. We may thus fix a single patient. Therefore, the patient index $i$, the treatment group index $x$ and the stage index $\kappa$ may be considered as fixed in this Chapter. Accordingly, the patient index $i$ as well as the index $(x \kappa)$ indicating treatment group and stage will be omitted in all expressions to simply notation in this Chapter.

Fix $t \geq 0$, $\epsilon > 0$ and let $Z\coloneqq  T^1 \cdot I(T^1 \leq t) \cdot I(T^2 > t)$, where $T^j$ is the time from entry to failure of type $j=1,2$. Distribution function and density of $Z$ are:
\begin{align*}
\begin{split}
&\bullet \quad \mathbb{P}(Z = 0) = \mathbb{P}(T^1 > 0 \text{ or } T^2 \leq t)  \\
&\bullet \quad \mathbb{P}(Z \leq z) = \mathbb{P}(Z \leq z, Z=0) + \mathbb{P}(Z \leq z, Z>0)  \\
& \quad \quad \quad = \mathbb{P}(T^1> t \text{ or } T^2 \leq t) \cdot I(z \geq 0) + \mathbb{P}(T^1 \leq t \wedge z, T^2 > t) \\
& \bullet \quad f_Z(z) \coloneqq  \frac{d}{dz} \mathbb{P}(Z \leq z) \\
& \quad \quad \quad = \mathbb{P}( T^1> t \text{ or } T^2 \leq t) \cdot \delta(z) + I(z \leq t) \cdot \frac{d}{dz} \mathbb{P}(T^1 \leq z, T^2 > t)
\end{split}
\end{align*}
%Also recall the transition intensities of the underlying Markov model:
%\begin{align}
%\begin{split}
%\alpha_{12}(t) &\coloneqq  \lim_{\epsilon \to 0} \mathbb{P}(t < T^2 \leq t + \epsilon | T^1 = z, T^2 > t)/\epsilon  \\
%\alpha_{0j}(t) &\coloneqq  \lim_{\epsilon \to 0} \mathbb{P}(t < T^j \leq t + \epsilon| T^1> t, T^2> t)/\epsilon, \quad \text{for } j=1,2.
%\end{split}
%\end{align}
Exploiting that $\alpha_{12}(t)\coloneqq  \lim_{\epsilon \to 0} \mathbb{P}(t < T^2 \leq t + \epsilon | T^1 = z, T^2 > t)/\epsilon$ does not depend on $z$ by Markov property, we find with $\alpha_{0j}(t) \coloneqq  \lim_{\epsilon \to 0} \mathbb{P}(t < T^j \leq t + \epsilon| T^1> t, T^2> t)/\epsilon$ ($j=1,2$) that
\begin{align*}
\begin{split}
&\mathbb{P}(t < T^2 \leq t + \epsilon) = \int_0^{\infty} \mathbb{P}(t < T^2 \leq t + \epsilon | Z = z) f_Z(z) dz \\
& = \mathbb{P}(t < T^2 \leq t + \epsilon, \{ T^1> t \text{ or } T^2 \leq t \} )  \\
& \qquad + \int_0^t \mathbb{P}(t < T^2 \leq t + \epsilon | T^1 \leq t, T^1 = z, T^2 > t) \frac{\partial}{\partial z} \mathbb{P}(T^1 \leq z, T^2 > t) dz \\
& = \mathbb{P}(t < T^2 \leq t + \epsilon, T^1> t)  \\
& \qquad + \int_0^t \mathbb{P}(t < T^2 \leq t + \epsilon | T^1 = z, T^2 > t) \frac{\partial}{\partial z} \mathbb{P}(T^1 \leq z, T^2 > t) dz \\
& = \mathbb{P}(t < T^2 \leq t + \epsilon| T^1> t, T^2> t) \mathbb{P}(T^2> t,T^1> t)  \\
& \qquad + \int_0^t \mathbb{P}(t < T^2 \leq t + \epsilon | T^1 = z, T^2 > t) \frac{\partial}{\partial z} \mathbb{P}(T^1 \leq z, T^2 > t) dz \\
& = \alpha_{02}(t) \cdot \epsilon \cdot \mathbb{P}(T^2> t,T^1> t)  + \alpha_{12}(t) \cdot \epsilon \cdot \int_0^t \frac{\partial}{\partial z} \mathbb{P}(T^1 \leq z, T^2 > t) dz + {\cal{O}}(\epsilon^2) \\
& = \alpha_{02}(t) \cdot \epsilon \cdot \mathbb{P}(T^2> t,T^1> t)  + \alpha_{12}(t) \cdot \epsilon \cdot \mathbb{P}(T^1 \leq t, T^2 > t) + {\cal{O}}(\epsilon^2). \\
\end{split}
\end{align*}
Since $\mathbb{P}(T^2> t,T^1> t) = \mathbb{P}(T^2 \wedge T^1 > t) = S_{PFS}(t)$ and $\mathbb{P}(T^1 \leq t, T^2 > t) = \mathbb{P}(T^2 > t) - \mathbb{P}(T^1 > t, T^2 > t) = S_{OS}(t) - S_{PFS}(t)$, and by letting $\epsilon \to 0$, we obtain the differential equation 
\begin{align*}
\begin{split}
\frac{d}{dt}S_{OS}(t) = - \alpha_{12}(t) \cdot S_{OS}(t) + [\alpha_{12}(t)-\alpha_{02}(t)] \cdot S_{PFS}(t),
\end{split}
\end{align*}
which is equivalent to the identity given in Eq. \eqref{eq:os_identity}. Alternatively, we could use the variation of constants method to solve above differential equation directly for $S_{OS}(t)$ which yields
\begin{align*}
\begin{split}
S_{OS}(t) = e^{-\Lambda_{12}(t)} \cdot \left\{ 1 + \int_0^t  e^{- [\Lambda_{01}(u)+\Lambda_{02}(u)-\Lambda_{12}(u)]} d[\Lambda_{12}(u)-\Lambda_{02}(u)]     \right\} 
\end{split}
\end{align*}
in terms of the cumulative transition intensities $\Lambda_{ij}(t)\coloneqq \int_0^t \alpha_{ij}(u)du$. Using that $S_{PFS}(t) = e^{- [\Lambda_{01}(u)+\Lambda_{02}(u)]}$ the latter identity may also be rewritten as 
\begin{align*}
\begin{split}
S_{OS}(t) - S_{PFS}(t) = e^{-\Lambda_{12}(t)} \cdot  \int_0^t  e^{- [\Lambda_{01}(u)+\Lambda_{02}(u)-\Lambda_{12}(u)]} d\Lambda_{01}(u)
\end{split}
\end{align*}
because with $\Lambda'(t)\coloneqq  \Lambda_{01}(u)+\Lambda_{02}(u)-\Lambda_{12}(u)$ we have
\begin{align*}
\begin{split}
&S_{OS}(t) = e^{-\Lambda_{12}(t)} \cdot \left\{ 1 + \int_0^t  e^{- [\Lambda_{01}(u)+\Lambda_{02}(u)-\Lambda_{12}(u)]} d[\Lambda_{12}(u)-\Lambda_{02}(u)]     \right\} \\
&= e^{-\Lambda_{12}(t)} \cdot \left\{ 1 + \int_0^t (-1) \cdot e^{- \Lambda'(u)} d\Lambda'(u) + \int_0^t  e^{- \Lambda'(u)} d\Lambda_{01}(u)     \right\} \\
&= e^{-\Lambda_{12}(t)} \cdot \left\{ 1 + [e^{- \Lambda'(t)} - 1 ] + \int_0^t  e^{- \Lambda'(u)} d\Lambda_{01}(u)     \right\} \\
&=  e^{- [\Lambda_{01}(t)+\Lambda_{02}(t)]} + e^{-\Lambda_{12}(t)} \cdot \int_0^t  e^{- \Lambda'(u)} d\Lambda_{01}(u).
\end{split}
\end{align*}

\subsection{Relations between the Transition Intensities Induced by the Proportional Hazards Assumption}

The relations to be derived hold true for each patient. We may thus fix a single patient. Therefore, the patient index $i$, the treatment group index $x$ and the stage index $\kappa$ may be considered as fixed in this Chapter. Accordingly, the patient index $i$ as well as the index $(x \kappa)$ indicating treatment group and stage will be omitted in all expressions to simply notation in this Chapter.

Let $S_{OS}(s)$ and $S_{PFS}(s)$ denote the true survival functions for OS and PFS, and assume that they are induced by a Markovian illness--death--model with transition intensities $\alpha_{01}(s), \alpha_{02}(s), \alpha_{12}(s)$. Let $S_{OS,0}(s)$, $S_{PFS,0}(s)$ and $\alpha_{01,0}(s), \alpha_{02,0}(s), \alpha_{12,0}(s)$ denote the corresponding expressions for some prefixed reference Markovian illness--death--model. Since the true and the reference survival functions are both derived from Markovian illness--death--models, the following equations holds true by Appendix \ref{subsec:appendix_os_identity}:
\begin{align}\label{A3.eq01}
\begin{split}
\dot{S}_{OS}(s)   & =  [ \alpha_{12}(u) - \alpha_{02}(u) ] \cdot S_{PFS}(t)   - \alpha_{12}(t) \cdot S_{OS}(u) \\
\dot{S}_{OS,0}(s) & =  [ \alpha_{12,0}(u) - \alpha_{02,0}(u) ] \cdot S_{PFS,0}(t)   - \alpha_{12,0}(t) \cdot S_{OS,0}(u).
\end{split}
\end{align}
Notice that $\dot{S}\coloneqq (d/ds) S$ means derivative w.r.t. $s$. Additionally assume validity of the proportional hazards assumption $S_{OS}(s) = S_{OS,0}(s)^{\omega_{OS}}$ and $S_{PFS}(s) = S_{PFS,0}(s)^{\omega_{PFS}}$ for some hazard ratios $0 < \omega_{OS}, \omega_{PFS} < 1$. Proportional hazards imply 
\begin{align}\label{A3.eq02}
\begin{split}
\frac{\dot{S}_{OS}(s)}{S_{OS}(s)} = \omega_{OS} \frac{\dot{S}_{OS,0}(s)}{S_{OS,0}(s)} \quad \text{and} \quad
\frac{\dot{S}_{PFS}(s)}{S_{PFS}(s)} = \omega_{PFS} \frac{\dot{S}_{PFS,0}(s)}{S_{PFS,0}(s)}.
\end{split}
\end{align}
If we take the margins $S_{OS,0}(s)$ and $S_{PFS,0}(s)$ as well as the hazard ratios $\omega_{OS}$ and $\omega_{PFS}$ as prespecified and then combine above equations \eqref{A3.eq01} and \eqref{A3.eq02}, we get the following the linear relationship between the true and referenc transition intensities:
\begin{align*}
\begin{split}
&\alpha_{12}(s) \cdot \left( \frac{S_{PFS,0}(s)^{\omega_{PFS}}}{S_{OS,0}(s)^{\omega_{OS}}} - 1 \right) - \alpha_{02}(s) \cdot \frac{S_{PFS,0}(s)^{\omega_{PFS}}}{S_{OS,0}(s)^{\omega_{OS}}} \\
&=
\alpha_{12,0}(s) \cdot \omega_{OS} \cdot \left( \frac{S_{PFS,0}(s)}{S_{OS,0}(s)} - 1 \right) - \alpha_{02,0}(s) \cdot \omega_{OS} \cdot \frac{S_{PFS,0}(s)}{S_{OS,0}(s)}.
\end{split}
\end{align*}
Since ${\dot{S}_{PFS}(s)}/{S_{PFS}(s)} = \int_0^t (\alpha_{01}(u) + \alpha_{02}(u)) du$ and ${\dot{S}_{PFS,0}(s)}/{S_{PFS,0}(s)} = \int_0^t (\alpha_{01,0}(u) + \alpha_{02,0}(u)) du$, the proportional hazards assumption for PFS is tantamount to the linear relation
\begin{align*}
\begin{split}
\alpha_{01}(t) + \alpha_{02}(t) = \omega_{PFS} \cdot (\alpha_{01,0}(t) + \alpha_{02,0}(t)).
\end{split}
\end{align*}
Thus, a given reference model ${\cal{A}}_0\coloneqq (\alpha_{01,0}(s), \alpha_{02,0}(s), \alpha_{12,0}(s))$ with given margins $S_{PFS,0}(s)$ and $S_{OS,0}(s)$, in combination with the PH assumption for PFS and OS, completely determines the true model ${\cal{A}}\coloneqq  (\alpha_{01}(s), \alpha_{02}(s), \alpha_{12}(s))$ in a simple manner once a third relation between ${\cal{A}}$ and ${\cal{A}}_0$ is given. Such a third relation could, for example, in a setting with negligible treatment related mortality, consist of the statement $\alpha_{02}(s) = \alpha_{02,0}(s) = 0$.

\subsection{Estimation of the Cumulative Hazard Functions $\hat{\Lambda}_{02}$, $\hat{\Lambda}_{12}$ and  $\hat{\Lambda}_{0*}$}

Estimation of the cumulative hazard function is done separately for each treatment group $x$ and stage $\kappa$. Therefore, the parameters $x$ and $\kappa$ may be considered as fixed and the index $(x \kappa)$ indicating treatment group and stage will be omitted in all expressions to simply notation in this Chapter.

In differential notation, the $({\cal{F}}_s)_{s \geq 0}$--martingale $M_i^{OS}(s)$ introduced in Eq. \eqref{eq:pfs_os_indiv_martingale} then reads as 
\begin{align}\label{A4.eq01}
dM_i^{OS}(s) = dN_i^{OS}(s) - I_{02,i}(s) \alpha_{02}(s) ds  - I_{12,i}(s) \alpha_{12}(s) ds %\tag{\#}
\end{align}
with $I_{02,i}(s)$, $I_{12,i}(s)$ acc. to Eq. \eqref{eq:def_at_risk}. Let $j=0$ or $j=1$. To derive an estimator for $\Lambda_{j2}(s)$, we multiply Eq. \eqref{A4.eq01} with $I_{j2,i}(s)$ and use $I_{02,i}(s) \cdot I_{12,i}(s) = 0$ to obtain
\begin{align*}
I_{j2,i}(s) \alpha_{j2}(s) ds = I_{j2,i}(s) dN_i^{OS}(s) - I_{j2,i}(s) dM_i^{OS}(s). 
\end{align*}
Summing up over $i$ and dividing by $Y_{j2}\coloneqq  \sum_i I_{j2,i}$ after multiplication with $J_{j2}\coloneqq  I(Y_{j2}>0)$ yields
\begin{align*}
J_{j2}(s) \alpha_{j2}(s) ds = \sum_i \frac{I_{j2,i}(s)}{Y_{j2}(s)} dN_i^{OS}(s) - \sum_i \frac{I_{j2,i}(s)}{Y_{j2}(s)} dM_i^{OS}(s). 
\end{align*}
Integration over $[0,s]$ finally yields $\int_0^s J_{j2}(u) \alpha_{j2}(u) du = \hat{\Lambda}_{j2}(s) - n^{-1/2} M_{j2}(s)$ with $\hat{\Lambda}_{j2}$ acc. to Eq. \eqref{eq:na_estimators} and with mean--zero $({\cal{F}}_s)_{s \geq 0}$--martingale $M_{j2}$  as defined in Eq. \eqref{eq:02.04}. Using that $\inf_{u \in [0,s]} J_{j2}(u) \to 1$ in probability as $n \to \infty$, this shows that $\hat{\Lambda}_{j2}$ consistently estimates $\Lambda_{j2}$. 

Starting point for an estimator for $\Lambda_{0*}(s)$ is the $({\cal{F}}_s)_{s \geq 0}$--martingale $M_i^{PFS}(s)$ from Eq. \eqref{eq:pfs_os_indiv_martingale} 
\begin{align*}
dM_i^{PFS}(s) = dN_i^{PFS}(s) - I_{02,i}(s) \alpha_{0*}(s) ds. 
\end{align*}
Multiplication with $I_{02,i}(s)$ yields $I_{02,i}(s) \alpha_{0*}(s) ds = I_{02,i}(s) dN_i^{PFS}(s) - I_{02,i}(s) dM_i^{PFS}(s)$. 
Summing up over $i$ and dividing by $Y_{02} \coloneqq  \sum_i I_{02,i}$ after multiplication with $J_{02}\coloneqq  I(Y_{02}>0)$ yields
\begin{align*}
J_{02}(s) \alpha_{0*}(s) ds = \sum_i \frac{I_{02,i}(s)}{Y_{02}(s)} dN_i^{PFS}(s) - \sum_i \frac{I_{02,i}(s)}{Y_{02}(s)} dM_i^{PFS}(s). 
\end{align*}
Integration over $[0,s]$ finally yields $\int_0^s J_{02}(u) \alpha_{0*}(u) du = \hat{\Lambda}_{0*}(s) - n^{-1/2} M_{0*}(s)$ with $\hat{\Lambda}_{0*}$ acc. to Eq. \eqref{eq:na_estimators} and with mean--zero $({\cal{F}}_s)_{s \geq 0}$--martingale $M_{0*}$  as defined in Eq. \eqref{eq:02.04}. Using that $\inf_{u \in [0,s]} J_{02}(u) \to 1$ in probability as $n \to \infty$, this shows that $\hat{\Lambda}_{0*}$ consistently estimates $\Lambda_{0*}$. 

\newpage

\section{Distributional properties of the test statistics}\label{sec:appendix_distributional_properties}

In the following parts, we will prove the asymptotical statements required in sections \ref{sec:single_arm_pfs+os}, \ref{sec:single_arm_os}, \ref{sec:rct_os} and \ref{sec:rct_pfs_os_gap}. In those sections, several quantities are indexed by the treatment group $x \in \{E,C\}$ and the stage $\kappa$. Both parameters can be considered as fixed for our asymptotical assessments. Hence, unless otherwise specified, we will omit inidices $(E),(C),(E\kappa),(C\kappa)$ to simplify notation. We thus e.g. write $n$, $\hat{\Lambda}_{j2}(u)$, $M_{j2}(u), S_{OS}(u), S_{OS}(u), \Psi_{OS}(u),\ldots$ instead of $n^{(E \kappa)}(u)$, $\hat{\Lambda}_{j2}^{(E \kappa)}(u)$, $M_{j2}^{(E \kappa)}(u), S_{OS}^{(E)}(u), \Psi_{OS}^{(\kappa)}(u) \ldots$ etc. whenever no confusion is püossible.\\
Before proceeding with the individual results, we make some more general statements and give corresponding proofs that will be used throughout the following sections.

\begin{lemma}\label{lemma:uniform_lln}
	Let $Y$ denote the size of an "at risk"-set as defined in \eqref{eq:def_at_risk} with the individual censored observations being independent and identically distributed. For any compact interval $[a,b] \subset \mathbb{R}$, we have the convergence
	\begin{equation}\label{eq:convergence_at_risk_infimum}
		\sup_{u \in [a,b]} \left| \frac{Y(u)}{n} - y(u) \right| \overset{\mathbb{P}}{\to} 0
	\end{equation}
	as $n \to \infty$. Here, $y$ denotes the time-dependent expectation $y(u) \coloneqq  \mathbb{E}[I_i(u)] = \mathbb{P}[I_i(u) = 1]$ for the individual "at risk"-indicators $I$ as also defined in \eqref{eq:def_at_risk}. If $y$ is bounded from below on $[a,b]$, it also holds
	\begin{equation*}
		\sup_{u \in [a,b]} \left| \frac{n\cdot J(u)}{Y(u)} - \frac{1}{y(u)} \right| \overset{\mathbb{P}}{\to} 0.
	\end{equation*}
\end{lemma}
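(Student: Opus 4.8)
The plan is to prove the two claims in sequence, the first being the genuine work and the second following by elementary manipulation.

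\textbf{Step 1: Uniform convergence of $Y(u)/n$.} Write $Y(u)/n = \frac{1}{n}\sum_{i=1}^n I_i(u)$, where the $I_i$ are i.i.d. copies of a $\{0,1\}$-valued process with $\mathbb{E}[I_i(u)] = y(u)$. For each fixed $u$, the ordinary (weak or strong) law of large numbers gives $Y(u)/n \to y(u)$. To upgrade this to uniform convergence on a compact interval $[a,b]$, I would invoke a Glivenko--Cantelli type argument. The key structural fact is that the at-risk indicators $I_{02,i}(u) = I(T_i^1\wedge T_i^2\wedge C_i \geq u)$ and $I_{12,i}(u) = I(T_i^1\wedge T_i^2\wedge C_i < u \leq T_i^2\wedge C_i)$ are, as functions of $u$, of \emph{bounded variation uniformly in $i$} (each is piecewise constant with at most two jumps, of total variation at most $2$), and the collection $\{u \mapsto I_i(u)\}$ ranges over a fixed VC class (indicators of half-lines, respectively differences of two such indicators). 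Hence the class is Glivenko--Cantelli, and $\sup_{u\in[a,b]} |Y(u)/n - y(u)| \to 0$ in probability (indeed a.s.). An alternative, more self-contained route: the monotone function $u\mapsto I_{02,i}(u)$ leads to a monotone empirical average, so pointwise convergence on a countable dense set plus monotonicity plus continuity of the limit $y$ (assuming no atoms, or handling atoms by including them in the dense set) upgrades to uniform convergence, exactly as in the classical proof of the Glivenko--Cantelli theorem; for $I_{12,i}$ one decomposes it as a difference of two monotone indicators. I would also remark that the argument requires $y$ to be (right-)continuous, which holds under the standing continuity assumptions on the failure and censoring distributions.

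\textbf{Step 2: From $Y/n$ to $nJ/Y$.} Assume $y$ is bounded below on $[a,b]$, say $y(u)\geq c > 0$. On the event $A_n \coloneqq \{\sup_{u\in[a,b]}|Y(u)/n - y(u)| < c/2\}$, which by Step 1 has probability tending to $1$, we have $Y(u)/n > c/2 > 0$ for every $u\in[a,b]$; in particular $J(u) = I(Y(u)>0) = 1$ throughout $[a,b]$. On $A_n$ we may therefore write
\begin{equation*}
	\frac{nJ(u)}{Y(u)} - \frac{1}{y(u)} = \frac{1}{Y(u)/n} - \frac{1}{y(u)} = \frac{y(u) - Y(u)/n}{(Y(u)/n)\, y(u)},
\end{equation*}
and on $A_n$ the denominator is bounded below by $(c/2)\cdot c = c^2/2$. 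Hence on $A_n$,
\begin{equation*}
	\sup_{u\in[a,b]}\left| \frac{nJ(u)}{Y(u)} - \frac{1}{y(u)} \right| \;\leq\; \frac{2}{c^2}\,\sup_{u\in[a,b]}\left| \frac{Y(u)}{n} - y(u) \right|,
\end{equation*}
which converges to $0$ in probability by Step 1. Since $\mathbb{P}(A_n)\to 1$, this yields the claimed uniform convergence.

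\textbf{Main obstacle.} The only delicate point is the uniformity in Step 1 — promoting pointwise LLN to a uniform statement. The cleanest justification is the monotonicity/Glivenko--Cantelli argument, but it needs a word about the continuity of $y$ (to rule out jumps that could spoil uniform approximation on a dense grid) and about handling $I_{12,i}$, which is not itself monotone and must be split into the difference $I(T_i^1\wedge T_i^2\wedge C_i < u) - I(T_i^2\wedge C_i < u)$ of two monotone pieces, each of which is treated separately and then recombined by the triangle inequality. Everything after that — the passage in Step 2 via the event $A_n$ — is routine.
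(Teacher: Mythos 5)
Your proof is correct and matches the paper's approach: the first statement follows from a uniform law of large numbers (Glivenko--Cantelli), which is exactly what the paper cites from van der Vaart and Wellner, and the second by pushing $Y/n$ through the reciprocal map. If anything, your Step 2 is more careful than the paper's one-line appeal to ``the Continuous Mapping Theorem'': since $x \mapsto 1/x$ is not continuous at $0$ and the indicator $J$ has to be disposed of, a bare CMT citation conceals exactly the restriction to the high-probability event $A_n$ on which $Y/n$ is uniformly bounded away from zero (so $J\equiv 1$ on $[a,b]$ and the reciprocal is uniformly Lipschitz), which you make explicit with the algebraic bound $\sup_{[a,b]}|nJ/Y - 1/y| \le (2/c^2)\sup_{[a,b]}|Y/n - y|$. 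Your remark in Step 1 that $I_{12,i}$ is not monotone but splits as the difference $I(T_i^1\wedge T_i^2\wedge C_i < u) - I(T_i^2\wedge C_i < u)$ of two monotone indicators is also the right observation if one wants to run the classical Glivenko--Cantelli argument directly rather than cite the VC/uniform-LLN machinery.
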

\begin{proof}
	The first statement is a consequence of a uniform Law of Large Numbers as e.g. given in Chapter A.5 of \cite{vanderVaart:1996}. The second one follows from the Continuous Mapping Theorem.
\end{proof}
\noindent In order to establish the asymptotic results in the following sections, we need to make assumptions about the asymptotic probability that patients are in a certain state. In particular, we assume the conditions stated in Theorems IV.1.1 and IV.1.2 of \cite{Andersen:1991} for all involved transitions. With regard to the fulfilment of these conditions by a Markov process with finite state space, we refer to Example IV.1.9 from \cite{Andersen:1991}. If the conditions of Theorem IV.1.2 are fulfilled and the transition intensities are bounded from below, they imply the following condition about the indicator function $J(s)=I(Y(s)>0)$ as defined in \eqref{eq:def_at_risk} for the different states of our model:
\begin{enumerate}[label=(C.\arabic*)]
	\item\label{condition:at_risk_process_general} The process $J$, depending on the sample size $n$, converges with
	\begin{equation*}
		\sqrt{n} \int_0^s (1 - J(u))\,du \overset{\mathbb{P}}{\to} 0.
	\end{equation*}
\end{enumerate}
It is a direct consequence of this condition that integrals of the form
\begin{equation*}
	\int_0^s S(u) J(u)\, d\Lambda(u) \quad \text{and} \int_0^s S(u)\, d\Lambda(u)
\end{equation*}
for a bounded, possibly random function $S$ and a continuous, monotonically increasing function $\Lambda$ are asymptotically equivalent. The condition follows e.g. from the following conditions that might be easier to verify:
\begin{enumerate}[label=(D.\arabic*)]
	\item\label{condition:lb_y} The function $y$ is bounded from below by a positive constant $\gamma > 0$ on $[0,s]$,
	\item\label{condition:lb_y_and_decay} There is some $\varepsilon>0$ s.t. the function $y$ is bounded from below by a positive constant $\gamma > 0$ on $[\varepsilon, s - \varepsilon]$ and 
	\begin{equation*}
		\sqrt{n} \int_0^{\varepsilon} (1-y(u))^n\, du \to 0 \quad \text{and} \quad \sqrt{n} \int_{s - \varepsilon}^s (1-y(u))^n\, du \to 0
	\end{equation*}
	as $n \to \infty$.
\end{enumerate} 
The follwoing Lemma will prove that each of the conditions \ref{condition:lb_y} and \ref{condition:lb_y_and_decay} imply \ref{condition:at_risk_process_general}.
\begin{lemma}\label{lemma:non_empty_at_risk_set}
	If the limiting function $y$ from Lemma \ref{lemma:uniform_lln} fulfills condition \ref{condition:lb_y}, then we have
	\begin{equation}\label{eq:j_with_arb_power}
		n^c \left( 1 - \inf_{u \in [a,b]} J(u) \right) \overset{\mathbb{P}}{\to} 0
	\end{equation}
	as $n \to \infty$ for all $c \in \mathbb{R}$ which implies condition \ref{condition:at_risk_process_general}. If the function fulfills condition \ref{condition:lb_y_and_decay}, the condition \ref{condition:at_risk_process_general} is also fulfilled.
\end{lemma}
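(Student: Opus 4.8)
The plan is to reduce the whole statement to one elementary observation: $1-\inf_{u\in[a,b]}J(u)$ is a $\{0,1\}$-valued random variable, namely the indicator of the event $\{\exists\,u\in[a,b]:Y(u)=0\}$, so that $n^c\bigl(1-\inf_{u\in[a,b]}J(u)\bigr)$ equals either $0$ or $n^c$. Hence, to prove \eqref{eq:j_with_arb_power} it is enough to show $\mathbb{P}\bigl(\exists\,u\in[a,b]:Y(u)=0\bigr)\to 0$: for $c\le 0$ the quantity is bounded by $n^c\to 0$; for $c>0$ and any $\delta>0$ one has $\{n^c(1-\inf_u J(u))>\delta\}=\{\exists\,u\in[a,b]:Y(u)=0\}$ as soon as $n^c>\delta$. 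It suffices to treat $c>0$, since $n^{c'}(1-\inf_u J(u))\le n^{c}(1-\inf_u J(u))$ for $c'\le c$ and $n\ge 1$.

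First I would prove this vanishing probability under \ref{condition:lb_y}. If $Y(u_0)=0$ for some $u_0\in[a,b]$, then $\bigl|Y(u_0)/n-y(u_0)\bigr|=y(u_0)\ge\gamma$, so
\[
\{\exists\,u\in[a,b]:Y(u)=0\}\ \subseteq\ \Bigl\{\sup_{u\in[a,b]}\bigl|Y(u)/n-y(u)\bigr|\ge\gamma\Bigr\},
\]
and the probability on the right tends to $0$ by \eqref{eq:convergence_at_risk_infimum} of Lemma \ref{lemma:uniform_lln}. This gives \eqref{eq:j_with_arb_power}. That it implies \ref{condition:at_risk_process_general} is then immediate, because $0\le\int_0^s(1-J(u))\,du\le s\bigl(1-\inf_{u\in[0,s]}J(u)\bigr)$, so applying \eqref{eq:j_with_arb_power} on $[0,s]$ with $c=\tfrac12$ yields $\sqrt{n}\int_0^s(1-J(u))\,du\overset{\mathbb{P}}{\to}0$.

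Next I would treat \ref{condition:lb_y_and_decay}. With $\varepsilon>0$ as there, I would split $\sqrt{n}\int_0^s(1-J(u))\,du$ over $[0,\varepsilon]$, $[\varepsilon,s-\varepsilon]$ and $[s-\varepsilon,s]$. On $[\varepsilon,s-\varepsilon]$ the limit $y$ is bounded below by $\gamma>0$, so the middle piece is at most $(s-2\varepsilon)\sqrt{n}\bigl(1-\inf_{u\in[\varepsilon,s-\varepsilon]}J(u)\bigr)$, which tends to $0$ in probability by the first part applied on $[\varepsilon,s-\varepsilon]$. For the two outer pieces I would argue in mean: since the patient-wise at-risk indicators from \eqref{eq:def_at_risk} are independent across patients, $\mathbb{P}(Y(u)=0)=(1-y(u))^n$, whence
\[
\mathbb{E}\Bigl[\sqrt{n}\int_0^{\varepsilon}(1-J(u))\,du\Bigr]=\sqrt{n}\int_0^{\varepsilon}(1-y(u))^n\,du\ \longrightarrow\ 0
\]
by the first displayed limit in \ref{condition:lb_y_and_decay}, and similarly for $\int_{s-\varepsilon}^{s}$; nonnegativity then upgrades convergence in mean to convergence in probability. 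Summing the three pieces gives \ref{condition:at_risk_process_general}.

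The computations here are routine; the one step that needs genuine care — and the only place where a naive approach could stall — is the opening reduction. One has to notice that \emph{no} rate of decay for $\mathbb{P}(\exists\,u\in[a,b]:Y(u)=0)$ is required in order to absorb the polynomial factor $n^c$: because $1-\inf_u J(u)$ takes only the values $0$ and $1$, the product $n^c(1-\inf_u J(u))$ converges to $0$ in probability the moment the event $\{1-\inf_u J(u)=1\}$ is asymptotically negligible, which is exactly what Lemma \ref{lemma:uniform_lln} delivers. The remaining ingredient is the elementary identity $\mathbb{P}(Y(u)=0)=(1-y(u))^n$, which rests on the assumed independence of the patient-wise observations.
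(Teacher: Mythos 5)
Your proof is correct and follows essentially the same route as the paper's: reduce \eqref{eq:j_with_arb_power} to the vanishing of $\mathbb{P}(\exists\,u\in[a,b]:Y(u)=0)$ via Lemma \ref{lemma:uniform_lln} under \ref{condition:lb_y}, then handle \ref{condition:lb_y_and_decay} by splitting $[0,s]$ into $[0,\varepsilon]$, $[\varepsilon,s-\varepsilon]$, $[s-\varepsilon,s]$ and arguing in $L^1$ on the outer pieces using $\mathbb{P}(Y(u)=0)=(1-y(u))^n$. Your direct containment $\{\exists\,u:Y(u)=0\}\subseteq\{\sup_u|Y(u)/n-y(u)|\ge\gamma\}$ is a cleaner statement of what the paper achieves by decomposing $\mathbb{P}(Y^\star=0)$ into two summands, but the idea is identical, and you also correctly fill in the elementary bound $\int_0^s(1-J(u))\,du\le s\bigl(1-\inf_{[0,s]}J(u)\bigr)$ that the paper leaves implicit when concluding \ref{condition:at_risk_process_general}.
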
 
\begin{proof}
	We start with the first part of the statement. As a consequence of Lemma \ref{lemma:uniform_lln} and our assumption, we also have 
	\begin{equation*}
		\inf_{u \in [a,b]} Y(u) \eqqcolon Y^{\star} \overset{\mathbb{P}}{\to} y^{\star} \coloneqq  \inf_{u \in [a,b]} y(u) > 0.
	\end{equation*}
	Thus, for any $\delta > 0$, we have
	\begin{align*}
			&\mathbb{P}\left( n^k \cdot [1 - \inf_{u \in [a,b]} J_j(u) ] > \delta \right) \\
			= & \mathbb{P}\left( Y^{\star} = 0 \right) \\
			= & \mathbb{P}\left( Y^{\star} = 0, \left|\frac{Y^{\star}}{n}- y^{\star}\right| > \gamma/2 \right) + \mathbb{P}\left( Y^{\star} = 0, \left|\frac{Y^{\star}}{n} - y^{\star}\right| \leq \gamma/2 \right).	 
	\end{align*}
	The first summand in above equation vanishes as $n \to \infty$, because of \eqref{eq:convergence_at_risk_infimum}. The second summand equals zero, because $y^{\star} - \gamma/2 \geq \gamma/2 > 0$ by our assumptions, and thus $0 \notin [y^{\star} - \gamma/2, y^{\star} + \gamma/2]$. This proves the first part as \eqref{eq:j_with_arb_power} clearly implies \ref{condition:at_risk_process_general}.\\
	For the second part (i.e. assuming \ref{condition:lb_y_and_decay}), we decompose
	\begin{equation*}
		\int_0^s (1 - J(u))\, du = \int_0^{\varepsilon} (1 - J(u))\, du + \int_{\varepsilon}^{s-\varepsilon} (1 - J(u))\, du + \int_{s-\varepsilon}^{s} (1 - J(u))\, du.
	\end{equation*}
	The second summand vanishes by the same reasoning as in the first part. For the first and third summand, the reasoning is analogous, hence, we concentrate on the first summand. As convergence in $L^1$ implies convergence in probability, it suffices to show
	\begin{equation*}
		\int_0^{\varepsilon} (1 - J(u))\, du \overset{L^1}{\to} 0.
	\end{equation*}
	By Fubini's theorem and the independence of the individual observation this is equivalent to
	\begin{align*}
		\mathbb{E}\left[ \sqrt{n} \int_0^{\varepsilon} (1 - J(u))\, du \right] & =  \sqrt{n} \int_0^{\varepsilon} \mathbb{E}\left[ \prod_{i=1}^n (1 - I_i(u)) \right] \, du\\
		& = \sqrt{n} \int_0^{\varepsilon} \prod_{i=1}^n \mathbb{E}\left[  (1 - I_i(u)) \right] \, du\\
		& = \sqrt{n} \int_0^{\varepsilon} (1 - y(u))^n \, du
	\end{align*}
	which is guaranteed by \ref{condition:lb_y_and_decay}. 
\end{proof}
Hence, if $y$ is continuous and for all timepoints in $[0,s]$ (including its boundary points) there is a positive probability that a patient is in a certain state, condition \ref{condition:lb_y} holds. That might not be the case for the progredient state at timepoint 0. To show the pracitcal relevance of condition \ref{condition:lb_y_and_decay}, we will show that it holds for the time-homogeneous Markov model in the follwoing statement. The occupation probability function of the progredient state in the time-homogenenous Markov model can e.g. be found in \cite{Meller:2019}. 
\begin{lemma}
	The function $y\colon \mathbb{R}_+ \to [0,1]$ that describes the occupation probability of the progredient state in the time-homogeneous Markovian illness-death model which is given by 
	\begin{equation*}
		y(u) \coloneqq  \frac{\alpha_{01}}{\alpha_{12} - \alpha_{01} - \alpha_{02}} \left( \exp(-(\alpha_{01} + \alpha_{02})u) - \exp(-\alpha_{12}u) \right)
	\end{equation*}
	fulfills \ref{condition:lb_y_and_decay} for any closed interval $[0,\varepsilon]$ with $\varepsilon>0$. Here, $\alpha_{kl}$ denotes the (time-constant) transition intensities.
\end{lemma}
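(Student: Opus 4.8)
The plan is to verify condition \ref{condition:lb_y_and_decay} directly, letting the given interval $[0,\varepsilon]$ play the role of the reference interval $[0,s]$ there and choosing the auxiliary cut-off point (renamed $\varepsilon_0\in(0,\varepsilon)$ to avoid a clash of notation) small enough to control the behaviour of $y$ near the origin. First I would record the elementary structure of $y$: it is continuous on $[0,\infty)$ with $y(0)=0$, and $y(u)>0$ for every $u>0$ whenever $\alpha_{01}>0$. Indeed, for $u>0$ the bracketed difference $e^{-(\alpha_{01}+\alpha_{02})u}-e^{-\alpha_{12}u}$ has the same sign as $\alpha_{12}-\alpha_{01}-\alpha_{02}$, so the quotient defining $y(u)$ is positive; in the coinciding case $\alpha_{12}=\alpha_{01}+\alpha_{02}$ the formula is read as its limit $y(u)=\alpha_{01}\,u\,e^{-\alpha_{12}u}$, again positive. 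Consequently $y$ is bounded below by a positive constant on every compact subinterval of $(0,\infty)$. This already settles two of the three requirements of \ref{condition:lb_y_and_decay}: boundedness of $y$ on the middle interval $[\varepsilon_0,\varepsilon-\varepsilon_0]$ holds for any $\varepsilon_0\in(0,\varepsilon/2)$, and, since $y(u)\geq\gamma$ for some $\gamma\in(0,1)$ on $[\varepsilon-\varepsilon_0,\varepsilon]$, we get $\sqrt{n}\int_{\varepsilon-\varepsilon_0}^{\varepsilon}(1-y(u))^n\,du\leq \varepsilon\,\sqrt{n}\,(1-\gamma)^n\to 0$.

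The only substantive point is the decay integral at the left endpoint. For this I would establish the one-sided asymptotics $y(u)=\alpha_{01}\,u+O(u^2)$ as $u\downarrow 0$, obtained by Taylor-expanding the two exponentials in the numerator (the expansion is the same in the coinciding-intensities case, where $y(u)=\alpha_{01}u\,e^{-\alpha_{12}u}$). Hence there are $\delta>0$ and $c\coloneqq\alpha_{01}/2>0$ with $y(u)\geq c\,u$ on $[0,\delta]$; shrinking $\delta$ if necessary we may also assume $c\,\delta<1$. Taking $\varepsilon_0\coloneqq\min(\delta,\varepsilon/3)$ we have, on $[0,\varepsilon_0]$, the chain $0\leq 1-y(u)\leq 1-c\,u\leq e^{-c\,u}$, and therefore
\begin{equation*}
\sqrt{n}\int_0^{\varepsilon_0}(1-y(u))^n\,du\;\leq\;\sqrt{n}\int_0^{\varepsilon_0}e^{-c\,n\,u}\,du\;\leq\;\frac{1}{c\,\sqrt{n}}\;\longrightarrow\;0 .
\end{equation*}
Together with the two bounds from the previous paragraph this shows that \ref{condition:lb_y_and_decay} holds with this $\varepsilon_0$, which by Lemma \ref{lemma:non_empty_at_risk_set} also yields \ref{condition:at_risk_process_general}.

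The hard part is less a genuine difficulty than a matter of care: one must make the near-origin lower bound $y(u)\gtrsim u$ precise and uniform, including the degenerate case $\alpha_{12}=\alpha_{01}+\alpha_{02}$ where the displayed formula for $y$ is a $0/0$ expression to be interpreted as a limit. Once the linear bound $y(u)\geq c\,u$ is in place, the elementary estimate $1-c\,u\leq e^{-c\,u}$ turns the quantity $\sqrt{n}\int(1-y)^n$ into something of order $n^{-1/2}$ and the conclusion is immediate; everything else is continuity of $y$ and compactness. It is worth stating explicitly at the outset the standing hypothesis $\alpha_{01}>0$ (and, implicitly, positivity of the remaining intensities), since if $\alpha_{01}=0$ then $y\equiv 0$, the progredient state is never occupied, and the statement fails — so this assumption is genuinely needed.
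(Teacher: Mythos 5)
Your proof is correct and rests on the same core idea as the paper's: establish a linear lower bound $y(u)\gtrsim u$ near the origin, convert it via $1-x\leq e^{-x}$ into an integrable exponential bound, and conclude that the integral is $O(n^{-1/2})$. The difference is in how the linear lower bound is obtained. The paper applies the mean value theorem to $x\mapsto e^{-ux}$ at the two points $\alpha_{01}+\alpha_{02}$ and $\alpha_{12}$, which produces the exact inequality $y(u)\geq\alpha_{01}\,u\,e^{-\alpha_{12}u}$ valid on all of $\mathbb{R}_+$; replacing $e^{-\alpha_{12}u}$ by its infimum $e^{-\alpha_{12}\varepsilon}$ over $[0,\varepsilon]$ then gives a single exponential bound on the whole interval, with no need to split the integral. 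You instead derive $y(u)=\alpha_{01}u+O(u^2)$ by Taylor expansion, obtain $y(u)\geq (\alpha_{01}/2)u$ only on a possibly smaller interval $[0,\delta]$, and then patch the rest of $[0,\varepsilon]$ by compactness. Both routes work; the MVT route is slightly tighter and avoids the partition. Where your write-up genuinely improves on the paper's is in two points of care: (i) you treat the degenerate case $\alpha_{12}=\alpha_{01}+\alpha_{02}$ explicitly, where the displayed formula is a removable $0/0$ and the paper's ``without loss of generality, otherwise the signs change'' does not actually cover it; and (ii) you make the standing hypothesis $\alpha_{01}>0$ explicit, correctly observing that the lemma is false otherwise since then $y\equiv 0$. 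Both are worth making explicit.
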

\begin{proof}
	Without loss of generality, we assume $\alpha_{12} > \alpha_{01} + \alpha_{02}$. Otherwise the signs in the following calculations will just change. At first, we want to bound $y$ from below. For any fixed $u$, consider the function $f_u(x) = \exp(-ux)$. From the mean value theorem, we know that
	\begin{equation*}
		\frac{f_u(\alpha_{01} + \alpha_{02}) - f_u(\alpha_{12})}{\alpha_{01} + \alpha_{02} - \alpha_{12}} = f'_u(c) = -u \cdot \exp(-c\cdot u) \leq -u \cdot \exp(-\alpha_{12} \cdot u)
	\end{equation*}
	for some $c \in [\alpha_{01} + \alpha_{02}, \alpha_{12}]$ and hence
	\begin{align*}
		y(u) \geq & \frac{\alpha_{01}}{\alpha_{12} - \alpha_{01} - \alpha_{02}} \cdot (\alpha_{01} + \alpha_{02} - \alpha_{12}) \cdot (-u) \cdot \exp(-\alpha_{12} \cdot u)\\
		= & \alpha_{01}\cdot u \cdot \exp(-\alpha_{12} \cdot u).
	\end{align*}
	As $(1-x) \leq \exp(-x)$ for all $x > 0$, we then have
	\begin{align*}
		1 - y(u) \leq & 1 - \alpha_{01}\cdot u \cdot \exp(-\alpha_{12} \cdot u)\\
		& \leq \exp(-\alpha_{01}\cdot u \cdot \exp(-\alpha_{12} \cdot u))\\
		& \leq \exp(-\underbrace{\alpha_{01} \cdot \exp(-\alpha_{12} \cdot \varepsilon)}_{\eqqcolon c_0} \cdot u).
	\end{align*}
	Now, we can bound the complete exprresion from above as
	\begin{align*}
		&\sqrt{n} \int_0^{\varepsilon} (1 - y(u))^n \, du\\
		\leq & \sqrt{n} \int_0^{\varepsilon} \exp(-c_0 \cdot n \cdot u) \, du\\
		= & \sqrt{n} \left( -\frac{1}{c_0 \cdot n} \cdot \exp(-c_0 \cdot n \cdot \varepsilon) + \frac{1}{c_0 \cdot n} \right)\\
		= &\frac{\sqrt{n}}{c_0 \cdot n} \left( 1 - \exp(-c_0 \cdot n \cdot \varepsilon) \right)\\
		\to & 0.
	\end{align*}
\end{proof}
\noindent We proceed with some further techincal results that will be required later.
\begin{lemma}\label{lemma:cmp_extended}
	Let $f \colon S \to \mathbb{R}$ be a continuous function and $S \subseteq \mathbb{R}^m$. Let the $S$-valued sequence of random variables $\hat{a}$ converge in probability against $a \in S$. Furthermore let $\hat{f}$ be a sequence of random functions that map from $S$ to $\mathbb{R}$ s.t.
	\begin{equation*}
		\sup_{x \in B_{\varepsilon}(a)} |\hat{f}(x) - f(x)| \overset{\mathbb{P}}{\to} 0
	\end{equation*}
	for some $\varepsilon > 0$. Then we have
	\begin{equation*}
		\hat{f}(\hat{a}) \overset{\mathbb{P}}{\to} f(a).
	\end{equation*}
\end{lemma}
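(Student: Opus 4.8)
The plan is to prove this by the standard triangle-inequality decomposition underlying the extended continuous mapping theorem, preceded by a reduction to the high-probability event on which the random argument $\hat{a}$ lies inside the ball $B_{\varepsilon}(a)$, which is the only region where the uniform control on $\hat{f}$ is available. First I would note that, on the event $\{\hat{a} \in B_{\varepsilon}(a)\}$,
\[
|\hat{f}(\hat{a}) - f(a)| \le |\hat{f}(\hat{a}) - f(\hat{a})| + |f(\hat{a}) - f(a)| \le \sup_{x \in B_{\varepsilon}(a)} |\hat{f}(x) - f(x)| + |f(\hat{a}) - f(a)|.
\]
The first summand on the right is precisely the quantity assumed to converge to $0$ in probability; the second summand converges to $0$ in probability by the ordinary continuous mapping theorem applied to the continuous function $f$ together with $\hat{a} \overset{\mathbb{P}}{\to} a$.

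Then I would make this quantitative. Fix $\eta, \delta > 0$. Since $\hat{a} \overset{\mathbb{P}}{\to} a$, the probability of the bad event $\{\hat{a} \notin B_{\varepsilon}(a)\}$ is eventually smaller than $\delta/3$. By continuity of $f$ at $a$, choose $\varepsilon' \in (0,\varepsilon]$ such that $|x - a| < \varepsilon'$ implies $|f(x) - f(a)| < \eta/2$; then $\mathbb{P}(|f(\hat{a}) - f(a)| > \eta/2) \le \mathbb{P}(|\hat{a} - a| \ge \varepsilon')$ is eventually smaller than $\delta/3$. By hypothesis, $\mathbb{P}\bigl(\sup_{x \in B_{\varepsilon}(a)} |\hat{f}(x) - f(x)| > \eta/2\bigr)$ is eventually smaller than $\delta/3$. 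Combining the three bounds with the displayed inequality above yields $\mathbb{P}(|\hat{f}(\hat{a}) - f(a)| > \eta) < \delta$ for all sufficiently large $n$, which is exactly the assertion $\hat{f}(\hat{a}) \overset{\mathbb{P}}{\to} f(a)$.

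I do not expect any serious obstacle here; the argument is elementary. The only point requiring a little care — and it is a mild one — is that the uniform convergence $\sup_{x \in B_{\varepsilon}(a)} |\hat{f}(x) - f(x)| \overset{\mathbb{P}}{\to} 0$ is available only on $B_{\varepsilon}(a)$, so the bound $|\hat{f}(\hat{a}) - f(\hat{a})| \le \sup_{x \in B_{\varepsilon}(a)} |\hat{f}(x) - f(x)|$ may be invoked only after intersecting with $\{\hat{a} \in B_{\varepsilon}(a)\}$; the complementary event is precisely what the first $\delta/3$ term absorbs. Measurability of the composition $\hat{f}(\hat{a})$ is taken for granted throughout, as is customary for statements of this type.
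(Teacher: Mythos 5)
Your proposal is correct and follows essentially the same route as the paper: the identical triangle-inequality split $|\hat{f}(\hat{a}) - f(a)| \le |\hat{f}(\hat{a}) - f(\hat{a})| + |f(\hat{a}) - f(a)|$, the continuous mapping theorem for the second term, and the restriction to the high-probability event $\{\hat{a} \in B_{\varepsilon}(a)\}$ so the uniform bound applies to the first term. You merely spell out the $\delta/3$ bookkeeping that the paper leaves implicit.
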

\begin{proof}
	The difference $|\hat{f}(\hat{a}) - f(a)|$ can be bounded from above by
	\begin{equation*}
		|\hat{f}(\hat{a}) - f(\hat{a})| + |f(\hat{a}) - f(a)|.
	\end{equation*}
	The second part vanishes in probability by the Continuous Mapping Theorem. Concerning the first part, $\hat{a} \in B_{\varepsilon}(a)$ with a probability arbitrarily close to 1 if $n$ is large enough. If that is the case, the first summand can be bounded from above by $\sup_{x \in B_{\varepsilon}(a)} |\hat{f}(x) - f(x)|$ which vanishes in probability.
\end{proof}

\begin{lemma}\label{lemma:contmap_domconv}
	Let $\mathbb{R}^m$ denote the $m$-dimensional real numbers for some $m>0$. For each $x \in R^m$, let a sequence of random variables $(X_n(x))_{n=,0,1,2,\ldots}$ and a (limiting) random variable $X(x)$ be given such that $X_n(x) \to X(x)$ in probability as $n \to \infty$. Moreover let $\Gamma:\mathbb{R} \to \mathbb{R}$ be a bounded continuous map, and let $f:\mathbb{R}^m \to \mathbb{R}$ be a positive integrable function, i.e. $\int_{\mathbb{R}^m} f(x)dx < \infty$. Then $\int_{\mathbb{R}^m} \Gamma(X_n(x)) f(x)dx \to \int_{\mathbb{R}^m} \Gamma(X(x)) f(x)dx$ in probability as $n \to \infty$.
\end{lemma}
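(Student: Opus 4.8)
The plan is to prove the stronger statement that the real random variable
$D_n \coloneqq \int_{\mathbb{R}^m} \Gamma(X_n(x))\, f(x)\, dx - \int_{\mathbb{R}^m} \Gamma(X(x))\, f(x)\, dx$
converges to $0$ in $L^1$, which a fortiori gives convergence in probability. Write $C \coloneqq \sup_{t \in \mathbb{R}} |\Gamma(t)| < \infty$ for the bound on $\Gamma$. Since both integrands are dominated in absolute value by $C \cdot f(x)$ and $f$ is integrable, all integrals appearing below are finite and the random variables are well defined; here we use, as is implicit in the statement, that $(x,\omega)\mapsto X_n(x)(\omega)$ and $(x,\omega)\mapsto X(x)(\omega)$ are jointly measurable, so that Tonelli's and Fubini's theorems are applicable.

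First I would estimate, using the triangle inequality and then Tonelli's theorem for the nonnegative integrand,
$\mathbb{E}|D_n| \leq \mathbb{E}\!\int_{\mathbb{R}^m} |\Gamma(X_n(x)) - \Gamma(X(x))|\, f(x)\, dx = \int_{\mathbb{R}^m} \mathbb{E}\big[\, |\Gamma(X_n(x)) - \Gamma(X(x))| \,\big]\, f(x)\, dx$.
It therefore suffices to show that the deterministic $x$-integral on the right tends to $0$ as $n \to \infty$.

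To that end, fix $x \in \mathbb{R}^m$. By hypothesis $X_n(x) \to X(x)$ in probability, so by the Continuous Mapping Theorem $\Gamma(X_n(x)) \to \Gamma(X(x))$ in probability; since the differences $|\Gamma(X_n(x)) - \Gamma(X(x))|$ are uniformly bounded by $2C$, convergence in probability upgrades to convergence in $L^1$, i.e. $g_n(x) \coloneqq \mathbb{E}\big[\,|\Gamma(X_n(x)) - \Gamma(X(x))|\,\big] \to 0$ as $n \to \infty$. Moreover $0 \leq g_n(x) \leq 2C$ for every $n$ and $x$, and the constant function $x \mapsto 2C$ is $f$-integrable because $\int_{\mathbb{R}^m} f(x)\, dx < \infty$. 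Hence the dominated convergence theorem applied to $\int_{\mathbb{R}^m} g_n(x)\, f(x)\, dx$ shows that this integral converges to $0$. Combining with the previous paragraph gives $\mathbb{E}|D_n| \to 0$, hence $D_n \to 0$ in probability, which is exactly the assertion of the lemma.

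The only point that requires attention is the measurability and integrability bookkeeping that justifies the application of Tonelli's theorem and of the two dominated/bounded convergence arguments; once joint measurability of the processes $X_n(\cdot)$ and $X(\cdot)$ is granted, each step is entirely routine and there is no genuine analytic obstacle.
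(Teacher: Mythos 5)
Your argument is correct and is essentially the same as the paper's: both pass to $L^1$ convergence of the integral, both reduce via Tonelli to the pointwise statement $\mathbb{E}\bigl[\,|\Gamma(X_n(x))-\Gamma(X(x))|\,\bigr]\to 0$, both obtain that pointwise statement from the continuous mapping theorem plus boundedness of $\Gamma$ (you phrase it as bounded convergence, the paper as uniform integrability — the same fact), and both finish with dominated convergence in the $x$-variable using the constant $2C$ as majorant and the integrability of $f$. Your version is a touch more explicit than the paper's about where measurability is needed for Tonelli/Fubini to apply, which is a small but genuine improvement in rigor; otherwise the two proofs are interchangeable.
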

\begin{proof}
	By continuous mapping theorem, we have $\Gamma(X_n(x)) \to \Gamma(X(x))$ in probability as $n \to \infty$. Since $\Gamma(\cdot)$ is bounded by assumption, the familty of random variables $(\Gamma(X_n(x)))_{n=,0,1,2,\ldots}$ is uniformly integrable. Thus $\Gamma(X_n(x)) \to \Gamma(X(x))$ in mean as $n \to \infty$, i.e. $\mathbb{E}[| \Gamma(X_n(x)) - \Gamma(X(x))  |] \to 0$ as $n \to \infty$. We then conclude from Tonelli's theorem and dominated convergence theorem that
	\begin{align*}
		\begin{split}
			\lim_{n \to \infty} \mathbb{E} \left[ \int_{\mathbb{R}^m} | \Gamma(X_n(x)) - \Gamma(X(x))  | f(x)dx \right]
			&= \lim_{n \to \infty} \int_{\mathbb{R}^m} \mathbb{E} \left[ | \Gamma(X_n(x)) - \Gamma(X(x))  | \right] f(x)dx \\
			&=  \int_{\mathbb{R}^m} \lim_{n \to \infty} \mathbb{E} \left[ | \Gamma(X_n(x)) - \Gamma(X(x))  | \right] f(x)dx \\
			&= 0
		\end{split}
	\end{align*}
	Consequently, $\int_{\mathbb{R}^m} \Gamma(X_n(x)) f(x)dx \to \int_{\mathbb{R}^m} \Gamma(X(x))  f(x)dx$ in mean and thus also in probability as $n \to \infty$.
\end{proof}

\begin{lemma}\label{lemma:convinprob_inverse}
	Let $(\hat{G}_n)_{n \in \mathbb{N}}$ a sequence of continuous, strictly decreasing random functions from the real numbers to the interval $[0,1]$ that converges pointwise in probability to a deterministic function $G$ for all real numbers. Then
	\begin{equation*}
		\hat{G}_n^{-1}(\alpha) \underset{n\to \infty}{\overset{\mathbb{P}}{\longrightarrow}} {G}^{-1}(\alpha)
	\end{equation*}
	for all $\alpha \in [0,1]$ Here, $\hat{G}_n^{-1}$ and $G^{-1}$ denote the inverses of the respective functions.
\end{lemma}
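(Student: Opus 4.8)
The plan is a one-parameter sandwiching argument: I will turn the assumed pointwise convergence of $\hat{G}_n$ at two fixed points into the convergence of $\hat{G}_n^{-1}(\alpha)$, using nothing more than the strict monotonicity of the $\hat{G}_n$ and an appeal to the intermediate value theorem. Fix $\alpha$ and set $q \coloneqq G^{-1}(\alpha)$, which is a well-defined real number since --- as is implicit in the statement --- the limiting function $G$ is continuous and strictly decreasing. Fix $\varepsilon > 0$. Strict monotonicity of $G$ at $q$ together with $G(q) = \alpha$ gives the strict inequalities $G(q - \varepsilon) > \alpha > G(q + \varepsilon)$, so the number $\delta_\varepsilon \coloneqq \min\{\, G(q-\varepsilon) - \alpha,\ \alpha - G(q+\varepsilon)\,\}$ is strictly positive.

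First I would use the hypothesis of pointwise convergence in probability at the two fixed points $q - \varepsilon$ and $q + \varepsilon$: on an event $A_n$ whose probability tends to $1$ we have simultaneously $|\hat{G}_n(q-\varepsilon) - G(q-\varepsilon)| < \delta_\varepsilon$ and $|\hat{G}_n(q+\varepsilon) - G(q+\varepsilon)| < \delta_\varepsilon$, and hence, by the choice of $\delta_\varepsilon$, $\hat{G}_n(q-\varepsilon) > \alpha$ and $\hat{G}_n(q+\varepsilon) < \alpha$. On $A_n$, continuity of $\hat{G}_n$ and the intermediate value theorem on $[q-\varepsilon,\, q+\varepsilon]$ show that $\alpha$ lies in the image of $\hat{G}_n$, so $\hat{G}_n^{-1}(\alpha)$ is well defined; and since $\hat{G}_n$ is strictly decreasing, the chain $\hat{G}_n(q-\varepsilon) > \alpha = \hat{G}_n\big(\hat{G}_n^{-1}(\alpha)\big) > \hat{G}_n(q+\varepsilon)$ forces $q - \varepsilon < \hat{G}_n^{-1}(\alpha) < q + \varepsilon$. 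Therefore $\mathbb{P}\big(|\hat{G}_n^{-1}(\alpha) - G^{-1}(\alpha)| \le \varepsilon\big) \ge \mathbb{P}(A_n) \to 1$, and since $\varepsilon > 0$ was arbitrary this is precisely the asserted convergence in probability.

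The only genuinely delicate point --- and the nearest thing to an obstacle --- is that for small $n$ the (random) range of $\hat{G}_n$ need not contain $\alpha$, so $\hat{G}_n^{-1}(\alpha)$ could a priori be undefined; this is exactly why the argument is routed through the intermediate value theorem on the fixed compact interval $[q-\varepsilon, q+\varepsilon]$, which establishes well-definedness and the two-sided bound in one stroke. Beyond that, the only care needed is bookkeeping the orientation --- the $\hat{G}_n$ are decreasing, so $a \le b \iff \hat{G}_n(a) \ge \hat{G}_n(b)$ --- and, if one wants the statement literally for all $\alpha \in [0,1]$ including the endpoints, noting that there the same sandwiching works with one-sided neighbourhoods (or that those cases are irrelevant in the applications, where $\alpha$ is a strictly interior significance level). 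Everything else is routine.
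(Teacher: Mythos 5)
Your proof is correct and follows the same essential strategy as the paper's: a sandwich argument that applies the hypothesized pointwise convergence at two fixed test points flanking $q = G^{-1}(\alpha)$, then uses strict monotonicity of $\hat{G}_n$ to trap $\hat{G}_n^{-1}(\alpha)$ between them. The difference is one of parametrization. You fix $\varepsilon$ directly in the abscissa, take the test points to be $q \pm \varepsilon$, and then derive the needed ordinate tolerance $\delta_\varepsilon = \min\{G(q-\varepsilon) - \alpha,\ \alpha - G(q+\varepsilon)\}$ from strict monotonicity of $G$. The paper instead fixes a tolerance $\delta$ in the ordinate, invokes continuity of $G^{-1}$ to find $\delta$ making $G^{-1}(\alpha\pm\delta)$ lie within $\varepsilon$ of $q$, and uses \emph{those} as the test points. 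Both routes deliver the same probabilistic sandwich $q-\varepsilon < \hat{G}_n^{-1}(\alpha) < q+\varepsilon$ on a high-probability event; your version is marginally cleaner since it bypasses the explicit appeal to continuity of the inverse. You also add one genuine improvement over the paper's argument: you observe that for finite $n$ the random range of $\hat{G}_n$ need not contain $\alpha$, so $\hat{G}_n^{-1}(\alpha)$ may a priori be undefined, and you repair this by routing through the intermediate value theorem on $[q-\varepsilon, q+\varepsilon]$ to establish well-definedness and the two-sided bound in one step. The paper's proof silently assumes $\hat{G}^{-1}(\alpha)$ exists, as it does that $G$ is continuous and strictly decreasing; you are right that both assumptions are needed and are only implicit in the statement.
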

\begin{proof}
	First of all, the generalized inverses exist as the functions are strictly increasing. Now, let $\epsilon>0$, $\gamma>0$ and $\alpha \in (0,1)$. We will show there is a natural number $n_0$ such that 
	\begin{align}\label{A7.eq125}
		\mathbb{P}\left( |\hat{G}^{-1}(\alpha) - {G}^{-1}(\alpha)| > \epsilon \right) \leq \gamma \qquad \text{for all } n \geq n_0.
	\end{align}
	Since $G^{-1}$ is continuous there is a $\delta > 0$ such that
	\begin{align*}
		|{G}^{-1}(\alpha + \delta) - {G}^{-1}(\alpha)| \leq \epsilon \text{ and }
		|{G}^{-1}(\alpha - \delta) - {G}^{-1}(\alpha)| \leq \epsilon.
	\end{align*}
	Since $G^{-1}$ is strictly decreasing, this implies
	\begin{align}\label{A7.eq127}
		{G}^{-1}(\alpha + \delta) \geq {G}^{-1}(\alpha) - \epsilon \text{ and }
		{G}^{-1}(\alpha - \delta) \leq {G}^{-1}(\alpha ) + \epsilon.
	\end{align}
	Since $\hat{G}(c) \underset{n\to \infty}{\overset{\mathbb{P}}{\longrightarrow}} {G}(c)$ for all $c \in R$, there is a natural number $n_0$ such that
	\begin{align*}
		\mathbb{P}\left( |\hat{G}({G}^{-1}(\alpha \pm \delta)) - {G}({G}^{-1}(\alpha \pm \delta))| \geq \delta \right) \leq \gamma/2
		\qquad \text{for all } n \geq n_0.
	\end{align*}
	Since ${G}({G}^{-1}(\alpha \pm \delta)) = \alpha \pm \delta$, this is tantamount to
	\begin{align}\label{A7.eq129}
		\mathbb{P}\left( |\hat{G}({G}^{-1}(\alpha \pm \delta)) - (\alpha \pm \delta)| < \delta \right) \geq 1- \gamma/2
		\qquad \text{for all } n \geq n_0.
	\end{align}
	Now, $|\hat{G}({G}^{-1}(\alpha + \delta)) - (\alpha + \delta)| < \delta$ implies $\hat{G}({G}^{-1}(\alpha + \delta)) > \alpha$. Likewise, $|\hat{G}({G}^{-1}(\alpha + \delta)) - (\alpha - \delta)| < \delta$ implies $\hat{G}({G}^{-1}(\alpha + \delta)) < \alpha$. We thus conclude from \eqref{A7.eq129} that for all $n \geq n_0$:
	\begin{align*}
		\mathbb{P}\left( \hat{G}({G}^{-1}(\alpha + \delta)) > \alpha \right) \geq 1- \gamma/2
		\text{ and }
		\mathbb{P}\left( \hat{G}({G}^{-1}(\alpha - \delta)) < \alpha \right) \geq 1- \gamma/2. 
	\end{align*}
	Since $\hat{G}^{-1}$ is strictly decreasing, this implies that for all $n \geq n_0$:
	\begin{align}\label{A7.eq131}
		\mathbb{P}\left( {G}^{-1}(\alpha + \delta) < \hat{G}^{-1}(\alpha) \right) \geq 1- \gamma/2
		\text{ and }
		\mathbb{P}\left( {G}^{-1}(\alpha - \delta) > \hat{G}^{-1}(\alpha) \right) \geq 1- \gamma/2. 
	\end{align}
	Thus, for all $n \geq n_0$:
	\begin{align}\label{A7.eq132}
		\begin{split}
			&\mathbb{P}\left( {G}^{-1}(\alpha + \delta) < \hat{G}^{-1}(\alpha) < {G}^{-1}(\alpha - \delta) \right) \\
			= & 1 - \mathbb{P}\left({G}^{-1}(\alpha + \delta) \geq \hat{G}^{-1}(\alpha) \text{ or } \hat{G}^{-1}(\alpha) \geq {G}^{-1}(\alpha - \delta) \right) \\
			\geq & 1 - \mathbb{P}\left({G}^{-1}(\alpha + \delta) \geq \hat{G}^{-1}(\alpha) \right) - \mathbb{P}\left(\hat{G}^{-1}(\alpha) \geq {G}^{-1}(\alpha - \delta) \right) \\
			= & \mathbb{P}\left({G}^{-1}(\alpha + \delta) < \hat{G}^{-1}(\alpha) \right) + \mathbb{P}\left(\hat{G}^{-1}(\alpha) < {G}^{-1}(\alpha - \delta) \right) -1 \\ 
			\geq & 1 - \gamma \text{ by Eq. \eqref{A7.eq131}}.
		\end{split}
	\end{align}
	By Eq. \eqref{A7.eq127} and since ${G}^{-1}$ is strictly decreasing, notice that
	\begin{align*}
		{G}^{-1}(\alpha) - \epsilon \leq {G}^{-1}(\alpha + \delta) \leq
		{G}^{-1}(\alpha - \delta) \leq {G}^{-1}(\alpha ) + \epsilon.
	\end{align*}
	This implies in combination with Eq. \eqref{A7.eq132} that 
	\begin{align*}
		\mathbb{P}\left( {G}^{-1}(\alpha) - \epsilon \leq \hat{G}^{-1}(\alpha) \leq {G}^{-1}(\alpha) + \epsilon \right) \geq 1 - \gamma
	\end{align*}
	for all $n \geq n_0$, which is exactly the statement from Eq. \eqref{A7.eq125} to be proven. 
\end{proof}

\begin{lemma}\label{lemma:conv_integral_product_estimators}
	For a stochastic integral of the form
	\begin{equation}\label{eq:integral_difference_estimator}
		\int_0^s (\hat{X}(u)\hat{Y}(u) - X(u)Y(u)) \, d\hat{U}(u)
	\end{equation}
	with continuous functions $X,Y,U$ on $[0,s]$ and its estimators $\hat{X}, \hat{Y}, \hat{U}$, we obtain convergence to $0$ in probability under the following conditions as $n \to \infty$:
	\begin{enumerate}[label=(\roman*)]
		\item $\sup_{u \in [0,s]} (\hat{X}(u) - X(u)) \overset{\mathbb{P}}{\to} 0$,
		\item $\sup_{u \in [0,s]} (\hat{Y}(u) - Y(u)) \overset{\mathbb{P}}{\to} 0$,
		\item $\sup_{u \in [0,s]} (\hat{U}(u) - U(u)) \overset{\mathbb{P}}{\to} 0$.
	\end{enumerate}
	Also, the convergence is uniform over all upper bounds of the integral range in \eqref{eq:integral_difference_estimator}. 
\end{lemma}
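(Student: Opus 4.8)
The plan is to decouple the two sources of randomness in \eqref{eq:integral_difference_estimator}: the integrand $\hat{X}\hat{Y}-XY$ on the one hand, and the integrator $d\hat{U}$ on the other. For the integrand I would use the telescoping identity
\begin{equation*}
\hat{X}(u)\hat{Y}(u)-X(u)Y(u) = (\hat{X}(u)-X(u))(\hat{Y}(u)-Y(u)) + (\hat{X}(u)-X(u))\,Y(u) + X(u)\,(\hat{Y}(u)-Y(u)).
\end{equation*}
Since $X$ and $Y$ are continuous on the compact interval $[0,s]$ they are bounded there, so taking suprema and invoking hypotheses (i) and (ii) yields
\begin{equation*}
\delta_n \coloneqq \sup_{u\in[0,s]}\left|\hat{X}(u)\hat{Y}(u)-X(u)Y(u)\right| \overset{\mathbb{P}}{\to} 0 \qquad (n\to\infty).
\end{equation*}

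For the integrator the key point is that $\hat{U}$ has total variation on $[0,s]$ that is bounded in probability. In every application in this paper $\hat{U}$ is a fixed linear combination of Nelson--Aalen type estimators $\hat{\Lambda}_{ij}^{(x\kappa)}$ from \eqref{eq:na_estimators}, each of which is non-decreasing with $\hat{\Lambda}_{ij}^{(x\kappa)}(s)\overset{\mathbb{P}}{\to}\Lambda_{ij}^{(x)}(s)<\infty$ (Appendix~\ref{sec:appendix:markov_idm}); hence its total variation $\mathrm{Var}_0^s(\hat{U})$ is dominated by a finite sum of such terms and is $O_{\mathbb{P}}(1)$. Granting this, the pathwise bound
\begin{equation*}
\sup_{t\in[0,s]}\left|\int_0^t (\hat{X}(u)\hat{Y}(u)-X(u)Y(u))\, d\hat{U}(u)\right| \;\le\; \delta_n \cdot \mathrm{Var}_0^s(\hat{U})
\end{equation*}
finishes the proof, since the right-hand side is $o_{\mathbb{P}}(1)\cdot O_{\mathbb{P}}(1)=o_{\mathbb{P}}(1)$; note that this bound does not depend on the upper limit $t$, which gives the claimed uniformity at once.

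If one wishes to avoid appealing to monotonicity of $\hat{U}$ and use only the uniform consistency (iii), I would instead split $d\hat{U}=dU+d(\hat{U}-U)$: the term with $dU$ is bounded by $\delta_n\,\mathrm{Var}_0^s(U)=o_{\mathbb{P}}(1)$ because $U$ is a fixed continuous function of finite variation, while the term with $d(\hat{U}-U)$ is treated by integration by parts, $\int_0^t g_n\, d(\hat{U}-U)= g_n(t)(\hat{U}(t)-U(t)) - \int_0^t (\hat{U}(u-)-U(u-))\, dg_n(u)$ with $g_n\coloneqq\hat{X}\hat{Y}-XY$ and $\hat{U}(0)=U(0)=0$; the first piece is at most $\delta_n\sup_u|\hat{U}-U|$ and the second at most $\sup_u|\hat{U}-U|\cdot\mathrm{Var}_0^s(g_n)$. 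Either way, the one genuinely non-routine ingredient — the main obstacle — is producing a total-variation bound that stays bounded in probability (for $\hat{U}$ in the first route, for $g_n$ in the second); in the situations where this lemma is used this is immediate, because the relevant estimators are monotone Nelson--Aalen estimators and bounded continuous transforms thereof, whose variations converge to those of their continuous limits.
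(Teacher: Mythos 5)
Your first route is essentially the paper's proof: it uses the identical telescoping decomposition of $\hat X\hat Y-XY$, then bounds the integral by $\sup_{u\in[0,s]}|\hat X\hat Y-XY|$ times the total mass of $d\hat U$, with the supremum vanishing by (i)--(ii) and the mass bounded in probability via (iii). The only cosmetic difference is that the paper writes the mass bound as $\hat U(s)$ (tacitly treating $\hat U$ as non-decreasing from $0$), whereas you explicitly invoke $\mathrm{Var}_0^s(\hat U)$ — a slightly more careful formulation that covers the applications where $\hat U$ is a signed combination of Nelson--Aalen estimators; your integration-by-parts alternative does not appear in the paper but is not needed.
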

\begin{proof}
	We can rewrite
	\begin{equation*}
		\hat{X}\hat{Y} - XY = (\hat{X} - X)(\hat{Y} - Y) + Y(\hat{X} - X) + X(\hat{Y} - Y)
	\end{equation*}
	to establish the bound
	\begin{align*}
		&\int_0^s (\hat{X}(u)\hat{Y}(u) - X(u)Y(u)) \, d\hat{U}(u)\\
		\leq& \sup_{u \in [0,s]} \left| (\hat{X}(u) - X(u))(\hat{Y}(u) - Y(u)) + Y(u)(\hat{X}(u) - X(u)) + X(u)(\hat{Y}(u) - Y(u)) \right| \cdot \hat{U}(s)\\
		\leq&\Bigg( \sup_{u \in [0,s]}|\hat{X}(u) - X(u)| \cdot \sup_{u \in [0,s]}|\hat{Y}(u) - Y(u)| + \sup_{u \in [0,s]} |Y(u)| \cdot \sup_{u \in [0,s]}|\hat{X}(u) - X(u)| + \\
		& \qquad \sup_{u \in [0,s]} |X(u)| \cdot \sup_{u \in [0,s]}|\hat{Y}(u) - Y(u)| \Bigg) \cdot \hat{U}(s).
	\end{align*}
	By our assumptions, the first factor vanishes to $0$ and the second one converges to a constant as $n \to \infty$. This proves our statement.
\end{proof}

\begin{lemma}\label{lemma:F_convergence}
	Let $\Lambda_1, \Lambda_2$ denote two hazard functions and let $\hat{\Lambda}_1, \hat{\Lambda}_2$ be their Nelson-Aalen estimators. If for both transition intensities/hazards to be estimated, our assumption \ref{condition:at_risk_process_general} holds, then we have
	\begin{equation*}
		\sup_{u \in [0,s]} \left| \underbrace{\int_0^u \exp(-\hat{\Lambda}_1(v-)) d\hat{\Lambda}_2(v)}_{\eqqcolon \hat{F}(u)} - \underbrace{\int_0^u \exp(-\Lambda_1(v)) d\Lambda_2(v)}_{\eqqcolon F(u)} \right| \overset{\mathbb{P}}{\to} 0.
	\end{equation*} 
\end{lemma}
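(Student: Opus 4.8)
The plan is to reduce the claim to the uniform consistency of the Nelson--Aalen estimators and then to handle the estimated integrand and the estimated integrator separately.

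\medskip
\noindent\textbf{Step 1: uniform consistency of $\hat\Lambda_1$ and $\hat\Lambda_2$.} The fact everything rests on is
\begin{align*}
\sup_{u\in[0,s]}\bigl|\hat\Lambda_i(u)-\Lambda_i(u)\bigr|\overset{\mathbb P}{\longrightarrow}0,\qquad i=1,2.
\end{align*}
I would obtain this from the martingale representation derived in Appendix~\ref{sec:appendix:markov_idm}: $\hat\Lambda_i(u)=\int_0^u J_i(v)\,d\Lambda_i(v)+n^{-1/2}M_i(u)$ with $M_i$ a mean-zero $({\cal F}_s)$-martingale whose predictable variation satisfies $n^{-1}\langle M_i\rangle(s)=\int_0^s\tfrac{J_i(v)}{Y_i(v)}\,d\Lambda_i(v)=O_{\mathbb P}(n^{-1})$ by Lemma~\ref{lemma:uniform_lln}; Lenglart's inequality then yields $\sup_{u\in[0,s]}|n^{-1/2}M_i(u)|\to0$ in probability, while $\sup_{u\in[0,s]}|\int_0^u(1-J_i(v))\,d\Lambda_i(v)|\to0$ in probability by condition~\ref{condition:at_risk_process_general} and the remark following it. (Equivalently, one may invoke Theorem~IV.1.1 of \cite{Andersen:1991}.) Since $\Lambda_1$ is continuous and $\hat\Lambda_1$ piecewise constant, the same bound controls left limits, $\sup_{v\in[0,s]}|\hat\Lambda_1(v-)-\Lambda_1(v)|\le 3\sup_{v\in[0,s]}|\hat\Lambda_1(v)-\Lambda_1(v)|\to0$ in probability, and likewise for $\hat\Lambda_2$.

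\medskip
\noindent\textbf{Step 2: the integrand difference.} I would split
\begin{align*}
\hat F(u)-F(u)=\int_0^u\!\bigl[e^{-\hat\Lambda_1(v-)}-e^{-\Lambda_1(v)}\bigr]d\hat\Lambda_2(v)+\int_0^u\! e^{-\Lambda_1(v)}\,d\bigl(\hat\Lambda_2-\Lambda_2\bigr)(v).
\end{align*}
For the first term I apply Lemma~\ref{lemma:conv_integral_product_estimators} with $\hat X(v)=e^{-\hat\Lambda_1(v-)}$, $X(v)=e^{-\Lambda_1(v)}$, $\hat Y\equiv Y\equiv1$, $\hat U=\hat\Lambda_2$, $U=\Lambda_2$: its hypotheses hold because $\sup_v|\hat\Lambda_2(v)-\Lambda_2(v)|\to0$ (Step~1), and because $x\mapsto e^{-x}$ is Lipschitz on any bounded interval, so $\sup_v|\hat X(v)-X(v)|\le C\sup_v|\hat\Lambda_1(v-)-\Lambda_1(v)|\to0$ (Step~1, using that $\hat\Lambda_1$ is eventually uniformly bounded since $\Lambda_1$ is continuous). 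Hence the first term converges to $0$ in probability, uniformly in $u\in[0,s]$. The same bound could alternatively be obtained by a direct estimate $\sup_v|\hat X(v)-X(v)|\cdot\hat\Lambda_2(s)$.

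\medskip
\noindent\textbf{Step 3: the integrator difference, and the obstacle.} For the second term I would integrate by parts. With $g\coloneqq e^{-\Lambda_1}$, continuous and of bounded variation on $[0,s]$, and $\Delta\coloneqq\hat\Lambda_2-\Lambda_2$ (right-continuous, $\Delta(0)=0$),
\begin{align*}
\int_0^u g(v)\,d\Delta(v)=g(u)\Delta(u)-\int_0^u\Delta(v)\,dg(v),
\end{align*}
so that, uniformly in $u\in[0,s]$,
\begin{align*}
\Bigl|\int_0^u g\,d\Delta\Bigr|\le\Bigl(1+\int_0^s|dg(v)|\Bigr)\sup_{v\in[0,s]}|\hat\Lambda_2(v)-\Lambda_2(v)|\overset{\mathbb P}{\longrightarrow}0
\end{align*}
by Step~1, since $\int_0^s|dg(v)|$ is a finite constant. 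Combining Steps~2 and~3 yields $\sup_{u\in[0,s]}|\hat F(u)-F(u)|\to0$ in probability. The genuinely delicate point is Step~3: because \emph{both} the integrand and the integrator are estimated one cannot pass to the limit naively inside the Stieltjes integral, and the integration-by-parts manoeuvre works only because the \emph{limiting} integrand $e^{-\Lambda_1}$ has deterministic, finite total variation. A second, more routine technicality is the replacement of $\hat\Lambda_i(v-)$ by $\hat\Lambda_i(v)$ inside the suprema, which is legitimate precisely because the limit functions $\Lambda_i$ are continuous, so the jumps of $\hat\Lambda_i$ are asymptotically negligible.
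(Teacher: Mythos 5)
Your proof is correct, but it takes a genuinely different route from the paper's. The paper decomposes
\begin{align*}
\hat F(u)-F(u)=\int_0^u e^{-\hat\Lambda_1(v-)}\,d(\hat\Lambda_2-\Lambda_2)(v)+\int_0^u\bigl(e^{-\hat\Lambda_1(v-)}-e^{-\Lambda_1(v)}\bigr)d\Lambda_2(v),
\end{align*}
keeping the \emph{random} integrand $e^{-\hat\Lambda_1}$ against the integrator difference. Handling that first term then forces the authors through the martingale decomposition $d\hat\Lambda_2 - J_2\,d\Lambda_2 = n^{-1/2}\,dM_2$, an application of Lenglart's inequality to $\sup_u|\int_0^u n^{-1/2}e^{-\Lambda_1}\,dM_2|$, and an appeal to condition \ref{condition:at_risk_process_general} for the $(J_2-1)$ remainder. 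You instead keep the \emph{limiting} integrand $e^{-\Lambda_1}$ against $d(\hat\Lambda_2-\Lambda_2)$ and dispose of it by Stieltjes integration by parts, exploiting that $e^{-\Lambda_1}$ is continuous, deterministic and of total variation at most one, so the whole term is controlled by $\sup_v|\hat\Lambda_2(v)-\Lambda_2(v)|$ alone. Your route is more elementary and self-contained: once uniform consistency of the Nelson--Aalen estimators is granted (Theorem~IV.1.1 of \cite{Andersen:1991}, which needs only the un-scaled at-risk condition, not the $\sqrt n$-version in \ref{condition:at_risk_process_general}), no further martingale machinery is required. The paper's decomposition, though heavier here, is the same split it reuses at the $\sqrt n$-scale in Eq.~\eqref{A8.Eq01} to identify the limiting Gaussian process for $\sqrt n(\hat F-F)$, so there it earns its keep — but for a pure consistency statement your integration-by-parts argument is the leaner one. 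One small caveat: your constant $3$ in the left-limit bound $\sup_v|\hat\Lambda_1(v-)-\Lambda_1(v)|\le 3\sup_v|\hat\Lambda_1(v)-\Lambda_1(v)|$ is unnecessarily generous; since $\Lambda_1$ is continuous and $\hat\Lambda_1(v-)=\lim_{w\uparrow v}\hat\Lambda_1(w)$ the bound already holds with constant $1$, but this does not affect correctness.
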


\begin{proof}
	By $J_l$ and $M_l$ for $l \in \{1,2\}$ we denote the at risk indicator and the counting process martingale corresponding to $\hat{\Lambda}_l$ as in \eqref{eq:def_at_risk} and in \eqref{eq:02.04}, respectively.\\
	We can decompose the difference as follows:
	\begin{align*}
		&\hat{F}(u) - F(u)\\
		=&\int_0^u n^{-1/2} \exp(-\hat{\Lambda}_1(v-)) \; d(\sqrt{n}(\hat{\Lambda}_2(v) - \Lambda_2(v))) + \int_0^u ( \exp(-\hat{\Lambda}_1(v-)) - \exp(-\Lambda_1(v)) ) \; d\Lambda_2(v)\\
		=&\int_0^u n^{-1/2} \exp(-\hat{\Lambda}_1(v-)) \; dM_2(v) + \int_0^u \exp(-\hat{\Lambda}_1(v-)) (J_2(v) - 1) d\Lambda_2(v)\\
		& + \int_0^u ( \exp(-\hat{\Lambda}_1(v-)) - \exp(-\Lambda_1(v)) ) \; d\Lambda_2(v)
	\end{align*}
	All of the three summands converge to $0$ uniformly over $[0,s]$. This can be shown as follows.\\[4pt]
	\textsc{First summand:}\\
	For an arbitrary $\varepsilon>0$ and $\delta>0$, we choose $n$ large enough s.t. 
	\begin{equation*}
		\mathbb{P} \left[ \sup_{u \in [0,s]} |\exp(-\hat{\Lambda}_1(u-)) - \exp(-\Lambda_1(u))| > \varepsilon/2 \right] \leq \delta/3
	\end{equation*}
	Then, we bound the supremum over $[0,s]$ by
	\begin{align*}
		&\mathbb{P}\left[\sup_{u \in [0,s]} \left| \int_0^u n^{-1/2} \exp(-\hat{\Lambda}_1(v-)) \; dM_2(v) \right| > \varepsilon\right]\\
		\leq & \mathbb{P}\left[\sup_{u \in [0,s]} \left| \int_0^u n^{-1/2} \underbrace{\exp(-\hat{\Lambda}_1(v-)) - \exp(-\Lambda_1(v))}_{|\dots| \leq 2} \; dM_2(v) \right| > \varepsilon/2 \right]\\
		&+\mathbb{P}\left[\sup_{u \in [0,s]} \left| \int_0^u n^{-1/2} \exp(-\Lambda_1(v)) \; dM_2(v) \right| > \varepsilon/2 \right]\\
		\leq& \mathbb{P}\left[\sup_{u \in [0,s]} M_2(u) > \sqrt{n} \varepsilon/4 \right]\\
		& + \mathbb{P}\left[\sup_{u \in [0,s]} \left| \int_0^u n^{-1/2} \exp(-\Lambda_1(v)) \; dM_2(v) \right| > \varepsilon/2 \right].
	\end{align*}
	For $n$ large enough, the first summand is smaller than $\delta/3$ as $M_2$ converges to a Wiener process. The last summand also becomes small enough by the inequality of \cite{bib6}.\\[4pt]
	\textsc{Second summand:}\\
	As mentioned above, it is a direct consequence of \ref{condition:at_risk_process_general} that this summand vanishes.\\[4pt]
	\textsc{Third summand:}\\
	Since $|\exp(x) - \exp(x^{\prime})| \leq |x - x^{\prime}|$ for all $x,x^{\prime}\geq 0$, we get convergence of this summand by uniform convergence of the Nelson-Aalen estimators as e.g. shown in Theorem IV.1.1 of \cite{Andersen:1991}.
\end{proof}

\begin{lemma}\label{lemma:taylor_expansions}
	Under the contiguous alternatives as introduced in \eqref{eq:contiguous_alternatives_both} and \eqref{eq:contiguous_alternatives_os}, we have the follwoing convergences for $x \in \{PFS,OS\}$ as $n \to \infty$:
	\begin{enumerate}[label=(\roman*)]
		\item $\sqrt{n} (1 - \omega_x) \to \gamma_x$,
		\item Uniformly on $u \in [0,s]$
		\begin{equation*}
			\sqrt{n} \Delta_{x}(u) \to S_{x,0}(u) \log(S_{x,0}(u)) \cdot \gamma_{x}
		\end{equation*}
	\end{enumerate}
\end{lemma}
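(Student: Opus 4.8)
The plan is to reduce both statements to elementary first-order Taylor expansions; no martingale central limit theorem or law of large numbers is needed, since under the contiguous alternatives \eqref{eq:contiguous_alternatives_both}--\eqref{eq:contiguous_alternatives_os} the quantity $\Delta_x(u)=S_{x,0}(u)-S^{(E)}_x(u)=S_{x,0}(u)-S_{x,0}(u)^{\omega_x}$ is purely deterministic. Throughout I would write $h\coloneqq n^{-1/2}$ and $\Lambda_{x,0}(u)\coloneqq-\log S_{x,0}(u)\ge 0$, so that $S_{x,0}(u)^{\omega_x-1}=\exp\bigl((1-\omega_x)\Lambda_{x,0}(u)\bigr)$ and hence $\sqrt{n}\,\Delta_x(u)=-\sqrt{n}\,S_{x,0}(u)\bigl(e^{(1-\omega_x)\Lambda_{x,0}(u)}-1\bigr)$.

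For (i), note that $\sqrt{n}(1-\omega_x)=(1-e^{-\gamma_x h})/h\to\gamma_x$ as $h\downarrow 0$, directly from the definition of the derivative of $t\mapsto e^{-\gamma_x t}$ at $t=0$ (equivalently, from $e^{-\gamma_x h}=1-\gamma_x h+O(h^2)$). As by-products one obtains the two facts used below: $\sqrt{n}(1-\omega_x)^2=(1-\omega_x)\cdot\sqrt{n}(1-\omega_x)\to 0$, and $\omega_x\ge\tfrac12$ for all sufficiently large $n$.

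For (ii), I would apply the elementary bound $\bigl|e^{y}-1-y\bigr|\le\tfrac12 y^2 e^{|y|}$ with $y=(1-\omega_x)\Lambda_{x,0}(u)\ge 0$, i.e.\ write $e^{(1-\omega_x)\Lambda_{x,0}(u)}-1=(1-\omega_x)\Lambda_{x,0}(u)+R_n(u)$ with $|R_n(u)|\le\tfrac12(1-\omega_x)^2\Lambda_{x,0}(u)^2 e^{(1-\omega_x)\Lambda_{x,0}(u)}$; using also $-\Lambda_{x,0}(u)=\log S_{x,0}(u)$ this gives
\begin{align*}
\sqrt{n}\,\Delta_x(u)
&=-\sqrt{n}\,S_{x,0}(u)\,(1-\omega_x)\Lambda_{x,0}(u)\;-\;\sqrt{n}\,S_{x,0}(u)\,R_n(u)\\
&=\sqrt{n}(1-\omega_x)\,S_{x,0}(u)\log S_{x,0}(u)\;-\;\sqrt{n}\,S_{x,0}(u)\,R_n(u).
\end{align*}
By (i), the first term converges to $\gamma_x\,S_{x,0}(u)\log S_{x,0}(u)$, \emph{uniformly} in $u\in[0,s]$: since $\bigl|S_{x,0}(u)\log S_{x,0}(u)\bigr|=\Lambda_{x,0}(u)e^{-\Lambda_{x,0}(u)}\le\sup_{t\ge 0}te^{-t}=e^{-1}$, the uniform error is bounded by $e^{-1}\bigl|\sqrt{n}(1-\omega_x)-\gamma_x\bigr|\to 0$. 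For the remainder, writing $S_{x,0}(u)=e^{-\Lambda_{x,0}(u)}$ yields $S_{x,0}(u)\,\Lambda_{x,0}(u)^2 e^{(1-\omega_x)\Lambda_{x,0}(u)}=\Lambda_{x,0}(u)^2 e^{-\omega_x\Lambda_{x,0}(u)}\le\sup_{t\ge 0}t^2 e^{-\omega_x t}=4/(e\,\omega_x)^2\le 16 e^{-2}$ for $n$ large, so $\sup_{u\in[0,s]}\bigl|\sqrt{n}\,S_{x,0}(u)\,R_n(u)\bigr|\le 8 e^{-2}\,\sqrt{n}(1-\omega_x)^2\to 0$ by (i). Adding the two contributions proves the asserted uniform limit.

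The only point that genuinely deserves attention is the uniformity in $u$ when $S_{x,0}(u)$ is close to $0$, equivalently when $\Lambda_{x,0}(u)$ is large; this is precisely why I fold the exponential factor $e^{(1-\omega_x)\Lambda_{x,0}(u)}$ together with $S_{x,0}(u)=e^{-\Lambda_{x,0}(u)}$ into the single, globally bounded factor $e^{-\omega_x\Lambda_{x,0}(u)}$ rather than bounding them separately. (If one prefers, the conditions invoked elsewhere in this appendix implicitly place $s$ strictly inside the support, so $\Lambda_{x,0}$ is bounded on $[0,s]$ and the remainder estimate is even more immediate.) Everything else is routine bookkeeping.
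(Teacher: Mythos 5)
Your argument is correct and follows the same basic route as the paper (elementary Taylor expansion of the exponential, no probabilistic machinery needed since $\Delta_x$ is deterministic under the contiguous alternatives), but your remainder control is genuinely sharper. The paper expands $1-a^{-y}$ into its full power series, splits off the $k=1$ term, pulls $(\omega_x-1)^2$ out of the $k\ge 2$ remainder, and then asserts that \emph{the remaining series is bounded uniformly over $u\in[0,s]$}. That last step silently requires $\log S_{x,0}(u)$, equivalently $\Lambda_{x,0}(u)$, to be bounded on $[0,s]$, i.e.\ $S_{x,0}(s)>0$; this is consistent with the standing assumptions of the appendix but is never invoked explicitly. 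Your version sidesteps this by folding the $S_{x,0}(u)=e^{-\Lambda_{x,0}(u)}$ prefactor into the exponential in the Lagrange remainder, producing the globally bounded quantity $\Lambda_{x,0}(u)^2 e^{-\omega_x\Lambda_{x,0}(u)}\le 4/(e\omega_x)^2$, and likewise using $\sup_{t\ge 0}te^{-t}=e^{-1}$ for the leading term. The net effect is that your proof delivers the uniform limit without any boundedness hypothesis on $\Lambda_{x,0}$ over $[0,s]$ — a small but real gain in robustness — while reaching the identical conclusion by the identical Taylor strategy. (One minor point: the paper's written power series $1-a^{-y}=\sum_{k\ge 1}(-y)^k(\log a)^k/k!$ and its remainder are off by a sign and a missing $S_{x,0}(u)$ factor — evidently typos, as the stated limits are correct; your bookkeeping is cleaner on this front as well.)
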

\begin{proof}
	We require the two Taylor expansions of $1 - a^{-y}$ at $y=0$ which is given by
	\begin{align*}
		1 - a^{-y} = & \sum_{k=1}^{\infty} \frac{(-y)^k \log(a)^k}{k!}
	\end{align*}
	For the first result, we plug $a = e$ and $y = \gamma_x/\sqrt{n}$ into the first expansion and mutiply by $\sqrt{n}$ to obtain
	\begin{align*}
		\sqrt{n} (1 - \omega_x) = & \sqrt{n} \sum_{k=1}^{\infty} \frac{(-\gamma_x)^k}{\sqrt{n}^k \cdot k!}\\
		= & -\gamma_x + \frac{1}{\sqrt{n}} \sum_{k=2}^{\infty} \frac{(-\gamma_x)^k}{\sqrt{n}^{k-2} \cdot k!}.
	\end{align*}
	The sum on the right hand side is finite and hence we obtain the first convergence result. For the second one we plug in $a = S_{x,0}$ and $y = 1 - \omega_x$ to obtain
	\begin{align*}
		\sqrt{n} \Delta_{x}(u) = & \sqrt{n} S_{x,0}(u) (1 - S_{x,0}(u)^{\omega_x - 1})\\
		= & \sqrt{n} S_{x,0}(u) (1 - \omega_x) \log(S_{x,0}(u)) + \sqrt{n} \sum_{k=2}^{\infty} \frac{(\omega_x - 1)^k \log(S_{x,0}(u))^k}{k!}
	\end{align*}
	Now, we apply our first convergence result to see that the first summand on the right hand side converges to the desired quantity. For the remainder, we can pull $(\omega_x - 1)^2$ out of the sum. The factor $\sqrt{n}(\omega_x - 1)^2$ vanishes and the remaining series is bounded uniformly over $u \in [0,s]$.
\end{proof}

\begin{lemma}\label{lemma:rebolledo_condition_iii}
	Let $Y$ be a "number at risk" process on $[0,s]$ and $M$ be the corresponding compensated martingale for the Nelson-Aalen estimator as in \eqref{eq:na_martingale}. We define the process $\Psi$ of the form
	\begin{equation*}
		\Psi(u) \coloneqq  \int_0^u G(v) dM(v).
	\end{equation*}
	If it fulfills one of the following conditions
	\begin{enumerate}[label=(\roman*)]
		\item\label{condition:rebolledo_1} $Y$ fulfills condition \ref{condition:lb_y} and $G$ is a bounded deterministic function,
		\item\label{condition:rebolledo_2} $G=y=\mathbb{E}[Y/n]$ and $y$ is bounded from below on $[\delta,s]$ for any $\delta>0$ by some $\gamma_{\delta} > 0$,
	\end{enumerate}
	then it also fulfills condition (iii) of Rebolledo's Martingale Central Limit Theorem (as stated in Theorem II.5.1 of \cite{Andersen:1991}).\\
	This also holds for multivariate processes with components of the form given above.
\end{lemma}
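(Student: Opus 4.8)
To verify Rebolledo's condition (iii) for $\Psi$, recall that this is the Lindeberg-type requirement that, for every $\epsilon>0$, the predictable variation $\langle\Psi_\epsilon\rangle(s)$ tends to $0$ in probability, where $\Psi_\epsilon$ is the martingale collecting the jumps of $\Psi$ of modulus exceeding $\epsilon$ (the argument below is insensitive to replacing $\langle\Psi_\epsilon\rangle$ by the optional variation $[\Psi_\epsilon]$). Since for a vector-valued $\Psi$ one has $\lvert\Delta\Psi(v)\rvert^2=\sum_k\lvert\Delta\Psi_k(v)\rvert^2$ and cross terms are controlled by Cauchy--Schwarz, it suffices to treat a single component $\Psi(u)=\int_0^u G(v)\,dM(v)$ with $M$ the Nelson--Aalen martingale from \eqref{eq:na_martingale}. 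The plan is: first identify the jumps of $\Psi$; then, under hypothesis (i), show that the maximal jump of $\Psi$ on $[0,s]$ vanishes in probability, which makes $\Psi_\epsilon$ trivial with probability tending to one; and, under hypothesis (ii), split $[0,s]$ at a small $\delta>0$, treat $[\delta,s]$ as under (i), and control the contribution on $[0,\delta]$ through the \emph{expected} predictable variation, finally letting $\delta\to0$.

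For the jumps: since $\Lambda$ is continuous, the compensator term in \eqref{eq:na_martingale} is continuous, so $\Delta\Psi(v)=\sqrt n\,G(v)\,\Delta\hat\Lambda(v)$; and $\Delta\hat\Lambda(v)$ vanishes unless an event is observed at $v$, in which case it equals $1/Y(v)$ with $Y(v)\ge1$ (no ties, almost surely). Hence $\lvert\Delta\Psi(v)\rvert\le n^{-1/2}\lvert G(v)\rvert\,(n/Y(v))$ at event times and is $0$ otherwise. Under (i), $\lvert G\rvert\le c_G$ and $y\ge\gamma>0$ on $[0,s]$ by \ref{condition:lb_y}, so the second part of Lemma~\ref{lemma:uniform_lln} (equivalently Lemma~\ref{lemma:non_empty_at_risk_set}) gives $\sup_{v\in[0,s]}\lvert nJ(v)/Y(v)-1/y(v)\rvert\overset{\mathbb P}{\to}0$, whence $n/Y(v)\le 2/\gamma$ uniformly on $[0,s]$ with probability tending to one and $\sup_{v\in[0,s]}\lvert\Delta\Psi(v)\rvert\le 2c_G/(\gamma\sqrt n)\overset{\mathbb P}{\to}0$. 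On the (probability $\to1$) event that no jump of $\Psi$ on $[0,s]$ exceeds $\epsilon$ --- which also bounds the predictable jump size $\lvert G(v)\rvert\sqrt n/Y(v-)$ below $\epsilon$ --- one has $\Psi_\epsilon\equiv0$, hence $\langle\Psi_\epsilon\rangle(s)=0$; this gives condition (iii) under (i).

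Under (ii), $G=y\le1$ but $y$ is bounded below only on each $[\delta,s]$. On $[\delta,s]$ the preceding argument applies with $\gamma_\delta$ in place of $\gamma$, so with probability tending to one the predictable jump size of $\Psi$ stays below $\epsilon$ on $(\delta,s]$ and therefore $\langle\Psi_\epsilon\rangle(s)=\langle\Psi_\epsilon\rangle(\delta)$. For $[0,\delta]$ I would use that $\Psi_\epsilon$ and $\Psi-\Psi_\epsilon$ have disjoint jumps, hence are orthogonal, so $\langle\Psi_\epsilon\rangle(\delta)\le\langle\Psi\rangle(\delta)$; together with $\langle\Psi\rangle(\delta)=\int_0^\delta G(v)^2\,d\langle M\rangle(v)$ and $d\langle M\rangle(v)=n\,J(v)Y(v)^{-1}\alpha(v)\,dv$ from \eqref{eq:na_martingale_covariation_prev}, this gives
\[
\langle\Psi\rangle(\delta)=\int_0^\delta y(v)^2\,\frac{n\,J(v)}{Y(v)}\,\alpha(v)\, dv.
\]
Since $Y(v)\sim\mathrm{Bin}\bigl(n,y(v)\bigr)$, the elementary inequality $\mathbbm{1}(Y\ge1)/Y\le 2/(Y+1)$ gives $n\,\mathbb E\bigl[\mathbbm{1}(Y(v)\ge1)/Y(v)\bigr]\le 2n/\bigl((n+1)y(v)\bigr)\le 2/y(v)$, so by Tonelli's theorem
\[
\mathbb E\bigl[\langle\Psi\rangle(\delta)\bigr]\le 2\int_0^\delta y(v)\,\alpha(v)\,dv\le 2\int_0^\delta\alpha(v)\,dv=2\,\Lambda(\delta),
\]
which tends to $0$ as $\delta\to0$, \emph{uniformly in $n$}, by continuity of the cumulative intensity $\Lambda$ at $0$. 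Combining the two pieces via Markov's inequality --- choose $\delta$ small (using the last display) and then $n$ large --- yields $\langle\Psi_\epsilon\rangle(s)\overset{\mathbb P}{\to}0$, i.e.\ condition (iii) under (ii); the multivariate claim follows componentwise.

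The routine parts are case (i) and the vector reduction. The delicate point, and the reason two separate sufficient conditions are formulated, is the behaviour near the left endpoint in case (ii): there $Y$ is not bounded below by a constant multiple of $n$, so the jumps of $\Psi$ cannot be controlled pathwise --- atypically, $Y(v)$ may be small although $ny(v)$ is not, and then a single jump $y(v)\sqrt n/Y(v)$ can be of order $\sqrt n$. One must therefore pass from the jumps to the predictable variation and show that these atypical configurations are negligible \emph{in mean}; the binomial moment bound $n\,\mathbb E[\mathbbm{1}(Y\ge1)/Y]\le 2/y$, which replaces $\langle\Psi\rangle(\delta)$ by the $n$-free quantity $2\Lambda(\delta)$, is the one genuinely non-routine ingredient, after which the continuity/Markov argument in $\delta$ completes the proof.
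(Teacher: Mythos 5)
Your proof is correct; part (i) matches the paper's argument in spirit (both reduce to showing that, with probability tending to one, the predictable jump size $|G(v)|\sqrt{n}/Y(v)$ is below $\epsilon$ uniformly on $[0,s]$, so that $\Psi_\epsilon$ and hence $\langle\Psi_\epsilon\rangle$ vanish). For part (ii) you take a genuinely different route. The paper also splits at a small $\delta$, but then decomposes the $[0,\delta]$-contribution further into a term $\int_0^\delta y\alpha\,(nyJ/Y-1)\mathbbm{1}(\cdot)\,dv$, argued to vanish in $L^1$ via $\mathbb{E}[nyJ/Y]\to1$ pointwise plus dominated convergence, and a deterministic remainder $\int_0^\delta y\alpha\,dv$. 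You instead exploit the orthogonality $\langle\Psi_\epsilon\rangle\le\langle\Psi\rangle$ directly and then use the exact binomial structure of the at-risk count — $Y(v)\sim\mathrm{Bin}(n,y(v))$ — together with the elementary bound $\mathbbm{1}(Y\ge1)/Y\le2/(Y+1)$ and $\mathbb{E}[1/(Y+1)]\le1/((n+1)y)$ to obtain $\mathbb{E}[\langle\Psi\rangle(\delta)]\le2\Lambda(\delta)$ \emph{uniformly in $n$}, after which Markov and continuity of $\Lambda$ at $0$ close the argument. This buys two things: a bound that is uniform in $n$, so no exchange of limits is needed; and it sidesteps the delicacy in the paper's $L^1$ step, where the integrand $nyJ/Y-1$ is signed (for small $n$ and/or small $y(v)$ its expectation can be negative, so a bound on $\mathbb{E}[\cdot]$ rather than $\mathbb{E}[|\cdot|]$ does not by itself give $L^1$-convergence). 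The cost is that your argument makes explicit use of the binomial law of $Y(v)$, which holds in the paper's i.i.d.\ setting but is only implicit in the statement "number at risk process." Note also that you need $G=y$ in case (ii) precisely so that $G^2/y=y\le1$ is integrable against $\alpha$ near $0$; you do use this, but it is worth flagging, since it is the reason the hypothesis is formulated that way.
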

\begin{proof}
	We start with assumption \ref{condition:rebolledo_1} and fix an arbitrary $\varepsilon > 0$. As in \cite{Andersen:1991}, let $\Psi_{\varepsilon}$ denote the martingale that only contains the jumps of $\Psi$ that are larger in absolute value than $\varepsilon$. The process $\Psi$ is only in danger of making a jump of size larger than $\varepsilon$ if $Y$ is small enough. Hence, it is given by
	\begin{equation*}
		\Psi_{\varepsilon}(u) = \int_0^u G(v) \mathbbm{1}\left( Y(v) < \frac{G(v) \cdot \sqrt{n}}{\varepsilon} \right) \, dM(v).
	\end{equation*} 
	Its predictable covariation process $\langle \Psi_{\varepsilon}\rangle$ is given by
	\begin{equation*}
		\langle \Psi_{\varepsilon}\rangle (u) = n \cdot \int_0^u G(v)^2 \cdot \alpha(v) \cdot \frac{J(v)}{Y(v)} \cdot \mathbbm{1}\left( Y(v) < \frac{G(v) \cdot \sqrt{n}}{\varepsilon} \right) dv. 
	\end{equation*}
	The random factor $n/Y(v)$ of the integrand converges by our assumption and Lemma \ref{lemma:uniform_lln} uniformly to some deterministic function $1/y(v)$ which is bounded from above by $1/\gamma$ over $[0,s]$. Hence, if the the process $\tilde{\Psi}$ with
	\begin{equation*}
		\tilde{\Psi}(u) \coloneqq  \int_0^u G(v)^2 \cdot \frac{2}{\gamma} \cdot \alpha(v) \cdot \mathbbm{1}\left( Y(v) < \frac{G(v) \cdot \sqrt{n}}{\varepsilon} \right) dv,
	\end{equation*}
	vanishes pointwise in probability, the condition (iii) of Rebolledo's Theorem holds for $\Psi$. Now, for some arbitrary $\eta > 0$ it holds
	\begin{align*}
		\mathbb{P}\left[\tilde{\Psi}(u) > \eta\right] & \leq \mathbb{P}\left[\tilde{\Psi}(s) > \eta\right]\\
		& \leq \mathbb{P}\left[\inf_{0 \leq v \leq s} \frac{Y(v)}{n} < \frac{\varepsilon}{\sqrt{n}} \sup_{0 \leq v \leq s} G(v)\right].
	\end{align*}
	The left hand side of the statement inside the probability converges against the constant $\gamma$ while the right hand side converges to 0. Hence, the probability vanishes as $n \to \infty$, which proves our statement.\\
	When assuming \ref{condition:rebolledo_2} we can fix any $\delta>0$ as in the condition, plug in $y$ for $G$ and decompose
	\begin{equation}\label{eq:jump_compensator_decomposition}
		\begin{split}
		\langle \Psi_{\varepsilon}\rangle (u) = & n \cdot \int_0^{\delta} y(v)^2 \cdot \alpha(v) \cdot \frac{J(v)}{Y(v)} \cdot \mathbbm{1}\left( Y(v) < \frac{y(v) \cdot \sqrt{n}}{\varepsilon} \right) dv\\
		& + n \cdot \int_{\delta}^u y(v)^2 \cdot \alpha(v) \cdot \frac{J(v)}{Y(v)} \cdot \mathbbm{1}\left( Y(v) < \frac{y(v) \cdot \sqrt{n}}{\varepsilon} \right) dv.
		\end{split}
	\end{equation}
	For the second summand of \eqref{eq:jump_compensator_decomposition}, we can apply the same reasoning as before. For the first one, we decompose into
	\begin{equation}\label{eq:jump_compensator_decomposition_II}
		\begin{split}
		& \int_0^{\delta} y(v) \cdot \alpha(v) \cdot \left( n\cdot y(v) \frac{J(v)}{Y(v)} - 1 \right) \cdot \mathbbm{1}\left( Y(v) < \frac{y(v) \cdot \sqrt{n}}{\varepsilon} \right) dv\\
		+ & \int_0^{\delta} y(v) \cdot \alpha(v) \cdot \mathbbm{1}\left( Y(v) < \frac{y(v) \cdot \sqrt{n}}{\varepsilon} \right) dv.
		\end{split}
	\end{equation}
	For the first summand of \eqref{eq:jump_compensator_decomposition_II}, we will show that it converges to $0$ in $L^1$ and hence also in probability.
	\begin{align*}
		& \lim_{n\to \infty} \mathbb{E} \left[ \int_0^{\delta} y(v) \cdot \alpha(v) \cdot \left( n\cdot y(v) \frac{J(v)}{Y(v)} - 1 \right) \cdot \mathbbm{1}\left( Y(v) < \frac{y(v) \cdot \sqrt{n}}{\varepsilon} \right) dv \right]\\
		\leq & \lim_{n\to \infty} C \cdot \mathbb{E} \left[\int_0^{\delta} n \cdot y(v) \cdot \frac{J(v)}{Y(v)} - 1 dv \right]\\
		= & C \cdot \lim_{n\to \infty} \int_0^{\delta} \mathbb{E} \left[ n \cdot y(v) \cdot \frac{J(v)}{Y(v)} - 1 \right] dv\\
		= & C \cdot \int_0^{\delta} \lim_{n\to \infty} \mathbb{E} \left[ n \cdot y(v) \cdot \frac{J(v)}{Y(v)} - 1 \right] dv\\
		= & 0
	\end{align*}
	where the third and the fourth line are a consequence of Tonelli's lemma and the Dominated Convergence Theorem, respectively. When proving that $\mathbb{P}[\langle \Psi_{\varepsilon}\rangle (u) < \eta]$ becomes small enough for any $\eta > 0$ as $n \to \infty$ we need to choose $\delta$ small enough s.t. the second summand of \eqref{eq:jump_compensator_decomposition_II} is smaller than $\eta/2$ s.t. the rest follows analogously to the previous arguments.\\
	The last statement is a consequence of the formulation of condition (iii) of Theorem II.5.1 of \cite{Andersen:1991} as we can reduce the multivariate process to its components.
\end{proof}

\begin{lemma}\label{lemma:na_convergence}
	Let $\hat{\Lambda}$ be a Nelson-Aalen estimator of a true cumulative hazard function $\Lambda$ as e.g. defined in \eqref{eq:na_estimators}. By $J$ and $Y$ we denote the corresponding variables that describe the at risk-process. We assume that they fulfill the conditions of Theorem IV.1.2 of \cite{Andersen:1991}. Then, we obtain the uniform convergence
	\begin{equation*}
		\sup_{u \in [0,s]} n^k |\hat{\Lambda}(u-) - \Lambda(u)| \overset{\mathbb{P}}{\to} 0
	\end{equation*}
	for any $k < 1/2$.
\end{lemma}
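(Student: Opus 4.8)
The plan is to split $\hat{\Lambda}(u-)-\Lambda(u)$ into a martingale part of exact order $n^{-1/2}$ and a bias part caused by the possibly empty at-risk set that is $o_{\mathbb{P}}(n^{-1/2})$, and then to observe that multiplication by $n^{k}$ with $k<1/2$ kills both. Set $\Lambda^{*}(u):=\int_{0}^{u}J(v)\,d\Lambda(v)$. Since $\Lambda$ is absolutely continuous, $\Lambda^{*}$ is continuous, so $\Lambda^{*}(u-)=\Lambda^{*}(u)$; and by the identity \eqref{eq:na_martingale} one has $\hat{\Lambda}(u)-\Lambda^{*}(u)=n^{-1/2}M(u)$, where $M$ is the compensated Nelson--Aalen martingale from \eqref{eq:02.04}. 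Passing to left limits gives $\hat{\Lambda}(u-)-\Lambda^{*}(u)=n^{-1/2}M(u-)$, whence
\[
n^{k}\sup_{u\in[0,s]}\bigl|\hat{\Lambda}(u-)-\Lambda(u)\bigr|
\;\le\; n^{\,k-1/2}\sup_{u\in[0,s]}|M(u-)|
\;+\; n^{k}\sup_{u\in[0,s]}\bigl|\Lambda^{*}(u)-\Lambda(u)\bigr|,
\]
so it suffices to treat the two summands separately.

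For the martingale term I would show $\sup_{u\in[0,s]}|M(u-)|\le\sup_{u\in[0,s]}|M(u)|=O_{\mathbb{P}}(1)$. The first inequality holds since $M$ has left limits everywhere. For the stochastic boundedness I would invoke Lenglart's inequality: for every $\varepsilon,\eta>0$, $\mathbb{P}\bigl(\sup_{[0,s]}|M|>\eta\bigr)\le \varepsilon/\eta^{2}+\mathbb{P}\bigl(\langle M\rangle(s)>\varepsilon\bigr)$, and under the hypotheses of the lemma $\langle M\rangle(s)$ converges in probability to a finite deterministic limit $\sigma^{2}(s)$ (this is part of the conditions of Theorem~IV.1.2 of \cite{Andersen:1991}, and is the kind of statement already used via Lemma~\ref{lemma:rebolledo_condition_iii}); hence $\sup_{[0,s]}|M|$ is bounded in probability. (Alternatively one may use the weak convergence of $M$ in $D[0,s]$ to a Gaussian martingale and continuity of the supremum functional at its a.s.\ continuous limit.) Since $k-1/2<0$, it follows that $n^{\,k-1/2}\sup_{[0,s]}|M(u-)|\overset{\mathbb{P}}{\to}0$.

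For the bias term, $\Lambda(u)-\Lambda^{*}(u)=\int_{0}^{u}(1-J(v))\,d\Lambda(v)$ is nonnegative and nondecreasing in $u$, so $\sup_{[0,s]}|\Lambda^{*}(u)-\Lambda(u)|=\int_{0}^{s}(1-J(v))\,d\Lambda(v)$. Under the stated hypotheses condition~\ref{condition:at_risk_process_general} holds (it is implied by the conditions of Theorem~IV.1.2 of \cite{Andersen:1991} together with boundedness of the intensities; cf.\ the discussion preceding Lemma~\ref{lemma:non_empty_at_risk_set}), and by the remark following \ref{condition:at_risk_process_general} this yields $\sqrt{n}\int_{0}^{s}(1-J(v))\,d\Lambda(v)\overset{\mathbb{P}}{\to}0$. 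Consequently $n^{k}\sup_{[0,s]}|\Lambda^{*}(u)-\Lambda(u)|=n^{\,k-1/2}\cdot\sqrt{n}\int_{0}^{s}(1-J(v))\,d\Lambda(v)\overset{\mathbb{P}}{\to}0$, again because $n^{\,k-1/2}\to0$ and the remaining factor is $o_{\mathbb{P}}(1)$. Adding the two bounds proves the lemma.

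The only step that is not entirely routine is passing from the Rebolledo / weak-convergence statement of Theorem~IV.1.2 to the uniform stochastic boundedness $\sup_{[0,s]}|M|=O_{\mathbb{P}}(1)$; I would handle this via Lenglart's inequality and the convergence of $\langle M\rangle(s)$, which avoids any Skorokhod-topology continuity argument. The left-limit bookkeeping is harmless precisely because $\Lambda^{*}$ is continuous and $\sup|M(u-)|\le\sup|M(u)|$, so no separate bound on the jump sizes of $\hat{\Lambda}$ is needed.
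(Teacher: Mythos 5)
Your proof is correct, and it takes a genuinely (if modestly) different route in one place. The paper first splits off the jump term $n^{k}\sup_{u}|\hat{\Lambda}(u-)-\hat{\Lambda}(u)|$ by triangle inequality and disposes of it separately, invoking the weak convergence of $\sqrt{n}(\hat{\Lambda}-\Lambda)$ in $D[0,s]$ to a continuous Gaussian limit together with Theorem~13.4 of \cite{Billingsley:1999} (maximum jump size of a $D$-convergent sequence with continuous limit tends to zero in probability); only afterwards does it decompose $\hat{\Lambda}(u)-\Lambda(u)$ into the martingale $n^{-1/2}M(u)$ and the bias $\tilde{\Lambda}(u)-\Lambda(u)$. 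You instead notice that $\Lambda^{*}=\tilde{\Lambda}$ is continuous (because $\Lambda$ is), so $\Lambda^{*}(u-)=\Lambda^{*}(u)$ and the left limit passes straight through the identity $\hat{\Lambda}-\Lambda^{*}=n^{-1/2}M$ to give $\hat{\Lambda}(u-)-\Lambda^{*}(u)=n^{-1/2}M(u-)$; then $\sup_{u}|M(u-)|\le\sup_{u}|M(u)|$ absorbs the jump bookkeeping with no separate argument. This removes the appeal to Skorokhod-topology machinery entirely and reduces the proof to two terms (martingale and bias) rather than three. For the martingale term, you and the paper both rely on Lenglart: you establish $\sup_{[0,s]}|M|=O_{\mathbb{P}}(1)$ and multiply by $n^{k-1/2}\to0$, whereas the paper applies Lenglart directly to the rescaled martingale $n^{k-1/2}M$ and shows its predictable variation vanishes — these are equivalent in substance. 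The bias term is handled identically via condition~\ref{condition:at_risk_process_general}. Your version is the more economical one; the paper's extra jump-size step is redundant once one observes that the compensator $\Lambda^{*}$ is continuous.
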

\begin{proof}
	%The conditions of Theorem IV.1.2 of \cite{Andersen:1991} imply \ref{condition:at_risk_process_general} on $[0,s]$ if the hazard is bounded from below and hence, Lemma \ref{lemma:non_empty_at_risk_set} can be applied.\\
	By the triangle inequality, we have
	\begin{equation*}
		\sup_{u \in [0,s]} n^k |\hat{\Lambda}(u-) - \Lambda(u)| \leq \sup_{u \in [0,s]} n^k |\hat{\Lambda}(u-) - \hat{\Lambda}(u)| + \sup_{u \in [0,s]} n^k |\hat{\Lambda}(u) - \Lambda(u)|.
	\end{equation*}
	The first summand is the maximum jump size of $\hat{\Lambda}$ on $[0,s]$. As we assume that Theorem IV.1.2 of \cite{Andersen:1991} can be applied, the convergence
	\begin{equation}\label{eq:na_error}
		\sqrt{n} \left( \hat{\Lambda}(\cdot) - \Lambda(\cdot) \right) \overset{\mathcal{D}}{\to} U 
	\end{equation}
	holds, where $U$ is a Gaussian martingale with a structure of independent increments. This process has almost surely continuous sample paths. It follows from Theorem 13.4 of \cite{Billingsley:1999} that the maximum jump size of the process in \eqref{eq:na_error} converges to 0 in probability. This still holds if we replace $\sqrt{n}$ by $n^k$ for some $k < 1/2$.\\
	For the second summand we can also apply a triangle inequality to obtain
	\begin{equation}\label{eq:na_triangle}
		\begin{split}
			& \sup_{u \in [0,s]} n^k |\hat{\Lambda}(u) - \Lambda(u)|\\
			\leq & \sup_{u \in [0,s]} n^k |\hat{\Lambda}(u) - \tilde{\Lambda}(u)| + \sup_{u \in [0,s]} n^k |\tilde{\Lambda}(u) - \Lambda(u)|
		\end{split}
	\end{equation}
	where $\tilde{\Lambda}$ is the modified cumulative hazard function that only increases if $J=1$, i.e.
	\begin{equation*}
		\tilde{\Lambda}(u) \coloneqq  \int_0^u J(v) d\Lambda(v).
	\end{equation*}
	In particular, we obtain $n^k (\hat{\Lambda}(u) - \tilde{\Lambda}(u)) = n^{k-1/2} M(u)$ for a martingale as defined in \eqref{eq:na_martingale} and hence it is itself a martingale. For the first summand of the right hand side of \eqref{eq:na_triangle} we can hence apply Lenglart's inequality as stated in Chapter II.5.2.1 of \cite{Andersen:1991} to obtain
	\begin{align*}
		\mathbb{P}\left[ \sup_{u \in [0,s]} n^k |\hat{\Lambda}(u) - \Lambda(u)| > \varepsilon \right]=&\mathbb{P}\left[ \sup_{u \in [0,s]} |n^{k-1/2} M(u)| > \varepsilon \right]\\
		\leq & \frac{d}{\varepsilon^2} + \mathbb{P}[\langle n^{k-1/2} M(u) \rangle (s) > d]\\
		= & \frac{d}{\varepsilon^2} + \mathbb{P}\left[ n^{2k} \int_0^s \frac{J(v)}{Y(v)} d\Lambda(v) \right]\\
		\leq & \frac{d}{\varepsilon^2} + \mathbb{P}\left[ n^{2k} \cdot \Lambda(s) \cdot \frac{1}{\inf_{u \in [0,s]} Y(u)} > d \right]\\
		= & \frac{d}{\varepsilon^2} + \mathbb{P}\left[ \underbrace{\inf_{u \in [0,s]} \frac{Y(u)}{n}}_{\overset{\mathbb{P}}{\to} \inf_{u \in [0,s]} y(u) \geq \gamma} < \underbrace{n^{2k-1}}_{\to 0} \cdot \frac{\Lambda(s)}{d} \right]
	\end{align*}
	for arbitrary $\varepsilon, d > 0$. Now, for any fixed $\varepsilon$, we can choose $d$ small enough s.t. for $n$ large enough, the sum on the right hand side becomes arbitrarily small which proves convergence for the first summand of the right hand side of \eqref{eq:na_triangle}. For the second summand, we have for some arbitrary $\varepsilon > 0$ that
	\begin{align*}
		\mathbb{P}\left[ \sup_{u \in [0,s]} n^k |\tilde{\Lambda}(u) - \Lambda(u)| > \varepsilon \right] = & \mathbb{P}\left[  n^k \cdot \int_0^s (1 - J(v)) d\Lambda(v) > \varepsilon \right]
	\end{align*}
	This vanishes under the assumptions of Theorem IV.1.2 of \cite{Andersen:1991}.
\end{proof}

\begin{lemma}\label{lemma:ls_integral_reformulation}
	Let $\hat{\Lambda}$ be a Nelson-Aalen estimator of a cumulative hazard function $\Lambda$ as in the preceding Lemma \ref{lemma:na_convergence}. By $\hat{S}$ we denote the corresponding Nelson-Altshuler estimator $\hat{S}=\exp(-\hat{\Lambda})$ for the true survival function $S$. Also let $G$ be an increasing, continuous function. Then, as $n \to \infty$, we have
	\begin{equation*}
		\sqrt{n} \left| \int_0^s (\hat{S}(u) - S(u))\,dG(u) + \int_0^s (\hat{\Lambda}(u) - \Lambda(u))S(u)\,dG(u) \right| \overset{\mathbb{P}}{\to} 0.
	\end{equation*}
\end{lemma}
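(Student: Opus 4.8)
The plan is to reduce the claim to a pointwise second-order Taylor expansion of the exponential, combined with the polynomial rate of uniform consistency of the Nelson--Aalen estimator established in Lemma~\ref{lemma:na_convergence}. Write $f(x) \coloneqq e^{-x}$, so that $\hat S(u) = f(\hat\Lambda(u))$ and $S(u) = f(\Lambda(u))$, with $f'(x) = -e^{-x}$ and $f''(x) = e^{-x}$. Since $\hat\Lambda(u) \geq 0$ and $\Lambda(u) \geq 0$, Taylor's theorem with Lagrange remainder gives, for every fixed $u$,
\begin{equation*}
\hat S(u) - S(u) = -S(u)\bigl(\hat\Lambda(u) - \Lambda(u)\bigr) + \tfrac12 e^{-\xi_u}\bigl(\hat\Lambda(u) - \Lambda(u)\bigr)^2
\end{equation*}
for some (random) $\xi_u$ lying between $\Lambda(u)$ and $\hat\Lambda(u)$; in particular $\xi_u \geq 0$ and $e^{-\xi_u}\leq 1$. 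Hence the integrand appearing in the lemma is nonnegative and uniformly bounded by the quadratic term:
\begin{equation*}
\bigl(\hat S(u) - S(u)\bigr) + \bigl(\hat\Lambda(u) - \Lambda(u)\bigr)S(u) = \tfrac12 e^{-\xi_u}\bigl(\hat\Lambda(u) - \Lambda(u)\bigr)^2 \leq \tfrac12\bigl(\hat\Lambda(u) - \Lambda(u)\bigr)^2 .
\end{equation*}

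Next I would turn this into a supremum bound. Since $G$ is continuous and increasing on $[0,s]$ it has finite total variation $G(s) - G(0)$ there, and therefore
\begin{equation*}
\sqrt{n}\left| \int_0^s \bigl(\hat S(u) - S(u)\bigr)\,dG(u) + \int_0^s \bigl(\hat\Lambda(u) - \Lambda(u)\bigr)S(u)\,dG(u) \right| \leq \frac{\sqrt{n}}{2}\,\bigl(G(s) - G(0)\bigr)\,\sup_{u \in [0,s]}\bigl(\hat\Lambda(u) - \Lambda(u)\bigr)^2 .
\end{equation*}
It thus remains to show that $\sqrt{n}\,\sup_{u\in[0,s]}(\hat\Lambda(u) - \Lambda(u))^2 \overset{\mathbb{P}}{\to} 0$. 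For this I would apply Lemma~\ref{lemma:na_convergence} with an exponent $k$ chosen in the range $1/4 < k < 1/2$; as its proof shows, the asserted convergence holds not only for $\hat\Lambda(u-)$ but also for $\hat\Lambda(u)$ itself, since the maximal jump of $\hat\Lambda$ multiplied by $n^k$ is treated there and vanishes. Hence $\sup_{u\in[0,s]} n^k\,|\hat\Lambda(u) - \Lambda(u)| \overset{\mathbb{P}}{\to} 0$, and squaring (continuous mapping theorem) yields $n^{2k}\,\sup_{u\in[0,s]}(\hat\Lambda(u) - \Lambda(u))^2 \overset{\mathbb{P}}{\to} 0$. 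Because $2k > 1/2$ we may write $\sqrt{n}\,\sup_u(\hat\Lambda - \Lambda)^2 = n^{1/2 - 2k}\cdot\bigl(n^{2k}\sup_u(\hat\Lambda - \Lambda)^2\bigr)$ with $n^{1/2-2k} \to 0$, so the right-hand side converges to $0$ in probability. Combined with the displayed inequality, this proves the lemma.

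The argument has no genuine obstacle beyond keeping track of exponents, which is the one point that must be handled correctly: Lemma~\ref{lemma:na_convergence} only supplies a rate $n^k$ with $k<1/2$, strictly slower than the $\sqrt n$ one might naively want, but because the Taylor remainder is \emph{quadratic} in $\hat\Lambda - \Lambda$ this is already enough --- any $k\in(1/4,1/2)$ converts $\sqrt n\,(\hat\Lambda-\Lambda)^2$ into $n^{1/2-2k}\to 0$ times a quantity vanishing in probability. I would also remark that if, as in the applications in Sections~\ref{sec:single_arm_os}--\ref{sec:rct_pfs_os_gap}, $\hat S$ is defined through the left limit $\hat S(u)=\exp(-\hat\Lambda(u-))$, the identical argument goes through verbatim, the only change being that one expands around $\Lambda(u)$ with increment $\hat\Lambda(u-)-\Lambda(u)$ and uses $\sup_u n^k|\hat\Lambda(u-)-\Lambda(u)|\overset{\mathbb{P}}{\to}0$, which is exactly the statement of Lemma~\ref{lemma:na_convergence}.
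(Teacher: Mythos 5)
Your argument is correct and follows the same core idea as the paper's proof: expand the exponential to second order and absorb the quadratic remainder using the sub-$\sqrt n$ uniform consistency rate $n^k$, $k\in(1/4,1/2)$, from Lemma~\ref{lemma:na_convergence}. The only genuine difference is in how the remainder is controlled. The paper writes the full power series $\sum_{k\ge2}\frac{1}{k!}(\Lambda-\hat\Lambda)^k$ and, to sum this tail, must split on the event $\{\Delta^*(s)\le 1\}$ (where $|\Delta(u)|^k\le|\Delta(u)|^2$ for $k\ge 2$) and handle $\{\Delta^*(s)>1\}$ separately via Lemma~\ref{lemma:na_convergence}. You instead use the Lagrange form of the second-order remainder, noting that $\xi_u\ge 0$ forces $e^{-\xi_u}\le 1$, which yields the clean deterministic bound $0\le(\hat S-S)+S(\hat\Lambda-\Lambda)\le\tfrac12(\hat\Lambda-\Lambda)^2$ uniformly in $u$ with no case split at all. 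This is a slightly tighter and more economical route to the same estimate: it also makes visible that the quantity inside the absolute value is in fact nonnegative, so the modulus is superfluous. Both versions then close identically by writing $\sqrt n\sup_u(\hat\Lambda-\Lambda)^2=n^{1/2-2k}\bigl(n^{2k}\sup_u(\hat\Lambda-\Lambda)^2\bigr)\to 0$ for any $k\in(1/4,1/2)$, and your closing remark that the argument works verbatim with $\hat\Lambda(u-)$ in place of $\hat\Lambda(u)$ is exactly the reading needed for the applications in Sections~\ref{sec:single_arm_os}--\ref{sec:rct_pfs_os_gap}.
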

\begin{proof}
	From the power series representation of the exponential function, we obtain
	\begin{align*}
		& \sqrt{n} \int_0^s (\hat{S}(u) - S(u))\, dG(u)\\
		= & \sqrt{n} \int_0^s \left(\frac{\hat{S}(u)}{S(u)} - 1 \right)S(u)\,dG(u)\\
		=&\sqrt{n} \int_0^s \left( \exp(-(\hat{\Lambda}(u) - \Lambda(u))) - 1 \right) S(u)\,dG(u)\\
		=&-\sqrt{n} \int_0^s \left(\hat{\Lambda}(u) - \Lambda(u) \right) S(u)\,dG(u) + \underbrace{\sqrt{n} \int_0^s \sum_{k=2}^{\infty} \frac{1}{k!} (\Lambda(u) - \hat{\Lambda}(u))^k S(u)\,dG(u)}_{\eqqcolon R(s)}
	\end{align*}
	To prove our claim, we need to show that $|R(s)| \overset{\mathbb{P}}{\to} 0$. For this purpose, let $\Delta(u) \coloneqq  \hat{\Lambda}(u) - \Lambda(u)$ and $\Delta^{\star}(u)\coloneqq  \sup_{v \in [0,u]} |\Delta(v)|$. For an arbitrary $\varepsilon > 0$, we have
	\begin{align*}
		\mathbb{P}(|R(s)|>\epsilon)  & = \mathbb{P}(|R(s)|>\epsilon, \Delta^*(s) \leq 1) +  \mathbb{P}(|R(s)|>\epsilon, \Delta^*(s) > 1) \nonumber \\
		&\leq \mathbb{P}\left(  \sqrt{n} \int_0^s \sum_{k=1}^{\infty} [(k+1)!]^{-1} |\Delta(u)| \cdot |\Delta(u)|^k dF_j(u) >\epsilon, \Delta^*(s) \leq 1  \right) +  \mathbb{P}(\Delta^*(s) > 1) \nonumber \\
		&\leq \mathbb{P}\left(  \sqrt{n} \int_0^s \sum_{k=1}^{\infty} [(k+1)!]^{-1} |\Delta(u)| \cdot |\Delta(u)| dF_j(u) >\epsilon, \Delta^*(s) \leq 1  \right) +  \mathbb{P}(\Delta^*(s) > 1) \nonumber \\
		&\leq \mathbb{P}\left(  (e-2) \cdot \int_0^s \sqrt{n} \Delta(u)^2 dF_j(u) > \epsilon   \right) +  \mathbb{P}(\Delta^*(s) > 1) \label{eq:rest_term}
	\end{align*}
	The second summand vanishes by Lemma \ref{lemma:na_convergence}. To see that the first summand also vanishes as $n \to \infty$, note that the following inequality holds
	\begin{align*}
		\mathbb{P}\left(  (e-2) \cdot \int_0^s \sqrt{n} \Delta(u)^2 dF_j(u) > \epsilon   \right)
		&\leq \mathbb{P}\left(  \sqrt{n} \cdot \sup_{u \in [0,s]} \Delta(u)^2  > \frac{\epsilon}{(e-2) \cdot F_j(s)}   \right) \\
		& = \mathbb{P}\left(  n^{1/4} \Delta^*(s)  > \sqrt{\frac{\epsilon}{(e-2) \cdot F_j(s)} }  \right),
	\end{align*}
	which also vanishes by Lemma \ref{lemma:na_convergence}. Consequently, $\sqrt{n} \int_0^s [ \hat{S}_{PFS}(u) - S_{PFS}(u) ] d\Lambda_{j2}(u)$ equals $-\sqrt{n} \int_0^s [ \hat{\Lambda}_{0*}(u-) - \Lambda_{0*}(u) ] dF_j(u) \equiv - \int_0^s M_{0*}(u) dF_j(u)$ up to terms that vanish in probability as $n \to \infty$.
\end{proof}

\subsection{Proof of Distributional Properties of $\Psi_{OS}^{(\kappa)}(s)$ from Chapter \ref{Sec03.02}, Eq. \eqref{eq03.03}}
In this chapter we address the setting of a single-arm study. All patients receive the experimental therapy, i.e. the therapy group variable only takes the value $x=E$. We assume that condition \ref{condition:at_risk_process_general} holds for $J_{01}$ and $J_{02}$ and hence we can apply Lemma \ref{lemma:non_empty_at_risk_set}.\\
Assume that the contiguous alternatives $H_1$ as defined in Eq. \eqref{eq:contiguous_alternatives_both} hold true, and let $\Delta_x(u)\coloneqq  S_{x,0}(u) - S_{x}(u)$ for $x=PFS, OS$. Then, due to $dM_{j2}=\sqrt{n}(d\hat{\Lambda}_{j2} - J_{j2}d{\Lambda}_{j2})$:
\begin{align*}
\begin{split}
\Psi_{OS}(s) = & \sqrt{n} \left(  \int_0^s S_{PFS,0}(u) d\hat{\Lambda}_{02}(u) 				+ \int_0^s [S_{OS,0}(u) - S_{PFS,0}(u)] d\hat{\Lambda}_{12}(u) 				+ S_{OS,0}(u) - 1				\right) \\
= & \int_0^s S_{PFS,0}(u) dM_{02}(u) 				+ \int_0^s [S_{OS,0}(u) - S_{PFS,0}(u)] dM_{12}(u) \\
&+ \sqrt{n} \left(  \int_0^s S_{PFS,0}(u) J_{02}(u) d{\Lambda}_{02}(u) 				+ \int_0^s [S_{OS,0}(u) - S_{PFS,0}(u)] J_{12}(u) d{\Lambda}_{12}(u) 				+ S_{OS,0}(u) - 1				\right) \\
= & \int_0^s S_{PFS,0}(u) dM_{02}(u) 				+ \int_0^s [S_{OS,0}(u) - S_{PFS,0}(u)] dM_{12}(u) \\
& + \sqrt{n} \left(  \int_0^s S_{PFS}(u) J_{02}(u) d{\Lambda}_{02}(u) 				        + \int_0^s [S_{OS}(u) - S_{PFS}(u)] J_{12}(u) d{\Lambda}_{12}(u) 				+ S_{OS,0}(u) - 1	  	\right) \\
& + \sqrt{n} \left(  \int_0^s \Delta_{PFS}(u) J_{02}(u) d{\Lambda}_{02}(u) 				  + \int_0^s [\Delta_{OS}(u) - \Delta_{PFS}(u)] J_{12}(u) d{\Lambda}_{12}(u) 									 	\right). 
\end{split}
\end{align*}
By our assumption \ref{condition:at_risk_process_general} and by Lemma \ref{lemma:taylor_expansions} we see that the sum of the second and the third summand amouts to the drift term given in \eqref{eq:os_drift_intersection}. 
The first summand is an $({\cal{F}}_{s})_{s \geq 0}$--martingale with covariation process 
\begin{align*}
\begin{split}
& \int_0^s S_{PFS,0}(u)^2 d[M_{02}](u)  + 2 \cdot \int_0^s S_{PFS,0}(u)[S_{OS,0}(u) - S_{PFS,0}(u)] d[M_{02},M_{12}](u) 		\\ 
& + \int_0^s [S_{OS,0}(u) - S_{PFS,0}(u)]^2 d[M_{12}](u) \\
= & n \sum_i \int_0^s S_{PFS,0}(u)^2 \frac{I_{02,i}(u)}{Y_{02}(u)^2} dN_i^{OS}(u) + n \sum_i \int_0^s [S_{OS,0}(u) - S_{PFS,0}(u)]^2 \frac{I_{12,i}(u)}{Y_{12}(u)^2} dN_i^{OS}(u) \\
= & n \sum_{i}  \int_0^s \left( S_{PFS,0}(u)^2 \frac{I_{02,i}(u)}{Y_{02}(u)^2} + [S_{OS,0}(u) - S_{PFS,0}(u)]^2 \frac{I_{12,i}(u)}{Y_{12}(u)^2} \right) dN_i^{OS}(u). 	
\end{split}
\end{align*}
This follows with Eq. \eqref{eq:na_martingale_covariation}, as, for any $({\cal{F}}_{s})_{s \geq 0}$--martingale $M$ and predictable process $H$, the stochastic integral $\int H(u) dM(u)$ is an $({\cal{F}}_{s})_{s \geq 0}$--martingale with covariation process  $\int H(u)^2 d[M](u)$. All in all, under the contiguous alternatives $H_1$, $\Psi_{OS}(s)$ is the sum of an $({\cal{F}}_{s})_{s \geq 0}$--martingale with covariation process as in Eq. \eqref{eq03.04} and a deterministic drift $\mu_{OS}(s)$ as in Eq. \eqref{eq:os_drift_intersection}. 

Finally notice that $\Psi_{OS}(s)$ suffices the assumptions of Lemma \ref{lemma:rebolledo_condition_iii} and that $[\Psi_{OS}](s) \to \sigma_{OS}(s)^2$ in probability as $n \to \infty$ for some non--decreasing deterministic function $\sigma_{OS}(s)^2$ by a law of large numbers. Consequently, under the contiguous alternatives, $\Psi_{OS}(s)$ converges in distribution to a Gaussian process with independent increments, drift $\mu_{OS}(s)$, and variance function $\sigma_{OS}(s)^2$ as $n \to \infty$ by a central limit theorem for local martingales \citep{bib5}. In particular, $\Psi_{OS}(s)$ has zero drift when $H_0$ holds true. 

\subsection{Proof of Distributional Properties of $\Psi_{PFS}^{(\kappa)}(s)$ from Chapter \ref{Sec03.04}, Eq. \eqref{eq03.09}}

In this chapter we use the setting of a single-arm study. All patients receive the experimental therapy, i.e. the therapy group variable only takes the value $x=E$. We assume that condition \ref{condition:at_risk_process_general} holds for $y_0$ on $[0,s]$ and for $y_1$ on $[\varepsilon,s]$ for any $\varepsilon>0$ and hence we can apply Lemma \ref{lemma:non_empty_at_risk_set}.\\
Assume that the contiguous alternative $H_{1,n}$ defined in Eq. \eqref{eq:contiguous_alternatives_both} holds true.
Then:
\begin{align*}
\begin{split}
\Psi_{PFS}(s)&= \sqrt{n} \left(  \int_0^s S_{PFS,0}(u) d\hat{\Lambda}_{0*}(u) 						+ S_{PFS,0}(u) - 1				\right) \\
&= \sqrt{n} \left(  \int_0^s S_{PFS,0}(u) d[\hat{\Lambda}_{0*} - {\Lambda}_{0*}](u) 			+  \int_0^s S_{PFS,0}(u) d{\Lambda}_{0*}(u)			+ S_{PFS,0}(u) - 1				\right) \\
&= \sqrt{n} \left(  \int_0^s S_{PFS,0}(u) d[\hat{\Lambda}_{0*} - {\Lambda}_{0*}](u) 			+  \int_0^s S_{PFS,0}(u) d{\Lambda}_{0*}(u)			- \int_0^s S_{PFS,0}(u) d{\Lambda}_{0*,0}(u) 				\right) \\
&= \int_0^s S_{PFS,0}(u) dM_{0*}(u) 				+  \sqrt{n}  \int_0^s S_{PFS,0}(u) d[{\Lambda}_{0*}-{\Lambda}_{0*,0}](u) 			\\
&= \int_0^s S_{PFS,0}(u) dM_{0*}(u) 				+  \sqrt{n} \left( e^{-\frac{\gamma_{PFS}}{\sqrt{n}}} - 1 \right) \int_0^s S_{PFS,0}(u) d{\Lambda}_{0*,0}(u) \\
&= \int_0^s S_{PFS,0}(u) dM_{0*}(u) 				+  \sqrt{n} \left( e^{-\frac{\gamma{PFS}}{\sqrt{n}}} - 1 \right) \left( 1 - S_{PFS,0}(u) \right) 
\end{split}
\end{align*}
This shows that, under the contiguous alternatives $H_1$, $\Psi_{PFS}(s)$ is the sum of a $({\cal{F}}_{s})_{s \geq 0}$-martingale with covariation process $[\Psi_{PFS}](s)$ as given in Eq. \eqref{eq03.10} and a deterministic drift $\sqrt{n} \exp(\gamma_{PFS}/\sqrt{n} - 1) \cdot ( 1 - S_{PFS,0}(u)) $. 

Finally notice that $\Psi_{PFS}$ suffices the assumptions of Lemma \ref{lemma:rebolledo_condition_iii} and that $\sqrt{n} [ \exp(-\gamma_{PFS}/\sqrt{n}) - 1 ] \to \gamma_{PFS}$ (see Lemma \ref{lemma:taylor_expansions}) as $n \to \infty$, and that $[\Psi_{PFS}](s) \to \sigma_{PFS}(s)$ in probability as $n \to \infty$ for some non--decreasing deterministic function $\sigma_{PFS}(s)$ by a law of large numbers. Consequently, $\Psi_{PFS}(s)$ converges in distribution to a Gaussian process with independent increments, mean $\mu_{PFS}(s)$ as given in Eq. \eqref{eq03.50}, and variance function $\sigma_{PFS}(s)$ as $n \to \infty$ by a central limit theorem for local martingales \citep{bib5}. In particular, $\Psi_{PFS}(s)$ has zero drift when $H_0$ holds true. 

\subsection{Proof of Distributional Properties of $\Psi^{(\kappa)}(s)$  from Chapter \ref{Sec04.02}, Eq. \eqref{eq04.04}}

\subsubsection{Asymptotic distribution of $\Psi^{(\kappa)}$}

In this chapter we use the setting of a single-arm study. All patients receive the experimental therapy, i.e. the therapy group variable only takes the value $x=E$. We assume that condition \ref{condition:at_risk_process_general} holds for $y_0$ on $[0,s]$ and for $y_1$ on $[\varepsilon,s]$ for any $\varepsilon>0$ and hence we can apply Lemma \ref{lemma:non_empty_at_risk_set}.\\
Let us assume that the contiguous alternatives $H_1$ as defined in Eq. \eqref{eq:contiguous_alternatives_os} hold true, and let $\Delta_x(u)\coloneqq  S_{x,0}(u) - S_{x}(u)$ for $x \in \{PFS, OS\}$. Recalling the differential identity 
\begin{equation*}
	dM_{j2}(u) = \sqrt{n} \left( d\hat{\Lambda}_{j2}(u) - J_{j2}(u) d{\Lambda}_{j2}(u) \right)
\end{equation*}
from Eq. \eqref{eq:na_martingale}, we obtain: 
\begin{align*}
\begin{split}
\Psi(s)= & \sqrt{n} \left(  \int_0^s \hat{S}_{PFS}(u) d\hat{\Lambda}_{02}(u) 				+ \int_0^s [S_{OS,0}(u) - \hat{S}_{PFS}(u)] d\hat{\Lambda}_{12}(u) 				+ S_{OS,0}(u) - 1				\right) \\
= & \sqrt{n} \left(  \int_0^s S_{PFS}(u) J_{02}(u) d{\Lambda}_{02}(u) 				  + \int_0^s [S_{OS}(u) - S_{PFS}(u)] J_{12}(u) d{\Lambda}_{12}(u) 				+ S_{OS,0}(u) - 1	  	\right) \\
& + \sqrt{n} \int_0^s (S_{OS,0}(u) - S_{OS}(u)) J_{12}(u) d{\Lambda}_{12}(u)\\
& + \int_0^s S_{PFS}(u) dM_{02}(u) 														+  \int_0^s [S_{OS,0}(u) - S_{PFS}(u)] dM_{12}(u) \\
& + \sqrt{n} \cdot \int_0^s [ \hat{S}_{PFS}(u) - S_{PFS}(u) ] J_{02}(u) d\Lambda_{02}(u) 		-  \sqrt{n} \cdot \int_0^s [ \hat{S}_{PFS}(u) - S_{PFS}(u) ] J_{12}(u) d\Lambda_{12}(u) \\
& + \int_0^s [ \hat{S}_{PFS}(u) - S_{PFS}(u) ] dM_{02}(u) 		-  \int_0^s [ \hat{S}_{PFS}(u) - S_{PFS}(u) ] dM_{12}(u) 
\end{split}
\end{align*}
with $\hat{S}_{PFS}(u) \coloneqq  \exp(-\hat{\Lambda}_{0*}(u-))$. We now evaluate the four summands in the last equation above.
\\
\textsc{The First and Second Summand:} We can exploit condition \ref{condition:at_risk_process_general} and Lemma \ref{lemma:taylor_expansions} to see that this is asymptotically equivalent to the deterministic drift term
\begin{equation*}
	S_{OS,0}(s) \log(S_{OS,0}(s)) \cdot \gamma_{OS} - \gamma_{OS} \int_0^s S_{OS,0}(u)\log(S_{OS,0}(u)) d\Lambda_{12}(u).
\end{equation*}
\\
\textsc{The Third Summand:} The second summand is a mean-zero $({\cal{F}}_{s})_{s \geq 0}$--martingale with covariation process 
\begin{align*}
\begin{split}
& \int_0^s S_{PFS}(u)^2 d[M_{02}](u)  + 2 \cdot \int_0^s S_{PFS}(u)(S_{OS,0}(u) - S_{PFS}(u)) d[M_{02},M_{12}](u) 		\\ 
& \qquad + \int_0^s (S_{OS,0}(u) - S_{PFS}(u))^2 d[M_{12}](u) \\
&= n \sum_i \int_0^s S_{PFS}(u)^2 \frac{I_{02,i}(u)}{Y_{02}(u)^2} dN_i^{(OS)}(u) + n \sum_i \int_0^s [S_{OS,0}(u) - S_{PFS}(u)]^2 \frac{I_{12,i}(u)}{Y_{12}(u)^2} dN_i^{(OS)}(u) \\
&= n \sum_{i}  \int_0^s \left( S_{PFS}(u)^2 \frac{I_{02,i}(u)}{Y_{02}(u)^2} + [S_{OS,0}(u) - S_{PFS}(u)]^2 \frac{I_{12,i}(u)}{Y_{12}(u)^2} \right) dN_i^{OS}(u) 	
\end{split}
\end{align*}
This follows with Eq. \eqref{eq:na_martingale_covariation}, as, for any $({\cal{F}}_{s})_{s \geq 0}$--martingale $M$ and predictable process $H$, the stochastic integral $\int H(u) dM(u)$ is an $({\cal{F}}_{s})_{s \geq 0}$--martingale with covariation process  $\int H(u)^2 d[M](u)$. 
\\
\textsc{The Fifth Summand:} We claim that the fourth summand vanishes in probability as $n \to \infty$.
To see this, we use that $\sup_{u \in [0,s]} n^{k} |\hat{\Lambda}_{0*}(u-) - {\Lambda}_{0*}(u)| \to 0$ in probability as $n \to \infty$ by Lemma \ref{lemma:na_convergence}. It then follows that $\sup_{u \in [0,s]} |\hat{S}_{PFS}(u-) - {S}_{PFS}(u)| \to 0$ in probability as $n \to \infty$, because $|e^{-x}-e^{-x'}|\leq |x-x'|$ for all $x,x' \geq 0$. By a central limit theorem for local martingales \citep{bib5}, finally also notice that $M_{j2}$ converges in distribution to a Gaussian process as $n \to \infty$ and the covariation process $[M_{j2}]$ converges to some non-decreasing deterministic variance function as $n \to \infty$ by a law of large numbers. From appeal to Slutsky's theorem, it thus follows that the fifth summand vanishes in probability as $n \to \infty$.
\\
\textsc{The Fourth Summand:} We first introduce the abbreviation
\begin{align*}
\begin{split}
F_j(s) = \int_0^s \exp(-\Lambda_{0*}(u)) d\Lambda_{j2}(u), \qquad \text{for } j=0,1.
\end{split}
\end{align*}
The sum $\sqrt{n}|\int_0^s [ \hat{S}_{PFS}(u) - S_{PFS}(u) ] d\Lambda_{j2}(u) + \int_0^s [ \hat{\Lambda}_{0*}(u-) - \Lambda_{0*}(u) ] dF_j(u)|$ vanishes in probability as $n \to \infty$ by Lemma \ref{lemma:ls_integral_reformulation}. Now, integration by parts for semimartingales implies that
\begin{align*}
\begin{split}
\int_0^s M_{0*}(u) dF_j(u) = M_{0*}(s) F_j(s) -  \int_0^s F_j(u) dM_{0*}(u),
\end{split}
\end{align*}
since $F_j(u)$ is continuous and deterministic, and thus of finite variation. In conclusion, we obtain that
\begin{align*}
\begin{split}
\sqrt{n}\int_0^s [ \hat{S}_{PFS}(u) - S_{PFS}(u) ] d\Lambda_{j2}(u) = - M_{0*}(s) F_j(s) + \int_0^s F_j(u) dM_{0*}(u) + o_P(1).
\end{split}
\end{align*}
With condition \ref{condition:at_risk_process_general} this in turn also implies that
\begin{align*}
	\begin{split}
		\sqrt{n}\int_0^s [ \hat{S}_{PFS}(u) - S_{PFS}(u) ] J_{j2}(u) d\Lambda_{j2}(u) = - M_{0*}(s) F_j(s) + \int_0^s F_j(u) dM_{0*}(u) + o_P(1).
	\end{split}
\end{align*}
So, up to terms that vanish in probability as $n \to \infty$, we thus have
\begin{align*}
\begin{split}
\Psi(s)&= 
\gamma_{OS} \cdot S_{OS,0}(s) \log(S_{OS,0}(s))   \\
& + \int_0^s S_{PFS}(u) dM_{02}(u) 														+  \int_0^s [S_{OS,0}(u) - S_{PFS}(u)] dM_{12}(u) \\
& + M_{0*}(s) \cdot [F_{12}(s) - F_{02}(s)]     +   \int_0^s [F_{02}(u) - F_{12}(u)] dM_{0*}(u) 
\end{split}
\end{align*}
In particular, $\Psi(s)$ is zero in expectation when $H_0: \gamma_{OS} = 0$ holds true, because $M_{02}, M_{12}, M_{0*}$ are mean-zero martingales. To elaborate the asymptotic distribution of $\Psi(s)$ as $n \to \infty$, we first elaborate the asymptotic distribution of the multivariate $({\cal{F}}_{s})_{s \geq 0}$-martingale
\begin{equation*}
\Theta(s) \coloneqq 
\begin{pmatrix} \int_0^s S_{PFS}(u) dM_{02}(u) \\ \int_0^s [S_{OS,0}(u) - S_{PFS}(u)] dM_{12}(u) \\  M_{0*}(s)   \\    \int_0^s [F_{02}(u) - F_{12}(u)] dM_{0*}(u)   \end{pmatrix}
\end{equation*}
Its covariation matrix is
\begin{equation*}
[\Theta](s) \coloneqq 
\begin{pmatrix} [\Theta]_{11}(s) & 0  & [\Theta]_{13}(s) & [\Theta]_{14}(s) \\ 0 & [\Theta]_{22}(s) & 0 & 0 \\ [\Theta]_{31}(s) & 0  & [\Theta]_{33}(s) & [\Theta]_{34}(s)  \\    [\Theta]_{41}(s) & 0  & [\Theta]_{43}(s) & [\Theta]_{44}(s)   
\end{pmatrix}
\end{equation*}
with non-zero components
\begin{align*}
\begin{split}
[\Theta]_{11}(s) 													&= n \cdot \sum_{i=1}^n \int_0^s S_{PFS}(u)^2 													\frac{I_{02,i}(u)}{Y_{02}(u)^2} dN_i^{OS}(u)  \\
[\Theta]_{13}(s) \equiv [\Theta]_{31}(s)  &= n \cdot \sum_{i=1}^n \int_0^s S_{PFS}(u) 														\frac{I_{02,i}(u)}{Y_{02}(u)^2} dN_i^{OS}(u)  \\
[\Theta]_{14}(s) \equiv [\Theta]_{41}(s)  &= n \cdot \sum_{i=1}^n \int_0^s S_{PFS}(u) \cdot [F_{02}(u)-F_{12}(u)]	\frac{I_{02,i}(u)}{Y_{02}(u)^2} dN_i^{OS}(u)  \\
[\Theta]_{22}(s) 													&= n \cdot \sum_{i=1}^n \int_0^s [S_{OS,0}(u)-S_{PFS}(u)]^2 							\frac{I_{12,i}(u)}{Y_{12}(u)^2} dN_i^{OS}(u)  \\
[\Theta]_{33}(s) 													&= n \cdot \sum_{i=1}^n \int_0^s                          							\frac{I_{02,i}(u)}{Y_{02}(u)^2} dN_i^{PFS}(u) \\
[\Theta]_{34}(s) \equiv [\Theta]_{43}(s)  &= n \cdot \sum_{i=1}^n \int_0^s [F_{02}(u)-F_{12}(u)] 	    						\frac{I_{02,i}(u)}{Y_{02}(u)^2} dN_i^{PFS}(u) \\
[\Theta]_{44}(s) 													&= n \cdot \sum_{i=1}^n \int_0^s [F_{02}(u)-F_{12}(u)]^2  							\frac{I_{02,i}(u)}{Y_{02}(u)^2} dN_i^{PFS}(u). 
\end{split}
\end{align*}
It converges in probability as $n \to \infty$ to a deterministic covariance matrix $V(s)$ by a law of large numbers. Moreover, $\Theta(s)$ suffices the assumptions of Lemma \ref{lemma:rebolledo_condition_iii}. By a central limit theorem for local martingales \citep{bib5}, in the large sample limit, $\Theta(s)$ thus converges in distribution to a mean zero, $({\cal{F}}_{s})_{s \geq 0}$-adapted Gaussian process with independent increments, and variance function $V(s)$.

\subsubsection{Estimation of the variance structure of $\Psi^{(\kappa)}(s)$}

Since the matrix $[\Theta]$ depends on the unknown nuisance parameters $S_{PFS}(s)$, $F_{02}(s)$ and $F_{12}(s)$, it cannot be used directly as estimator for $V(s)$. Instead, we use estimators $\hat{S}_{PFS}(s)\coloneqq \exp(-\hat{\Lambda}_{0*}(s-))$ and $\hat{F}_{j2}(s)\coloneqq  \int_0^s \exp[- \hat{\Lambda}_{0*}(u-)] d \hat{\Lambda}_{j2}(u)$ for $S_{PFS}(s)$ and $F_{j2}(s)$, respectively, to introduce the matrix
\begin{equation*}
\hat{[\Theta]}(s) \coloneqq 
\begin{pmatrix} \hat{[\Theta]}_{11}(s) & 0  & \hat{[\Theta]}_{13}(s) & \hat{[\Theta]}_{14}(s) \\ 0 & \hat{[\Theta]}_{22}(s) & 0 & 0 \\ \hat{[\Theta]}_{31}(s) & 0  & \hat{[\Theta]}_{33}(s) & \hat{[\Theta]}_{34}(s)  \\    \hat{[\Theta]}_{41}(s) & 0  & \hat{[\Theta]}_{43}(s) & \hat{[\Theta]}_{44}(s)   
\end{pmatrix}
\end{equation*}
with non-zero components
\begin{align}\label{eq:covariance_matrix_estimate}
\begin{split}
\hat{[\Theta]}_{11}(s) 													&= n \cdot \sum_{i=1}^n \int_0^s \hat{S}_{PFS}(u)^2 													\frac{I_{02,i}(u)}{Y_{02}(u)^2} dN_i^{OS}(u)  \\
\hat{[\Theta]}_{13}(s) \equiv \hat{[\Theta]}_{31}(s)  &= n \cdot \sum_{i=1}^n \int_0^s \hat{S}_{PFS}(u) 														\frac{I_{02,i}(u)}{Y_{02}(u)^2} dN_i^{OS}(u)  \\
\hat{[\Theta]}_{14}(s) \equiv \hat{[\Theta]}_{41}(s)  &= n \cdot \sum_{i=1}^n \int_0^s \hat{S}_{PFS}(u) \cdot [\hat{F}_{02}(u)-\hat{F}_{12}(u)]	\frac{I_{02,i}(u)}{Y_{02}(u)^2} dN_i^{OS}(u)  \\
\hat{[\Theta]}_{22}(s) 													&= n \cdot \sum_{i=1}^n \int_0^s [{S}_{OS,0}(u)-\hat{S}_{PFS}(u)]^2 							\frac{I_{12,i}(u)}{Y_{12}(u)^2} dN_i^{OS}(u)  \\
\hat{[\Theta]}_{33}(s) 													&= n \cdot \sum_{i=1}^n \int_0^s                          							\frac{I_{02,i}(u)}{Y_{02}(u)^2} dN_i^{PFS}(u) \\
\hat{[\Theta]}_{34}(s) \equiv \hat{[\Theta]}_{43}(s)  &= n \cdot \sum_{i=1}^n \int_0^s [\hat{F}_{02}(u)-\hat{F}_{12}(u)] 	    						\frac{I_{02,i}(u)}{Y_{02}(u)^2} dN_i^{PFS}(u) \\
\hat{[\Theta]}_{44}(s) 													&= n \cdot \sum_{i=1}^n \int_0^s [\hat{F}_{02}(u)-\hat{F}_{12}(u)]^2  							\frac{I_{02,i}(u)}{Y_{02}(u)^2} dN_i^{PFS}(u) 
\end{split}
\end{align}
which we use as estimator $\hat{V}(s)\coloneqq \hat{[\Theta]}(s)$ for $V(s)$. Notice that ${S}_{OS,0}(u)$ is fixed by the null hypothesis and thus known. We want to apply Lemma \ref{lemma:conv_integral_product_estimators} to show that $\hat{[\Theta]}(s) - [\Theta](s)$ vanishes in probability. For the factor $\hat{S}_{PFS}$ the prerequisites of Lemma \ref{lemma:conv_integral_product_estimators} hold. This is also true for $\hat{F}_{02}$ and $\hat{F}_{12}$ according to Lemma \ref{lemma:F_convergence}. The integrators of the corresponding intervals are given by estimators of variances of Nelson-Aalen estimators. Their convergence is given in Theorem IV.1.2 of \cite{Andersen:1991}.\\
We may summarize as follows:
\begin{align*}
\begin{split}
&\bullet \Theta(s) \underset{n\to \infty}{\overset{\mathcal{D}}{\longrightarrow}} {\cal{N}}(0,V(s)) \quad \text{for each } s \geq 0 \\
&\bullet \hat{[\Theta]}(s) \underset{n\to \infty}{\overset{\mathbb{P}}{\longrightarrow}} V(s) \quad \text{for each } s \geq 0, \text{i.e. } \hat{[\Theta]}(s) \text{ consistently estimates } V(s) \\
&\bullet \Theta(s_1) \text{ and } \Theta(s_2) - \Theta(s_1) \quad \text{are approximately independent for each } 0 \leq s_1 \leq s_2 \\
\end{split}
\end{align*}
In particular, this characterizes the distributional properties of $\Psi(s)$ in the large sample limit, because $\Psi(s) = \mu(s) + L(s) \cdot \Theta(s) + o_P(1)$ with deterministic drift $\mu(s)$ and scale factor $L(s)$ as specified in Eqs. \eqref{A6.eq102} and \eqref{A6.eq103} below.

\subsubsection{Application in adaptive clinical trials}

In single-arm adaptive trials, patients are recruited sequentially in consecutive stages, and assigned to experimental treatment $E$. With a view to application to adaptive hypothesis testing in single-arm trials, we now reintroduce the index $(E \kappa)$ indicating treatment group and stage in all expressions, thus returning to our initial notation as introduced in Chapters \ref{Sec02} and \ref{sec:single_arm_os}. 
In the subset of patients from stage $\kappa$ we therefore calculate the statistic
\begin{align*}
\Psi^{(\kappa)}(s)\coloneqq  \sqrt{n^{(E \kappa)}} \left(   
										\int_0^s \hat{S}_{PFS}^{(E \kappa)}(u) d\hat{\Lambda}_{02}^{(E \kappa)}(u) 
										+ \int_0^s [S_{OS,0}(u) - \hat{S}_{PFS}^{(E \kappa)}(u)] d\hat{\Lambda}_{12}^{(E \kappa)}(u) 
										+ S_{OS,0}(u) - 1
										\right).
\end{align*}
As shown above, we have
\begin{align}\label{A6.eq101}
\Psi^{(\kappa)}(s) = \mu(s) + L(s) \cdot \Theta^{(E \kappa)}(s) + o_P(1)
\end{align}
with drift term
\begin{align}\label{A6.eq102}
\mu(s)\coloneqq  \gamma_{OS} \cdot S_{OS,0}(s) \log(S_{OS,0}(s)) - \gamma_{OS} \int_0^s S_{OS,0}(u)\log(S_{OS,0}(u)) d\Lambda_{12}(u),
\end{align}
scaling factor
\begin{align}\label{A6.eq103}
L(s)\coloneqq (1, 1, F_{12}(s) - F_{02}(s), 1), 
%\quad \Delta F(s)\coloneqq  F_{12}(s) -F_{02}(s), 
\quad {F}_{j2}(s)\coloneqq  \int_0^s \exp[- {\Lambda}_{0*}^{(E)}(u)] d {\Lambda}_{j2}^{(E)}(u) 
\end{align}
and stage-wise multivariate $({\cal{F}}_s)_{s \geq 0}$-martingales
\begin{equation*}
\Theta^{(E \kappa)}(s) \coloneqq 
\begin{pmatrix} \int_0^s S_{PFS}^{(E)}(u) dM_{02}^{(E \kappa)}(u) \\ \int_0^s [S_{OS,0}(u) - S_{PFS}^{(E)}(u)] dM_{12}^{(E \kappa)}(u) \\  M_{0*}^{(E \kappa)}(s)   \\    \int_0^s [F_{02}(u) - F_{12}(u)] dM_{0*}^{(E \kappa)}(u)   \end{pmatrix}
\end{equation*}
that converges in distribution to a mean-zero Gaussian process with independent increments and covariance function $V(s)$, say. Since patients from different stages have identically distributed survival, this limiting covariance function $V(s)$ does not depend on the actual stage $\kappa$. At each stage of the design, the unobservable quantities ${S}_{PFS}(s)$, $F_{02}(u)$, $F_{12}(u)$, $L(s)$, $\Theta^{(E \kappa)}(s)$ and $V(s)$ may be estimated from the data via
\begin{align*}
\begin{split}
\hat{S}_{PFS}^{(E \kappa)}(s) &\coloneqq  \exp[- \hat{\Lambda}_{0*}^{(E  \kappa)}(s-)] \\
\hat{F}_{j2}^{(E  \kappa)}(s) &\coloneqq  \int_0^s \exp[- \hat{\Lambda}_{0*}^{(E  \kappa)}(u-)] d \hat{\Lambda}_{j2}^{(E  \kappa)}(u) \\
\hat{L}^{(E  \kappa)}(s)			&\coloneqq (1, 1, \hat{F}_{12}^{(E  \kappa)}(s) - \hat{F}_{02}^{(E  \kappa)}(s), 1) \\
\hat{\Theta}^{(E \kappa)}(s) 				&\coloneqq 
\begin{pmatrix} \int_0^s \hat{S}_{PFS}^{(E)}(u) dM_{02}^{(E \kappa)}(u) \\ \int_0^s [S_{OS,0}(u) - \hat{S}_{PFS}^{(E)}(u)] dM_{12}^{(E \kappa)}(u) \\  M_{0*}^{(E \kappa)}(s)   \\    \int_0^s [\hat{F}_{02}(u) - \hat{F}_{12}(u)] dM_{0*}^{(E \kappa)}(u)   \end{pmatrix} \\
\hat{V}^{(E \kappa)}(s)  &\coloneqq  [\hat{\Theta}^{(E \kappa)}](s)
\end{split}
\end{align*}
where $[\hat{\Theta}^{(E \kappa)}](s) =  \left(  [\hat{\Theta}^{(E \kappa)}]_{ij}(s) \right)_{i,j=1,\ldots,4} $ is the covariation matrix of the $({\cal{F}}_s)_{s \geq 0}$-martingale $\Theta^{(E \kappa)}(s)$ where unknown nuisance parameters are replaced by their estimates. Its components are the same as those in \eqref{eq:covariance_matrix_estimate}, where the variable quantities are replaced by their stage and group specific counterparts. This amounts to the non-zero entries shown in \eqref{eq:covariance_matrix_estimate_perstage}.\\
In this context we consider testing the null hypothesis of no drift as compared to the historic control: 
\begin{align}\label{eq:null_hyp_os}
\begin{split}
H_0: S_{OS}^{(E)}(s) = S_{OS,0}(s) \quad \Longleftrightarrow \quad H_0: \mu(s) = 0 \quad \text{for all } s \geq 0.
\end{split}
\end{align}
For testing $H_0$ we use the stage-wise statistics $\Psi^{(\kappa)}(s)$. For constructing an adaptive test of $H_0$ we will in fact need knowledge of the asymptotical joint distribution of $\Psi^{(\kappa)}(s)$ and $M_{0*}^{(E \kappa)}(s)$. We have
\begin{equation}\label{A6.eq108}
\begin{pmatrix} \Psi^{(\kappa)}(s) \\ M_{0*}^{(E \kappa)}(s) 
\end{pmatrix}
\sim
{\cal{N}}
\left(
\begin{pmatrix} \mu(s) \\ 0 
\end{pmatrix}
,
\begin{pmatrix} L(s) \cdot V(s) \cdot L(s)^{\textsc{T}} & L(s) \cdot V_{\cdot 3}(s) \\ L(s) \cdot V_{\cdot 3}(s) & V_{33}(s)
\end{pmatrix}
\right)
\end{equation}
where $V_{\cdot 3}(s)$ is the third column of the matrix $V(s)$, and $V_{33}(s)$ is the entry in row three and column three of $V(s)$. In particular,
\begin{equation*}
\Psi^{(\kappa)}(s)\sim{\cal{N}}\left(\mu(s),\sigma(s)^2\right) \quad \text{with } \sigma(s)^2\coloneqq  L(s) \cdot V(s) \cdot L(s)^{\textsc{T}}.
\end{equation*}
Finally, for any $0 < s_1 < s_2$, we will also consider the scaled increments of $\Theta^{(E \kappa)}$ given by
\begin{equation*}
\tilde{\Psi}^{(\kappa)}(s_2,s_1) \coloneqq  L(s_2) \cdot [\Theta^{(E \kappa)}(s_2) - \Theta^{(E \kappa)}(s_1)]. 
%\sim {\cal{N}}\left( 0 , L(s_2) \cdot [V(s_2) - V(s_1)] \cdot L(s_2)^{T}\right),
\end{equation*}
Since $(\Theta^{(E \kappa)}(s))_{s \geq 0}$ and $(M_{0*}^{(E \kappa)}(s))_{s \geq 0}$ are both $({\cal{F}}_s)_{s \geq 0}$-martingales with asymptotically independent increments, the random variable $\tilde{\Psi}^{(\kappa)}(s_2,s_1)$ has the following properties in the large sample limit:
\begin{align*}
\begin{split}
&\bullet \quad \tilde{\Psi}^{(\kappa)}(s_2,s_1) \sim {\cal{N}}\left( 0 , {\varsigma}(s_2,s_1)^2 \right), \quad \text{with } {\varsigma}(s_2,s_1)^2\coloneqq  L(s_2) \cdot [V(s_2) - V(s_1)] \cdot L(s_2)^{\textsc{T}} \\
&\bullet \quad \tilde{\Psi}^{(\kappa)}(s_2,s_1) \text{ is independent from the random vector } (\Psi^{(\kappa)}(s_1),M_{0*}^{(E \kappa)}(s_1)).
\end{split}
\end{align*}
At each stage $\kappa$ of the design, the variance parameters ${\sigma}(s)^2 = {\varsigma}(s,0)^2$ and ${\varsigma}(s_2,s_1)^2$ may be estimated consistently via
\begin{align}\label{eq:covariance_estimate}
\begin{split}
\hat{\sigma}^{(E \kappa)}(s)^2 						&\coloneqq  \hat{L}^{(E \kappa)}(s)   \cdot \hat{V}^{(E \kappa)}(s) \cdot \hat{L}^{(E \kappa)}(s)^{\textsc{T}} \\
\hat{\varsigma}^{(E \kappa)}(s_2,s_1)^2 	&\coloneqq  \hat{L}^{(E \kappa)}(s_2) \cdot [\hat{V}^{(E \kappa)}(s_2) - \hat{V}^{(E \kappa)}(s_1)] \cdot \hat{L}^{(E \kappa)}(s_2)^{\textsc{T}}. 
\end{split}
\end{align}
\ \\ \\ 
We next explicitly construct a two-stage adaptive test of $H_0$ as in \eqref{eq:null_hyp_os} in the context of a single-arm clinical trial for comparing the survival under experimental treatment $E$ to a prefixed reference survival curve. Let $\Psi^{(1)}(s_1)$ denote the statistic from Eq. \eqref{A6.eq101} calculated in the set of stage 1 patients at some early point of time $s_1 > 0$ that is characteristic for the evaluation of short-term response to treatment, e.g. $s_1 = 6$ months after the start of treatment. Let $Z_{11}\coloneqq  \Psi^{(1)}(s_1) / \sigma^{(E 1)}(s_1)$ denote the corresponding standardized statistic. Likewise, let $\hat{\Lambda}_{0*}^{(E 1)}(s_1)$ denote the Nelsen-Aalen estimate of the cumulative short-term hazard $\Lambda_{0*}^{(E)}(s_1)$ for PFS under the experimental treatment, derived in stage 1 patients. Let further $S_2\coloneqq \xi(Z_{11}, \hat{\Lambda}_{0*}^{(E 1)}(s_1)) > s_1$ be some long-term point of time that is chosen at the interim analysis in dependence of the interim estimates $Z_{11}$ and $\hat{\Lambda}_{0*}^{(E 1)}(s_1)$ by means of a continuous function $\xi: \mathbb{R} \times [0,\infty) \to (s_1,\infty)$, $(x_1,x_2) \mapsto \xi(x_1,x_2)$. Let $\Psi^{(j)}(S_2)$ denote the statistic from Eq. \eqref{A6.eq101} evaluated in the set of patients from stage $j=1,2$ at the random long-term point of time $S_2 > s_1$.

On this basis consider the standardized statistics
\begin{align*}
\begin{split}
\hat{Z}_{11} &\coloneqq  \frac{\Psi^{(1)}(s_1)}{\hat{\sigma}^{(E1)}(s_1)} \\
\hat{Z}_{12} &\coloneqq  \frac{\Psi^{(1)}(S_2) - \Psi^{(1)}(s_1)}{\hat{{\varsigma}}^{(E1)}(S_2,s_1)} \\
\hat{Z}_{2}  &\coloneqq  \frac{\Psi^{(2)}(S_2)}{\hat{\sigma}^{(E2)}(S_2)} 
\end{split}
\end{align*}
For some prefixed weights $0<w_1,w_2<1$ with $w_1^2+w_2^2=1$, we are interested in calculating the probability of the event
\begin{align*}
\begin{split}
\hat{\cal{R}} \coloneqq  \{ \hat{Z}_{11} \geq c_1 \} \cup \{    c_0 \leq \hat{Z}_{11} < c_1 ,   w_1 \hat{Z}_{12} + w_2 \hat{Z}_{2} \geq c_2 \}.
\end{split}
\end{align*}
For this purpose, let us also introduce the auxiliary statistics
\begin{align*}
\begin{split}
Z_{11} \coloneqq  \frac{\Psi^{(1)}(s_1)}{\sigma(s_1)}, \quad
Z_{12} \coloneqq  \frac{\Psi^{(1)}(S_2) - \Psi^{(1)}(s_1)}{{\varsigma}(S_2,s_1)}, \quad
Z_{2}  \coloneqq  \frac{\Psi^{(2)}(S_2)}{\sigma(S_2)} 
\end{split}
\end{align*}
as well as the event
\begin{align*}
\begin{split}
{\cal{R}} \coloneqq  \{ Z_{11} \geq c_1 \} \cup \{    c_0 \leq Z_{11} < c_1 ,   w_1 Z_{12} + w_2 Z_{2} \geq c_2 \}.
\end{split}
\end{align*}
Notice that $(\hat{Z}_{11}, \hat{Z}_{12}, \hat{Z}_{2})$ has the same distribution as $({Z}_{11}, {Z}_{12}, {Z}_{2})$ in the large sample limit by Slutsky's theorem. Thus $\mathbb{P}(\hat{\cal{R}}) \approx \mathbb{P}({\cal{R}})$ when sample size is large, and we are reduced to calculate $\mathbb{P}({\cal{R}})$. Clearly, we have
\begin{align}\label{A6.eq117}
\begin{split}
\mathbb{P} \left( \{ Z_{11} \geq c_1 \} \right) = \mathbb{P} \left( \frac{\Psi^{(1)}(s_1) - \mu(s_1)}{\sigma(s_1)} \geq c_1 - \frac{\mu(s_1)}{\sigma(s_1)} \right) = 1 - \Phi\left( c_1 - \frac{\mu(s_1)}{\sigma(s_1)} \right).
\end{split}
\end{align}
To evaluate $\mathbb{P}( c_0 \geq Z_{11} < c_1 ,   w_1 Z_{12} + w_2 Z_{2} \geq c_2 )$ first notice that with $\Delta F(s)\coloneqq  F_{12}(s) - F_{02}(s)$ we have
\begin{align*}
\begin{split}
&\Psi^{(1)}(S_2) - \Psi^{(1)}(s_1) - [\mu(S_2) - \mu(s_1) ] \\
&= (1,1,0,1)^T \cdot [\Theta^{(E1)}(S_2) - \Theta^{(E1)}(s_1)] + \Delta F(S_2) \cdot M_{0*}^{(E1)}(S_2) - \Delta F(s_1) \cdot M_{0*}^{(E1)}(s_1) \\
&= (1,1,0,1)^T \cdot [\Theta^{(E1)}(S_2) - \Theta^{(E1)}(s_1)] + \Delta F(S_2) \cdot [ M_{0*}^{(E1)}(S_2) + M_{0*}^{(E1)}(s_1)] - [\Delta F(S_2) - \Delta F(s_1) ] \cdot M_{0*}^{(E1)}(s_1) \\
&= L(S_2) \cdot [\Theta^{(E1)}(S_2) - \Theta^{(E1)}(s_1)] +  [\Delta F(S_2) - \Delta F(s_1) ] \cdot M_{0*}^{(E1)}(s_1) \\
&= \tilde{\Psi}^{(1)}(S_2,s_1) +  [\Delta F(S_2) - \Delta F(s_1) ] \cdot M_{0*}^{(E1)}(s_1).
\end{split}
\end{align*}
On the set where $J_{02}^{(E1)}(s_1)=1$, we have $M_{0*}(s)/\sqrt{n^{(E1)}} = \hat{\Lambda}_{0*}^{(E1)}(s) - {\Lambda}_{0*}^{(E)}(s)$ and thus $S_2 \equiv \xi(Z_{11}, M_{0*}(s_1)/\sqrt{n^{(E1)}}(s_1) - {\Lambda}_{0*}^{(E)}(s_1)) $. Also let 
\begin{align*}
\begin{split}
s_2(z,m)\coloneqq  
\xi(x_1,x_2)|_{x_1=z,x_2=\frac{m}{\sqrt{n^{(E1)}}(s_1)} - {\Lambda}_{0*}^{(E)}(s_1)}
\equiv\xi\left( z,  \frac{m}{\sqrt{n^{(E1)}}(s_1)} - {\Lambda}_{0*}^{(E)}(s_1) \right).
\end{split}
\end{align*}
%TODO RENE: Ich habe im folgenden Satz die Argumente bei der Funktion s_2 eingesetzt. So muss es doch richtig sein, oder?
Thus, by independence of $\tilde{\Psi}^{(1)}(s_2(Z_11, M_{0*}(s_1)/\sqrt{n^{(E1)}}(s_1) - {\Lambda}_{0*}^{(E)}(s_1)), s_1)$ from $(\Psi^{(1)}(s_1),M_{0*}^{(E1)}(s_1))$ and since $\lim_{n^{(E1)} \to \infty} \mathbb{P}(J_{02}^{(E1)}(s_1)=1) = 1$ we approximately have for sufficiently large $n^{(E1)}$ and $n^{(E2)}$: 
\begin{align*}
\begin{split}
&\mathbb{P}\left( c_0 \leq Z_{11} < c_1, w_1 Z_{12} + w_2 Z_{2} \geq c_2 \right) \\
&= 	\int_{c_0}^{c_1} dz \int_{-\infty}^{\infty} dm f_{Z_{11},M_{0*}^{(E1)}(s_1)}(z,m) \\
& \qquad \qquad		\mathbb{P}\left( w_1 \frac{\Psi^{(1)}(S_2) - \Psi^{(1)}(s_1)}{ {\varsigma}(S_2,s_1)}  + w_2 \frac{\Psi^{(2)}(S_2)}{\sigma(S_2)} \geq c_2 \Big| Z_{11} = z , M_{0*}^{(E1)}(s_1) = m \right) \\
&= 	\int_{c_0}^{c_1} dz \int_{-\infty}^{\infty} dm f_{Z_{11},M_{0*}^{(E1)}(s_1)}(z,m) \\
& \qquad \qquad		
	\mathbb{P}\Bigg( w_1 \frac{\tilde{\Psi}^{(1)}(s_2(z,m),s_1) }{ {\varsigma}(s_2(z,m),s_1)} 
			  + w_2 \frac{\Psi^{(2)}(s_2(z,m)) - \mu(s_2(z,m))}{\sigma(s_2(z,m))} \\
				& \qquad  \qquad  + w_2 \frac{\mu(s_2(z,m))}{\sigma(s_2(z,m))} 
				+ w_1 \frac{[\Delta F(s_2(z,m)) - \Delta F(s_1) ] \cdot m + [\mu(s_2(z,m))-\mu(s_1)]}{ {\varsigma}(s_2(z,m),s_1)}   
				\geq c_2 \Bigg) \\
&= 	\int_{c_0}^{c_1} dz \int_{-\infty}^{\infty} dm f_{Z_{11},M_{0*}(s_1)}(z,m) \\
& \qquad \qquad		\Phi\left( -c_2 + w_2 \frac{\mu(s_2(z,m))}{\sigma(s_2(z,m))} + w_1 \frac{[\Delta F(s_2(z,m)) - \Delta F(s_1) ] \cdot m + [\mu(s_2(z,m))-\mu(s_1)]}{ {\varsigma}(s_2(z,m),s_1)} \right)
\end{split}
\end{align*}
where $f_{Z_{11},M_{0*}^{(E1)}(s_1)}(z,m)$ is the joint density of the vector $(Z_{11},M_{0*}^{(E1)}(s_1))$ which approximately follows the bivariate normal distribution given in \eqref{A6.eq108}.

To derive an approximate and asymptocially valid level-$\alpha$ adaptive test for $H_0:S_{OS}^{(E)}(s)=S_{OS,0}(s)$ based on the rejection region $\hat{\cal{R}}$, we choose the critical bounds $c_1,c_2$ such that the stage-1 type one error rate $\mathbb{P}_{H_0}(\{ Z_{11} \geq c_1 \})$ equals $\alpha_1$ and such that the overall type one error rate $\mathbb{P}_{H_0}(\hat{\cal{R}}) \approx \mathbb{P}_{H_0}({\cal{R}})$ equals the desired significance level $\alpha>\alpha_1$. Since $\mu(\cdot)=0$ under $H_0$, we immediately obtain from \eqref{A6.eq117} that $c_1=\Phi^{-1}(1-\alpha_1)$. We then choose $c_2$ such that $\mathbb{P}_{H_0}( c_0 \leq Z_{11} < c_1, w_1 Z_{12} + w_2 Z_{2} \geq c_2)$ equals $\alpha - \alpha_1$. This means choosing $c_2$ such that
\begin{align*}
\begin{split}
&\alpha - \alpha_1  
= 	\int_{c_0}^{c_1} dz \int_{-\infty}^{\infty} dm f(z,m) \Phi\left( -c_2 + w_1 \frac{[\Delta F(s_2(z,m)) - \Delta F(s_1) ] \cdot m ]}{ {\varsigma}(s_2(z,m),s_1)} \right), \\ 
&\text{where } f(z,m) \text{ is the density of } \begin{pmatrix} Z \\ M
\end{pmatrix}
\sim
{\cal{N}}
\left(
\begin{pmatrix} 0 \\ 0
\end{pmatrix}
,
\begin{pmatrix} 1 & \frac{{L}(s_1) \cdot {V}_{\cdot 3}(s_1)}{{\sigma}(s_1)}   \\ \frac{{L}(s_1) \cdot {V}_{\cdot 3}(s_1)}{{\sigma}(s_1)}  & {V}_{33}(s_1)
\end{pmatrix}
\right)
\end{split}
\end{align*}
what, however, cannot be directly put into practice as it requires knowledge of various nuisance parameters. Instead we determine $\hat{c}_2$ such that
\begin{align*}
\begin{split}
&\alpha - \alpha_1  
= 	\int_{c_0}^{c_1} dz \int_{-\infty}^{\infty} dm f(z,m) \Phi\left( -\hat{c}_2 + w_1 \frac{[\Delta \hat{F}(\hat{s}_2(z,m)) - \Delta \hat{F}(s_1) ] \cdot m ]}{ \hat{{\varsigma}}(\hat{s}_2(z,m),s_1)} \right), \\ 
&\text{where } f(z,m) \text{ is the density of } \begin{pmatrix} Z \\ M 
\end{pmatrix}
\sim
{\cal{N}}
\left(
\begin{pmatrix} 0 \\ 0
\end{pmatrix}
,
\begin{pmatrix} 1 & \frac{\hat{L}^{(E1)}(s_1) \cdot \hat{V}_{\cdot 3}^{(E1)}(s_1)}{\hat{\sigma}^{(E1)}(s_1)}   \\ \frac{\hat{L}^{(E1)}(s_1) \cdot \hat{V}_{\cdot 3}^{(E1)}(s_1)}{\hat{\sigma}^{(E1)}(s_1)}  & \hat{V}_{33}^{(E1)}(s_1)
\end{pmatrix}
\right).
\end{split}
\end{align*}
Here $m, z \in R$ are real numbers, 
\begin{align*}
\begin{split}
%\hat{s}_2(z,m)\coloneqq \xi\left(z, \frac{m}{\sqrt{n^{(E 1)}}} - \hat{\Lambda}_{0*}^{(E 1)}(s_1) \right) 
\hat{s}_2(z,m) \coloneqq  \xi(x_1,x_2)|_{x_1=z, x_2 = \frac{m}{\sqrt{n^{(E 1)}}} - \hat{\Lambda}_{0*}^{(E 1)}(s_1) } 
=\xi\left(z, \frac{m}{\sqrt{n^{(E 1)}}} - \hat{\Lambda}_{0*}^{(E 1)}(s_1) \right) 
\end{split}
\end{align*}
for the prefixed adaptation rule $\xi(x_1,x_2)$, and 
\begin{align*}
\begin{split}
\Delta \hat{F}(s)\coloneqq  \hat{F}_{02}^{(E 1)}(s) - \hat{F}_{12}^{(E 1)}(s) \qquad \text{for any } s \geq 0.
\end{split}
\end{align*}
We claim that $P_{H_0}( c_0 \leq Z_{11} < c_1, w_1 Z_{12} + w_2 Z_{2} \geq \hat{c}_2) \to \alpha - \alpha_1$ as sample size increases. To see this, let
\begin{align*}
	\begin{split}
		\psi(z,m) 		&\coloneqq  \frac{[\Delta F(s_2(z,m)) - \Delta F(s_1) ] \cdot m ]}{ {\varsigma}(s_2(z,m),s_1)}, \\
	    \hat{\psi}(z,m) &\coloneqq  \frac{[\Delta \hat{F}(\hat{s}_2(z,m)) - \Delta \hat{F}(s_1) ] \cdot m ]}{ \hat{{\varsigma}}(\hat{s}_2(z,m),s_1)}, \\
	    G(c) &\coloneqq  \int_{c_0}^{c_1} dz \int_{-\infty}^{\infty} dm f(z,m) \Phi\left( -c + w_1 \cdot \psi(z,m)  \right), \\ 
	    \hat{G}(c) &\coloneqq  \int_{c_0}^{c_1} dz \int_{-\infty}^{\infty} dm f(z,m) \Phi\left( -c + w_1 \cdot \hat{\psi}(z,m)  \right).
	\end{split}
\end{align*}
We need to show that 
\begin{equation}\label{eq:psi_convergence}
	\hat{\psi}(z,m) \underset{n\to \infty}{\overset{\mathbb{P}}{\longrightarrow}} {\psi}(z,m)
\end{equation}
for all $z,m$. First, we have $\hat{\varsigma}(\hat{s}_2(z,m),s_1) \to \varsigma(s_2(z,m),s_1)$ by Lemma \ref{lemma:cmp_extended} which can be applied because of the specification of $\hat{s}_2$. This statement also requires uniform convergence of $\hat{\varsigma}$ as defined in \eqref{eq:covariance_estimate}. This convergence holds because of the uniform convergence of the components of $\hat{L}$ and $\hat{V}$. The former is a consequence of Lemma \ref{lemma:F_convergence}. The latter can be proven by decomposing $\hat{V} - V$ into $(\hat{[\Theta]} - [\Theta]) + ([\Theta] - V)$. The first of those two summands vanishes uniformly by appplying the arguments of Lemma \ref{lemma:conv_integral_product_estimators} uniformly over all upper bounds of the integral. Uniform convergence of the second summand follows from Rebolledos Martingale Central Limit Theorem. The desired convergence in \eqref{eq:psi_convergence} now follows from plugging those convergences together and application of Slutsky's Theorem.\\
Now, as \eqref{eq:psi_convergence} holds for all $z,m$, it follows that $\hat{G}(c) \underset{n\to \infty}{\overset{\mathbb{P}}{\longrightarrow}} {G}(c)$ for all $c \in R$ by Lemma \ref{lemma:contmap_domconv}. Moreover, $\hat{G}(c)$ and ${G}(c)$ are both continuous and strictly decreasing in $c$. Thus, their inverses $\hat{G}^{-1}(\alpha)$ and ${G}^{-1}(\alpha)$ exist, and from Lemma \ref{lemma:convinprob_inverse} we get
\begin{align}\label{eq:inverse_level_function}
	\hat{G}^{-1}(\alpha) \underset{n\to \infty}{\overset{\mathbb{P}}{\longrightarrow}} {G}^{-1}(\alpha).
\end{align}
Notice that $\lim_{n \to \infty} \mathbb{P}_{H_0}( c_0 \leq Z_{11} < c_1, w_1 Z_{12} + w_2 Z_{2} \geq \hat{c}_2) = \alpha - \alpha_1$ is a consequence of Eq. \eqref{eq:inverse_level_function} and Lemma \ref{lemma:contmap_domconv}. Indeed, choosing $\hat{c}_2\coloneqq \hat{G}^{-1}(\alpha-\alpha_1)$ yields $\hat{c}_2 \to {G}^{-1}(\alpha-\alpha_1)$ by \eqref{eq:inverse_level_function} and thus $G(\hat{c}_2) \to \alpha - \alpha_1$ in probability as $n \to \infty$ by Lemma \ref{lemma:contmap_domconv}. All in all, this finished the proof that the rejection region $\{ \hat{Z}_{11} \geq c_1 \} \cup \{    c_0 \leq \hat{Z}_{11} < c_1 ,   w_1 \hat{Z}_{12} + w_2 \hat{Z}_{22} \geq \hat{c}_2 \}$ yields the desired approximate adaptive level-$\alpha$ test of $H_0$.\\

\subsection{Proof of Distributional Properties of $\Psi^{(\kappa)}$ from Chapter \ref{Sec05.02}, Eq. \eqref{eq05.04}, and $\Psi^{(\kappa)}$ from Chapter \ref{Sec06.02}, Eq. \eqref{eq06.04}}

Let a filtered probability space $(\Omega, {\cal{F}}, ({\cal{F}}_s)_{s \geq 0},\mathbb{P})$ be given together with 
\begin{itemize}
	\item state-specific left-continuous "number at risk"-processes $Y_j:\Omega \times [0,\infty) \to \{0,1, \ldots, n\}$ (depending on sample size $n$) with limits $y_j$ according to Lemma \ref{lemma:uniform_lln}, for which we assume that \ref{condition:at_risk_process_general} holds for $J_{02}$ and $J_{12}$
	\item corresponding "at risk indicator"-processes $J_j(s)\coloneqq  I(Y_j(s) > 0)$, for which we can apply Lemma \ref{lemma:non_empty_at_risk_set} according to the previous assumption,
	\item transition-specific Nelson-Aalen estimators $\hat{\Lambda}_j:\Omega \times [0,\infty) \to [0,\infty)$ for corresponding hazard functions $\Lambda_j:[0,\infty) \to [0,\infty)$ for which the convergence of Lemma \ref{lemma:na_convergence} holds,
	\item corresponding martingale formulations $M_j$ as in \eqref{eq:na_martingale} with covariation processes as in \eqref{eq:na_martingale_covariation} that converge to deterministic functions as $n \to \infty$.
\end{itemize}

In this setting, we are interested in the asymptotic distribution of stochastic processes of the form
\begin{align}\label{A8.Eq02}
\begin{split}
\Psi(s) \coloneqq   e^{-\hat{\Lambda}_3(s)} \cdot \left( c  +  \int_0^s  e^{-\hat{\Lambda}_1(u-)} d\hat{\Lambda}_2(u)    \right)
\end{split}
\end{align}
for some real number $c$ where $\hat{\Lambda}_1, \hat{\Lambda}_2$ and $\hat{\Lambda}_3$ are Nelson-Aalen estimates or sums or differences of Nelson-Aalen estimates. 

\subsubsection{Asymptotic distribution of $\Psi$}

For this purpose, we introduce the abbreviation  
\begin{align*}
\begin{split}
\hat{F}(s) \coloneqq    \int_0^s  e^{-\hat{\Lambda}_1(u-)} d\hat{\Lambda}_2(u)
\qquad \text{and} \qquad
{F}(s) \coloneqq    \int_0^s  e^{-{\Lambda}_1(u)} d{\Lambda}_2(u),
\end{split}
\end{align*}
and in a first step evaluate the asymptotic distribution of $\hat{F}(s)$ as $n \to \infty$. The process $\hat{F}(s)$ may be decomposed as follows:
\begin{align}\label{A8.Eq01}
\begin{split}
& \sqrt{n} \hat{F}(s) 
= \sqrt{n}  \int_0^s  e^{-{\Lambda}_1(u)} J_2(u) d{\Lambda}_2(u) 
+ \int_0^s  e^{-{\Lambda}_1(u)} dM_2(u) \\
&+ \sqrt{n}  \int_0^s  [e^{-\hat{\Lambda}_1(u-)} - e^{-{\Lambda}_1(u)}] J_2(u)d{\Lambda}_2(u)
+    \int_0^s  [e^{-\hat{\Lambda}_1(u-)} - e^{-{\Lambda}_1(u)}] dM_2(u).
\end{split}
\end{align}
We evaluate the four summands on the right hand side of Eq. \eqref{A8.Eq01} separately. 
\\[.5em]
\textsc{First Summand of Eq. \eqref{A8.Eq01}}: Under condition \ref{condition:at_risk_process_general}, this summand coincides with $\sqrt{n}  \int_0^s  e^{-{\Lambda}_1(u)} d{\Lambda}_2(u)$ up to terms that vanish in probability as $n \to \infty$, because
\begin{align*}
	\begin{split}
		\Big|\sqrt{n}  \int_0^s  e^{-{\Lambda}_1(u)} [1-J_2(u)] d{\Lambda}_2(u) \Big|
		\leq 
		\sqrt{n} \left( 1 - \inf_{u \in [0,s]} J_2(u) \right) \Lambda_2(s).
	\end{split}
\end{align*}
\\[.5em]
\textsc{Second Summand of Eq. \eqref{A8.Eq01}}: By application of Lemma \ref{lemma:rebolledo_condition_iii}, this summand is a mean-zero martingale with covariation $\int_0^s e^{-2\Lambda_1(u)} d[M_2](u)$.
\\[.5em]
\textsc{Third Summand of Eq. \eqref{A8.Eq01}}: Up to terms that vanish in probability as $n \to \infty$, the third summand coincides with $\sqrt{n}  \int_0^s  [e^{-\hat{\Lambda}_1(u-)} - e^{-{\Lambda}_1(u)}] d{\Lambda}_2(u)$ because of condition \ref{condition:at_risk_process_general}.
Next notice that with $dF(u)= e^{-\Lambda_1(u)}d\Lambda_2(u)$
%TODO: HIER TAUCHEN LINKSSTETIGE VERSIONEN ZUSAMMEN MIT "PROBLEMATISCHEN" HAZARD-FUNKTIONEN AUF. SIEHE LEMMA 11
\begin{align*}
	\begin{split}
		& \sqrt{n}  \int_0^s  [e^{-\hat{\Lambda}_1(u-)} - e^{-{\Lambda}_1(u)}] d{\Lambda}_2(u) \\
		= & - \sqrt{n} \int_0^s \left( \hat{\Lambda}_1(u-) - {\Lambda}_1(u) \right)  dF(u) + \sqrt{n} \int_0^s \sum_{k = 2}^{\infty} \frac{\left(- \hat{\Lambda}_1(u-) + {\Lambda}_1(u) \right)^k }{k!} dF(u) \\
		= & - \int_0^s M_1(u)  dF(u) - \sqrt{n} \int_0^s \left( \hat{\Lambda}_1(u-) - \hat{\Lambda}_1(u) \right)  dF(u) + R(s)
	\end{split}
\end{align*}
with $R(s) \overset{\mathbb{P}}{\to} 0$ by Lemma \ref{lemma:ls_integral_reformulation}. Up to terms that vanish in probability as $n \to \infty$ we thus have
\begin{align*}
	\begin{split}
		\sqrt{n}  \int_0^s  [e^{-\hat{\Lambda}_1(u-)} - e^{-{\Lambda}_1(u)}] J_2(u)d{\Lambda}_2(u)
		& =  - \int_0^s M_1(u)  dF(u) \\
		& = -   M_1(s) F(s) + \int_0^s F(u) d M_1(u),
	\end{split}
\end{align*}
where we used partial integration for semimartingales in the last step (Note: $F_j(u)$ is continuous and deterministic, and thus of finite variation).
\\[.5em]
%TODO: HIER TAUCHEN LINKSSTETIGE VERSIONEN ZUSAMMEN MIT "PROBLEMATISCHEN" HAZARD-FUNKTIONEN AUF. SIEHE LEMMA 11
\textsc{Fourth Summand of Eq. \eqref{A8.Eq01}}: We can argue as for the first summand in the decomposition of Lemma \ref{lemma:na_convergence} to see that this summand vanishes as $n \to \infty$.
\\[.5em]
Hence, up to terms that vanish in probability as $n \to \infty$, we have
\begin{align}\label{A8.Eq03a}
\begin{split}
\sqrt{n} \hat{F}(s) = \sqrt{n} F(s) + \int_0^s e^{-\Lambda_1(u)} dM_2(u)  - M_1(s) F(s) + \int_0^s F(u) d M_1(u)
\end{split}
\end{align}
In particular, we obtain the interim result that $\hat{F}(s)- F(s)$ is asymptotically normally distributed for each $s \geq 0$, i.e. 
\begin{align}\label{A8.Eq03}
\begin{split}
\sqrt{n} \left( \hat{F}(s) - F(s) \right)
\underset{n\to \infty}{\overset{\mathcal{D}}{\longrightarrow}}
{\cal{N}}\left(0,\nu(s)^2\right)
\end{split}
\end{align}
for some deterministic variance function $\nu(s)^2$. On this basis we now elaborate the asymptotic distribution of $\sqrt{n} \Psi(s) \equiv \sqrt{n} e^{- \hat{\Lambda}_3(s)} \cdot [\hat{F}(s) + c]$ from Eq. \eqref{A8.Eq02} in the limit $n \to \infty$. The process $\sqrt{n}\Psi(s)$ may be decomposed as follows:
\begin{align}\label{A8.Eq04}
\begin{split}
\sqrt{n} \Psi(s) \equiv   \sqrt{n} e^{- \hat{\Lambda}_3(s)} \cdot [\hat{F}(s) + c] = & \sqrt{n} e^{- {\Lambda}_3(s)} \cdot [{F}(s) + c] \\
& + e^{- {\Lambda}_3(s)} \cdot \sqrt{n} [\hat{F}(s) - F(s)] \\
& + [e^{- \hat{\Lambda}_3(s)} - e^{- {\Lambda}_3(s)}] \cdot \sqrt{n} [\hat{F}(s) - F(s)]  \\
& + [e^{- \hat{\Lambda}_3(s)} - e^{- {\Lambda}_3(s)}] \cdot \sqrt{n} [F(s) + c].
\end{split}
\end{align}
Again, we evaluate the four summands in \eqref{A8.Eq04} separately.
\\
\textsc{First Summand of Eq. \eqref{A8.Eq04}}: This is a deterministic drift that is not further transformed. 
\\
\textsc{Second Summand of Eq. \eqref{A8.Eq04}}: By Eq. \eqref{A8.Eq03a} the second summand equals 
\begin{align*}
\begin{split}
e^{- {\Lambda}_3(s)} \cdot \left(   \int_0^s e^{-\Lambda_1(u)} dM_2(u)  + \int_0^s F(u) d M_1(u)  - M_1(s) F(s)  \right)  + {\cal{O}}_P(1).
\end{split}
\end{align*}
\\
\textsc{Third Summand of Eq. \eqref{A8.Eq04}}: Since $|e^{-x}-e^{-x'}|\leq |x-x'|$ for all $x,x' \geq 0$, notice that $|e^{-\hat{\Lambda}_2(s)}-e^{-{\Lambda}_2(s)}|$ vanishes in probability as $n \to \infty$ as $|\hat{\Lambda}_2(s)- \Lambda_2(s)|$ does. It thus follows from our interim result \eqref{A8.Eq03} and appeal to Slutsky's theorem that the third summand of Eq. \eqref{A8.Eq04} vanishes in probability as $n \to \infty$, too. 
\\
\textsc{Fourth Summand of Eq. \eqref{A8.Eq04}}: We have
\begin{align*}
\begin{split}
& \sqrt{n}  [e^{-\hat{\Lambda}_3(s)} - e^{-{\Lambda}_3(s)}] \\
= & \sqrt{n}  \left( e^{ - [\hat{\Lambda}_3(s) - {\Lambda}_3(s) ] } - 1 \right) e^{-{\Lambda}_3(s)} \\
= & - \sqrt{n} \left( \hat{\Lambda}_3(s) - {\Lambda}_3(s) \right) e^{-{\Lambda}_3(s)} + R(s)
\end{split}
\end{align*}
with $R(s)\coloneqq  \sqrt{n} \sum_{k = 2}^{\infty} \frac{\left(- \hat{\Lambda}_3(s) + {\Lambda}_3(s) \right)^k }{k!} e^{-{\Lambda}_3(s)}$. We claim that $R(s)$ vanishes in probability as $n \to \infty$. Let $\Delta(s)\coloneqq |\hat{\Lambda}_3(u)-{\Lambda}_3(u)|$. Then
\begin{align*}
\begin{split}
 \mathbb{P}\left( |R(s)| > \epsilon \right)
&= \mathbb{P}\left( |R(s)| > \epsilon, \Delta(s) \leq 1 \right) + \mathbb{P}\left( |R(s)| > \epsilon, \Delta(s) > 1 \right) \\
& \leq \mathbb{P}\left(     \sqrt{n} \sum_{k = 2}^{\infty} \frac{\Delta(s)^2 }{k!} e^{-{\Lambda}_3(s)} > \epsilon , \Delta(s) \leq 1    \right) + \mathbb{P}\left( \Delta(s) > 1 \right) \\
& \leq \mathbb{P}\left(     n^{1/4} \Delta(s) > \sqrt{\frac{e^{{\Lambda}_3(s)} \cdot \epsilon}{e-2}}   \right) + \mathbb{P}\left( \Delta(s) > 1 \right) 
\end{split}
\end{align*}
which vanishes as $n \to \infty$, because $n^k \Delta(s)$ vanishes in probability as $n \to \infty$ for any $k<1/2$ by Lemma \ref{lemma:na_convergence}. Since $\sqrt{n}|\int_0^s J_3(u) d\Lambda_3(u) - \Lambda_3(s)|$ vanishes in probability as $n \to \infty$ under our assumption \ref{condition:at_risk_process_general}, we conclude that
\begin{align*}
\begin{split}
\sqrt{n}  [e^{-\hat{\Lambda}_3(s)} - e^{-{\Lambda}_3(s)}] = - M_3(s) \cdot e^{-{\Lambda}_3(s)}
\end{split}
\end{align*}
up to terms that vanish in probability as $n \to \infty$. Consequently, the fourth summand is
\begin{align*}
\begin{split}
\sqrt{n}  [e^{-\hat{\Lambda}_3(s)} - e^{-{\Lambda}_3(s)}] \cdot [F(s) + c] = - e^{-{\Lambda}_3(s)} \cdot M_3(s) \cdot [F(s) + c]  + o_P(1)
\end{split}
\end{align*}
All in all, we have seen that
\begin{align*}
\begin{split}
\sqrt{n} \Psi(s) = &
 \sqrt{n} \cdot e^{-{\Lambda}_3(s)} \cdot [F(s) + c] \\ 
&+
e^{- {\Lambda}_3(s)} \cdot \left(   \int_0^s e^{-\Lambda_1(u)} dM_2(u)  + \int_0^s F(u) d M_1(u)  - M_1(s) \cdot F(s) - M_3(s) \cdot [F(s) + c]  \right) \\
&+ o_P(1)
\end{split}
\end{align*}
In our use cases we will always have $M_1+M_3 \equiv M_{0*}$ (see Eqs. \eqref{eq05.08} and \eqref{eq06.08}) with $M_{0*}$ according to \eqref{eq:02.04}. Using this, $\Psi(s)$ may be simplified to
\begin{align*}
	\begin{split}
		\sqrt{n} \Psi(s) = &
		\sqrt{n} \cdot e^{-{\Lambda}_3(s)} \cdot [F(s) + c] \\ 
		&+
		e^{- {\Lambda}_3(s)} \cdot \left(   \int_0^s e^{-\Lambda_1(u)} dM_2(u)  + \int_0^s F(u) d M_1(u)  -  F(s) \cdot M_{0*}(s) - c \cdot M_3(s)  \right) \\
		&+ o_P(1) %TODO: siehe voriger Punkt
	\end{split}
\end{align*}
To elaborate the asymptotic distribution of $\Psi(s)$ as $n \to \infty$, we first elaborate the asymptotic distribution of the multivariate $({\cal{F}}_{s})_{s \geq 0}$-martingale
\begin{equation*}
\Theta(s) \coloneqq 
\begin{pmatrix} \int_0^s F(u) dM_1(u) \\ \int_0^s e^{-\Lambda_1(u)} dM_{2}(u) \\  M_{0*}(s)   \\   M_3(u)   \end{pmatrix}.
\end{equation*}
Knowledge of the asymptotic distribution of $\Theta(s)$ will in a second step yield the approximate distribution of $\Psi(s)$ as $n \to \infty$, because up to terms that vanish in probability as $n \to \infty$, we have
\begin{equation*}
\sqrt{n} \left( \Psi(s) - \mu(s) \right) \coloneqq  L(s) \cdot \Theta(s) + {\cal{O}}_P(1)
\end{equation*}
with scalar drift function $\mu(s)\coloneqq  e^{-\Lambda_3(u)}[F(s) + c]$ and vector valued scale function $L(s)\coloneqq  e^{-\Lambda_3(u)} \cdot (1,1,-F(s),-c)^t$. 

\subsubsection{Estimation of the variance structure of $\Psi^{(\kappa)}(s)$}
The multivariate $({\cal{F}}_{s})_{s \geq 0}$-martingale $\Theta(s)$ has covariation matrix $[\Theta](s)\coloneqq ([\Theta]_{ij}(s))_{i,j=1,\ldots,4}$ with components
\begin{align*}
\begin{split}
[\Theta]_{11}(s) 													&= \int_0^s F(u)^2 d[M_1](u) \\
[\Theta]_{12}(s) \equiv [\Theta]_{21}(s)  &= \int_0^s F(u) e^{-\Lambda_1(u)} d[M_1,M_2](u) \\
[\Theta]_{13}(s) \equiv [\Theta]_{31}(s)  &= \int_0^s F(u) d[M_1,M_{0*}](u) \\
[\Theta]_{14}(s) \equiv [\Theta]_{41}(s)  &= \int_0^s F(u) d[M_1,M_3](u) \\
[\Theta]_{22}(s) 													&= \int_0^s e^{-2 \Lambda_3(u)} d[M_2](u)  \\
[\Theta]_{23}(s) \equiv [\Theta]_{32}(s)  &= \int_0^s e^{- \Lambda_1(u)} d[M_2,M_{0*}](u) \\
[\Theta]_{24}(s) \equiv [\Theta]_{42}(s)  &= \int_0^s e^{- \Lambda_1(u)} d[M_2,M_3](u) \\
[\Theta]_{33}(s) 													&= [M_{0*}](s) \\
[\Theta]_{34}(s) \equiv [\Theta]_{43}(s)  &= [M_3,M_{0*}](s)  \\
[\Theta]_{44}(s) 													&= [M_3](s)
\end{split}
\end{align*}
Notice that $[\Theta](s)$ converges in probability as $n \to \infty$ to a deterministic covariance matrix $V(s)$ by a law of large numbers for each $s \geq 0$. Moreover, $\Theta(s)$ suffices the assumptions of Lemma \ref{lemma:rebolledo_condition_iii}. By a central limit theorem for local martingales \citep{bib5}, in the large sample limit $n \to \infty$, $\Theta(s)$ thus converges in distribution to a mean zero, $({\cal{F}}_{s})_{s \geq 0}$-adapted  Gaussian process with independent increments, and some deterministic covariance matrix function $V(s)$, say.

Since the matrix $[\Theta](s)$ depends on the unknown nuisance parameters $F(s)$ and $e^{-\Lambda_1(s)}$, it cannot be used directly as estimator for $V(s)$. Instead, we use the estimators $\hat{F}(u)$ and $e^{-\hat{\Lambda}_1(u-)}$ for $F(u)$ and $e^{-\Lambda_1(u)}$, respectively, to introduce the matrix $\hat{[\Theta]}(s)\coloneqq (\hat{[\Theta]}_{ij}(s))_{i,j=1,\ldots,4}$ with components as in \eqref{eq:covariance_estimate_two_arm_os} which we use as estimator $\hat{V}(s)\coloneqq \hat{[\Theta]}(s)$ for $V(s)$. 

In order to see that $\hat{[\Theta]}(s)$ estimates $V(s)$, it remains to show that all components of the matrix $\hat{[\Theta]}(s) - {[\Theta]}(s)$ vanish in probability as $n \to \infty$. We want to apply Lemma \ref{lemma:conv_integral_product_estimators} to show that. For the factor $\hat{S}_{PFS}$ the prerequisites of Lemma \ref{lemma:conv_integral_product_estimators} hold. This is also true for $\hat{F}_{02}$ and $\hat{F}_{12}$ according to Lemma \ref{lemma:F_convergence}. The integrators of the corresponding intervals are given by estimators of variances of Nelson-Aalen estimators. Their convergence is given in Theorem IV.1.2 of \cite{Andersen:1991}.\\
We may summarize as follows:
\begin{align*}
\begin{split}
&\bullet \Theta(s) \underset{n\to \infty}{\overset{\mathcal{D}}{\longrightarrow}} {\cal{N}}(0,V(s)) \quad \text{for each } s \geq 0. \\
&\bullet \hat{[\Theta]}(s) \underset{n\to \infty}{\overset{\mathbb{P}}{\longrightarrow}} V(s) \quad \text{for each } s \geq 0, \text{i.e. } \hat{[\Theta]}(s) \text{ consistently estimates } V(s). \\
&\bullet \Theta(s_1) \text{ and } \Theta(s_2) - \Theta(s_1) \quad \text{are approximately independent for each } 0 \leq s_1 \leq s_2. \\
\end{split}
\end{align*}
In particular, this characterizes the distributional properties of $\Psi(s) - \mu(s) = L(s) \cdot \Theta(s)$ in the large sample limit.

\subsubsection{Application in adaptive clinical trials}
In adaptive randomized trials, patients are recruited sequentially in consecutive stages, and randomly assigned to the treatment arms in each stage. With a view to application to adaptive hypothesis testing in randomized trials, let us now assume that for each combination of therapy group and stage there is one tuple $(n,(\Lambda_j)_{j \in \{1,2,3\}},(\hat{\Lambda}_j)_{j \in \{1,2,3\}},(Y_j)_{j \in \{1,2,3\}},(J_j)_{j \in \{1,2,3\}},(M_j)_{j \in \{1,2,3\}},V)$.
More specifically, we denote by $(n^{(x \kappa)},(\Lambda_j^{(x \kappa)})_{j \in \{1,2,3\}}, (\hat{\Lambda}_j^{(x \kappa)})_{j \in \{1,2,3\}}, (Y_j^{(x \kappa)})_{j \in \{1,2,3\}}, (J_j^{(x \kappa)})_{j \in \{1,2,3\}}, (M_j^{(x \kappa)})_{j \in \{1,2,3\}}, V^{(x \kappa)})$ the tuple corresponding to patients from stage $\kappa$ randomized to treatment arm $x$.
In practical terms, $n^{(x \kappa)}$ is the number of patients from stage $\kappa$ randomized to treatment arm $x$, and $\Lambda_j^{(x \kappa)}$ is their cumulative hazard of type $j$ with (Nelsen-Aalen-type) estimate $\hat{\Lambda}_j^{(x \kappa)}$. $M_j^{(x \kappa)}$ is the compensating martingale of $\hat{\Lambda}_j^{(x \kappa)}$, which is normally distributed with mean zero in the limit $n^{(x \kappa)} \to \infty$.
Across all stages of the design, patients from the same treatment arm are assumed to come from the same population. Accordingly, the cumulative hazards $\Lambda_j^{(x)}\coloneqq  \Lambda_j^{(x \kappa)}$ as well as the limiting variance function $V^{(x)}\coloneqq  V^{(x \kappa)}$ do not depend on the specific stage $\kappa$.
In the subset of patients from stage $\kappa$ randomized to treatment arm $x$ we calculate the statistic
\begin{align*}
\begin{split}
\Psi^{(x \kappa)}(s) 
\coloneqq  e^{-\hat{\Lambda}_3^{(x \kappa)}(s)} \cdot \left( c  +  \int_0^s  e^{-\hat{\Lambda}_1^{(x \kappa)}(u-)} d\hat{\Lambda}_2^{(x \kappa)}(u)    \right).
\end{split}
\end{align*}
As shown above, we have
\begin{equation}\label{A8.Eq07}
\sqrt{n^{(x \kappa)}} \left( \Psi^{(x \kappa)}(s) -  \mu^{(x)}(s)  \right) =  L^{(x)}(s) \cdot \Theta^{(x \kappa)}(s) + {\cal{O}}_P(1),
\end{equation}
with drift term
\begin{align}\label{A8.Eq07bb}
\begin{split}
\mu^{(x)}(s)\coloneqq  e^{-\Lambda_3^{(x)}(u)}[F^{(x)}(s) + c], \qquad {F}^{(x)}(s) \coloneqq  \int_0^s  e^{-{\Lambda}_1^{(x)}(u)} d{\Lambda}_2^{(x)}(u),
\end{split}
\end{align}
scaling factor 
\begin{equation*}
L^{(x)}(s)\coloneqq  e^{-\Lambda_3^{(x)}(u)} \cdot (1,1,-F^{(x)}(s),-c),
\end{equation*}
and multivariate mean-zero $({\cal{F}}_s)_{s \geq 0}$-martingale
\begin{equation*}
\Theta^{(x \kappa)}(s) \coloneqq 
\begin{pmatrix} \int_0^s F^{(x)}(u) dM_1^{(x \kappa)}(u) \\ \int_0^s e^{-\Lambda_1^{(x)}(u)} dM_{2}^{(x \kappa)}(u) \\  M_{0*}^{(x \kappa)}(s)   \\   M_3^{(x \kappa)}(u)   \end{pmatrix}
\end{equation*}
that converges in distribution to a mean-zero Gaussian process with independent increments and some matrix-valued caovaiance process $V^{(x)}(s)$, say. 
%For the purpose of later use, we also define
%\begin{equation*}
%L^{(x)}(s)\coloneqq  e^{-\Lambda_3^{(x)}(u)} \cdot (1,1,-F^{(x)}(s),-c),
%\end{equation*}
As consequence of Eq. \eqref{A8.Eq07}, for sufficiently large $n^{(x \kappa)}$, the process $\Psi^{(x \kappa)}(s)$ thus has the following approximate distributional properties:
\begin{align*}
\begin{split}
& \bullet  \Psi^{(x \kappa)}(s) \sim {\cal{N}}\left( \mu^{(x)}(s), \frac{ L^{(x)}(s) \cdot V^{(x)} \cdot L^{(x)}(s)^{\textsc{T}} }{n^{(x \kappa)}} \right) \quad \text{for all } s \geq 0. \\
& \bullet  \Psi^{(x \kappa)}(s_2) - \Psi^{(x \kappa)}(s_1) \text{ and } \Psi^{(x \kappa)}(s_1) \text{ are independent for all } 0 \leq s_1 \leq s_2. \\
\end{split}
\end{align*}
At each stage $\kappa$ of the design, the unobservable quantities $\mu^{(x)}(s)$, ${F}^{(x)}(s)$, $L^{(x)}(s)$ and $V^{(x)}(s)$ may be estimated from the data via
\begin{align}\label{eq:nuisance_parameter_estimates}
\begin{split}
\hat{F}^{(x \kappa)}(s) &\coloneqq  \int_0^s  e^{-\hat{\Lambda}_1^{(x \kappa)}(u-)} d\hat{\Lambda}_2^{(x \kappa)}(u), \\
\hat{\mu}^{(x \kappa)}(s) &\coloneqq  e^{-\hat{\Lambda}_3^{(x \kappa)}(u)}[\hat{F}^{(x \kappa)}(s) + c], \\
\hat{L}^{(x \kappa)}(s)   &\coloneqq  e^{-\hat{\Lambda}_3^{(x \kappa)}(u)} \cdot (1,1,-\hat{F}^{(x \kappa)}(s),-c), \\
\hat{V}^{(x \kappa)}(s) &\coloneqq  [\hat{\Theta}^{(x \kappa)}](s), 
\end{split}
\end{align}
where $[\hat{\Theta}^{(x \kappa)}](s)$ is the optional covariation process of the multivariate mean-zero $({\cal{F}}_s)_{s \geq 0}$-martingale 
\begin{align}\label{eq:mv_martingale_with_nuisance_parameter_estimates}
	\begin{split}
 \hat{\Theta}^{(x \kappa)}(s) &\coloneqq 
		\begin{pmatrix} \int_0^s \hat{F}^{(x \kappa)}(u) dM_1^{(x \kappa)}(u) \\ \int_0^s e^{-\hat{\Lambda}_1^{(x \kappa)}(u)} dM_{2}^{(x \kappa)}(u) \\  M_{0*}^{(x \kappa)}(s)   \\   M_3^{(x \kappa)}(u)   \end{pmatrix}.
	\end{split}
\end{align}
In this context, we consider testing the null hypothesis of equal drift in the experimental ($x=E$) and control ($x=C$) treatment group
\begin{align}\label{A8.Eq09}
\begin{split}
H_0: \mu^{(E)}(s) = \mu^{(C)}(s) \quad \text{for all } s \geq 0
\end{split}
\end{align}
as well as the (stage-wise) contiguous alternatives
\begin{align*}
\begin{split}
H_1: \mu^{(E)}(s) = \left( \mu^{(C)}(s) \right)^{\exp\left( - \frac{ \gamma }{ \sqrt{n^{(\kappa)}} }  \right)} \quad \text{for all } s \geq 0
\end{split}
\end{align*}
with $n^{(\kappa)}\coloneqq  n^{(C \kappa)} + n^{(E \kappa)}$ and some $\gamma > 0$. In the large sample limit, we then obtain for the drift difference $\Delta\mu(s)\coloneqq  \mu^{(C)}(s) -  \mu^{(E)}(s)$ 
\begin{align*}
\begin{split}
\lim_{ n^{(\kappa)} \to \infty} \sqrt{n^{(\kappa)}} \cdot \Delta\mu(s)
&= \lim_{ n^{(\kappa)} \to \infty} \mu^{(C)}(s) \cdot \sqrt{n^{(\kappa)}} \cdot \left( 1 - \mu^{(C)}(s)^{\exp\left( - \frac{ \gamma }{  \sqrt{n^{(\kappa)}}  }  \right) - 1} \right) \\
&= \gamma \cdot \mu^{(C)}(s) \cdot \log\mu^{(C)}(s).
\end{split}
\end{align*}
For testing $H_0$ we use the stage-wise statistics
\begin{align}\label{A8.Eq12}
\begin{split}
\Delta\Psi^{(\kappa)}(s)\coloneqq  \Psi^{(C  \kappa)}(s) - \Psi^{(E \kappa)}(s).
\end{split}
\end{align}
For constructing an adaptive test of $H_0$ we will in fact need knowledge of the joint distribution of $\Delta\Psi^{(\kappa)}(s)$, $\Theta^{(C \kappa)}(s)$ and $\Theta^{(E \kappa)}(s)$. We have
\begin{align*}
 \Bigg[ \begin{pmatrix} \sqrt{ n^{(\kappa)} } \cdot \Delta\Psi^{(\kappa)}(s) \\ \Theta^{(C \kappa)}(s) \\  \Theta^{(E \kappa)}(s)   \end{pmatrix} 
- \begin{pmatrix} \sqrt{ n^{(\kappa)} } \cdot \Delta\mu(s) \\ 0_{4 \times 1} \\  0_{4 \times 1}   \end{pmatrix} \Bigg]
 =
\begin{pmatrix}  {\cal{L}}^{(C \kappa)}(s) & - {\cal{L}}^{(E \kappa)}(s) \\ 1_{4} & 0_{4 \times 4} \\  0_{4 \times 4} & 1_{4}   \end{pmatrix} \cdot
\begin{pmatrix} \Theta^{(C \kappa)}(s) \\  \Theta^{(E \kappa)}(s)   \end{pmatrix}, \\
\end{align*}
where
\begin{align*}  
 & {\cal{L}}^{(C \kappa)}(s) \coloneqq  \sqrt{ 1 + r^{(\kappa)}  } \cdot L^{(C)}(s) \text{ and } {\cal{L}}^{(E \kappa)}(s) \coloneqq  \sqrt{ 1 + 1/r^{(\kappa)}  } \cdot L^{(E)}(s),  \\
 & r^{(\kappa)}\coloneqq  n^{(E \kappa)}/n^{(C \kappa)} \text{ is the randomization ratio in stage } \kappa,  \\
 & 1_m \text{ is the identity matrix of size } m\text{, and} \\
 & 0_{m \times m'} \text{ is the } m \times m' \text{ matrix with zero entries, only}.
\end{align*}
%with $\tilde{L}^{(C \kappa)}(s)\coloneqq  \sqrt{ 1 + r^{(\kappa)}  } \cdot L^{(C)}(s)$,  $1_{1 \times 4}\coloneqq (1,1,1,1)$, $0_{1 \times 4}\coloneqq (0,0,0,0)$, and $0_{4 \times 1}\coloneqq (0,0,0,0)^t$.  
Since $\Theta^{(C \kappa)}(s)$ and $\Theta^{(E \kappa)}(s)$ are independent, the vector $\left(\sqrt{ n^{(\kappa)} } \cdot \Delta\Psi^{(\kappa)}(s), \Theta^{(C \kappa)}(s) ,  \Theta^{(E \kappa)}(s) \right)^{\textsc{T}}$ thus converges to a normal distribution with 
\begin{align}\label{A8.Eq13}
\begin{split}
&\bullet \text{mean: } (\gamma \cdot \mu^{(C)}(s) \cdot \log\mu^{(C)}(s), 0_{1 \times 4}, 0_{1 \times 4})^{\textsc{T}}, \text{ and} \\
&\bullet \text{covariance matrix: } 
  \begin{pmatrix} 
	\sum_{x=C,E} {\cal{L}}^{(x \kappa)}(s) \cdot V^{(x)}(s) \cdot {\cal{L}}^{(x \kappa)}(s)^{\textsc{T}}           &  {\cal{L}}^{(C \kappa)}(s) \cdot V^{(C)}(s) &  - {\cal{L}}^{(E \kappa)}(s) \cdot V^{(E)}(s)   \\
	V^{(C)}(s) \cdot {\cal{L}}^{(C \kappa)}(s)^{\textsc{T}}                                 					   &                                  V^{(C)}(s) &  0_{4 \times 4}   \\
	- V^{(E)}(s) \cdot {\cal{L}}^{(E \kappa)}(s)^{\textsc{T}}                                						   &                              0_{4 \times 4} &  V^{(E)}(s)   \\
  \end{pmatrix} 
	\end{split}
\end{align}
as $n^{(\kappa)} \to \infty$. In particular, for sufficiently large $n^{(\kappa)}$, we have
\begin{align*}
& \Delta\Psi^{(\kappa)}(s) \sim 
{\cal{N}}\left( \gamma \cdot \mu^{(C)}(s) \cdot \log\mu^{(C)}(s), \sigma^{(\kappa)}(s)^2   \right) \\
&  \text{with } \sigma^{(\kappa)}(s)^2 \coloneqq \sum_{x=C,E} {\cal{L}}^{(x \kappa)}(s) \cdot V^{(x)}(s) \cdot {\cal{L}}^{(x \kappa)}(s)^{\textsc{T}}.
\end{align*}
Finally, for any $0 < s_1 < s_2$, we will also consider
\begin{align*}
\Delta\tilde{\Psi}^{(\kappa)}(s_2,s_1) \coloneqq  
		{\cal{L}}^{(C \kappa)}(s_2) \cdot \left(  \Theta^{(C \kappa)}(s_2)     -   \Theta^{(C \kappa)}(s_1)    \right) - 
		{\cal{L}}^{(E \kappa)}(s_2) \cdot \left(  \Theta^{(E \kappa)}(s_2)     -   \Theta^{(E \kappa)}(s_1)    \right).
\end{align*}
Since $(\Theta^{(x \kappa)}(s))_{s \geq 0}$ is an $({\cal{F}}_s)_{s \geq 0}$-martingale with asymptotically independent increments, the random variable $\Delta\tilde{\Psi}^{(\kappa)}(s_2,s_1)$ has the following approximate distributional properties:
\begin{align*}
\begin{split}
&\bullet \quad \Delta\tilde{\Psi}^{(\kappa)}(s_2,s_1) \sim 
								{\cal{N}}\left( 0, \varsigma^{(\kappa)}(s_2,s_1)^2   \right) \\
								& \qquad \text{with } 
								\varsigma^{(\kappa)}(s_2,s_1)^2 \coloneqq \sum_{x=C,E} {\cal{L}}^{(x \kappa)}(s_2) \cdot \left( V^{(x)}(s_2) - V^{(x)}(s_1) \right) \cdot {\cal{L}}^{(x \kappa)}(s_2)^{\textsc{T}}, \\
&\bullet \quad \Delta\tilde{\Psi}^{(\kappa)}(s_2,s_1) \text{ is independent from the random vector } \left(n^{(\kappa)}\Delta\Psi^{(\kappa)}(s_1), \Theta^{(C \kappa)}(s_1) ,  \Theta^{(E \kappa)}(s_1) \right)^{\textsc{T}}.
\end{split}
\end{align*}
At each stage $\kappa$ of the design, the nuisance parameters $V^{(x)}(s)$, ${\cal{L}}^{(C \kappa)}(s)$ and ${\cal{L}}^{(E \kappa)}(s)$ may be estimated consistently by $\hat{V}^{(x \kappa)}(s)\coloneqq [\hat{\Theta}^{(x \kappa)}](s)$, $\hat{\cal{L}}^{(C \kappa)}(s) \coloneqq  \sqrt{ 1 + r^{(\kappa)}  } \cdot \hat{L}^{(C \kappa)}(s)$, and $\hat{\cal{L}}^{(E \kappa)}(s) \coloneqq  \sqrt{ 1 + 1/r^{(\kappa)}  } \cdot \hat{L}^{(E \kappa)}(s)$ from \eqref{eq:nuisance_parameter_estimates}. In particular, this yields stage-wise consistent estimators
\begin{align}\label{eq:covariance_estimate_II}
\begin{split}
\hat{\sigma}^{(\kappa)}(s)^2 &\coloneqq \sum_{x=C,E} \hat{\cal{L}}^{(x \kappa)}(s) \cdot \hat{V}^{(x \kappa)}(s) \cdot \hat{\cal{L}}^{(x \kappa)}(s)^{\textsc{T}} \\
\hat{\varsigma}^{(\kappa)}(s_2,s_1)^2 &\coloneqq \sum_{x=C,E} \hat{\cal{L}}^{(x \kappa)}(s_2) \cdot \left( \hat{V}^{(x \kappa)}(s_2) - \hat{V}^{(x \kappa)}(s_1) \right) \cdot \hat{\cal{L}}^{(x \kappa)}(s_2)^{\textsc{T}}
\end{split}
\end{align}
of $\sigma^{(\kappa)}(s)^2$ and $\varsigma^{(\kappa)}(s_2,s_1)^2$. This way, we have achieved full knowledge of the joint distribution of the vector $\left(\sqrt{ n^{(\kappa)} } \cdot \Delta\Psi^{(\kappa)}(s), \Theta^{(C \kappa)}(s) ,  \Theta^{(E \kappa)}(s) \right)^{\textsc{T}}$, as required for constructing adaptive tests of $H_0$.
\ \\ \\ \\
We next explicitly construct a two-stage adaptive test of $H_0$ in the context of a two-arm randomized clinical trial comparing an experimental treatment $E$ to control treatment $C$.
Let $\Delta\Psi^{(1)}(s_1)$ denote the statistic from Eq. \eqref{A8.Eq12} calculated in the set of stage 1 patients at some early point of time $s_1 > 0$ that is characteristic for the short-term response to treatment, e.g. $s_1 = 6$ months after the start of treatment. 
Let $\hat{Z}_{11} \coloneqq  \sqrt{n^{(1)}} \Delta\Psi^{(1)}(s_1)/\hat{\sigma}^{(1)}(s_1)$ be the interim statistics (a measure for the treatment effect regarding OS).
Likewise, let $\hat{\Lambda}_{0*}^{(x 1)}(s_1)$ denote the Nelsen-Aalen estimate of the cumulative short-term hazard $\Lambda_{0*}^{(x)}(s_1)$ for PFS in treatment group $x=E,C$, derived in stage 1 patients. Let further $S_2\coloneqq \xi(\hat{Z}_{11},\hat{\Lambda}_{0*}^{(C1)}(s_1)-\hat{\Lambda}_{0*}^{(E1)}(s_1)) > s_1$ be some long-term point of time that is chosen at the interim analysis dependent on $\hat{Z}_{11}$ and $\hat{\Lambda}_{0*}^{(C 1)}(s_1)-\hat{\Lambda}_{0*}^{(E 1)}(s_1)$ with some continuous function $\xi: \mathbb{R} \times \mathbb{R} \to (s_1,\infty)$, $(x_1,x_2) \mapsto \xi(x_1,x_2)$. Analogously, let $\Delta\Psi^{(\kappa)}(S_2)$ denote the statistic from Eq. \eqref{A8.Eq12} evaluated in the set of patients from stage $\kappa=1,2$ at the random long-term point of time $S_2 > s_1$.\\
On this basis we consider the standardized statistics
\begin{align*}
\begin{split}
\hat{Z}_{11} &\coloneqq  \frac{ \sqrt{n^{(1)}} \Delta\Psi^{(1)}(s_1)}{\hat{\sigma}^{(1)}(s_1)} \\
\hat{Z}_{12} &\coloneqq  \frac{ \sqrt{n^{(1)}} \Delta\Psi^{(1)}(S_2) - \sqrt{n^{(1)}} \Delta\Psi^{(1)}(s_1)}{\hat{{\varsigma}}^{(1)}(S_2,s_1)} \\
\hat{Z}_{2}  &\coloneqq  \frac{ \sqrt{n^{(2)}} \Delta\Psi^{(2)}(S_2)}{\hat{\sigma}^{(2)}(S_2)} 
\end{split}
\end{align*}
For some prefixed weights $0<w_1,w_2<1$ with $w_1^2+w_2^2=1$, we are interested in calculating the probability of the event
\begin{align*}
\begin{split}
\hat{\cal{R}} \coloneqq  \{ \hat{Z}_{11} \geq c_1 \} \cup \{    c_0 \leq \hat{Z}_{11} < c_1 ,   w_1 \hat{Z}_{12} + w_2 \hat{Z}_{2} \geq c_2 \}
\end{split}
\end{align*}
For this purpose, let us also introduce the auxiliary statistics
\begin{align*}
\begin{split}
Z_{11} &\coloneqq  \frac{\sqrt{n^{(1)}} \Delta\Psi^{(1)}(s_1)}{\sigma^{(1)}(s_1)} \\
Z_{12} &\coloneqq  \frac{\sqrt{n^{(1)}} \Delta\Psi^{(1)}(S_2) - \sqrt{n^{(1)}} \Delta\Psi^{(1)}(s_1)}{{\varsigma}^{(1)}(S_2,s_1)} \\
Z_{2}  &\coloneqq  \frac{\sqrt{n^{(2)}} \Delta\Psi^{(2)}(S_2)}{\sigma^{(2)}(S_2)} 
\end{split}
\end{align*}
as well as the event
\begin{align*}
\begin{split}
{\cal{R}} \coloneqq  \{ Z_{11} \geq c_1 \} \cup \{    c_0 \leq Z_{11} < c_1 ,   w_1 Z_{12} + w_2 Z_{2} \geq c_2 \}.
\end{split}
\end{align*}
Notice that $(\hat{Z}_{11}, \hat{Z}_{12}, \hat{Z}_{2})$ has the same distribution as $({Z}_{11}, {Z}_{12}, {Z}_{2})$ in the large sample limit. Thus $\mathbb{P}(\hat{\cal{R}}) \approx \mathbb{P}({\cal{R}})$ when sample size is large, and we are reduced to calculate $\mathbb{P}({\cal{R}})$. Clearly, we have
\begin{align}\label{A8.Eq19}
\begin{split}
\mathbb{P} \left( \{ Z_{11} \geq c_1 \} \right) = \mathbb{P} \left( \frac{\Delta\Psi(s_1) - \Delta\mu(s_1)}{\sigma(s_1)} \geq c_1 - \Delta\frac{\mu(s_1)}{\sigma(s_1)} \right) = 1 - \Phi\left( c_1 - \frac{\Delta\mu(s_1)}{\sigma(s_1)} \right).
\end{split}
\end{align}
To evaluate $\mathbb{P}( c_0 \geq Z_{11} < c_1 ,   w_1 Z_{12} + w_2 Z_{2} \geq c_2 )$ first notice that we have
\begin{align*}
\begin{split}
&\sqrt{n^{(\kappa)}} \cdot \left( \Delta\Psi^{(\kappa)}(S_2) - \Delta\Psi^{(\kappa)}(s_1) - [\Delta\mu(S_2) - \Delta\mu(s_1) ] \right) \\
= & {\cal{L}}^{(C \kappa)}(S_2) \cdot \Theta^{(C \kappa)}(S_2) - {\cal{L}}^{(C \kappa)}(s_1) \cdot \Theta^{(C \kappa)}(s_1) 
 - \left( {\cal{L}}^{(E \kappa)}(S_2) \cdot \Theta^{(E \kappa)}(S_2) - {\cal{L}}^{(E \kappa)}(s_1) \cdot \Theta^{(E \kappa)}(s_1)  \right) \\
= & {\cal{L}}^{(C \kappa)}(S_2) \cdot \left( \Theta^{(C \kappa)}(S_2) - \Theta^{(C \kappa)}(s_1) \right)
  - {\cal{L}}^{(E \kappa)}(S_2) \cdot \left( \Theta^{(E \kappa)}(S_2) - \Theta^{(E \kappa)}(s_1) \right)  \\
& +	\Theta^{(C \kappa)}(s_1) \cdot \left( {\cal{L}}^{(C \kappa)}(S_2) - {\cal{L}}^{(C \kappa)}(s_1) \right)
	-	\Theta^{(E \kappa)}(s_1) \cdot \left( {\cal{L}}^{(E \kappa)}(S_2) - {\cal{L}}^{(E \kappa)}(s_1) \right) \\
= & \Delta\tilde{\Psi}^{(\kappa)}(S_2,s_1) 
  +	\sum_{x=C,E} (-1)^{I(x=E)} \cdot \Theta^{(x \kappa)}(s_1) \cdot \left( {\cal{L}}^{(x \kappa)}(S_2) - {\cal{L}}^{(x \kappa)}(s_1) \right),
\end{split}
\end{align*}
with the indicator $I(x=E)$ being equal to $1$ if $x=E$, and $0$ otherwise. Finally notice that, on the set where $J_{02}^{(x \kappa)}=1$, we have
\begin{align*}
\begin{split}
\hat{\Lambda}^{(x \kappa)}(s) 
&= \frac{M_{0 *}^{(x \kappa)}(s)}{\sqrt{n^{(x \kappa)}}}  +  {\Lambda}^{(x)}(s)
= \frac{ pr_3( \hat{\Theta}^{(x \kappa)}(s) )}{\sqrt{n^{(x \kappa)}}} +  {\Lambda}^{(x)}(s), \text{ and thus} \\
S_2 &=  \xi\left( \hat{Z}_{11},  \frac{pr_3( \hat{\Theta}^{(C 1)}(s) )}{\sqrt{n^{(C 1)}}} -  \frac{ pr_3( \hat{\Theta}^{(E 1)}(s) )}{\sqrt{n^{(E 1)}}}  +  {\Lambda}^{(C)}(s) - {\Lambda}^{(E)}(s)    \right)
\end{split}
\end{align*}
with $pr_3:(x_1,x_2,x_3,x_4) \mapsto x_3$ denoting projection onto the third component. Also, let
\begin{equation}\label{eq:final_analysis_date_function}
\begin{split}
s_2(z,\theta_E,\theta_C) &\coloneqq \xi(x_1,x_2)|_{x_1=z,x_2=\frac{ \theta_C }{\sqrt{n^{(C 1)}}} - \frac{  \theta_E }{\sqrt{n^{(E 1)}}}  +  {\Lambda}^{(C)}(s)  -  {\Lambda}^{(E)}(s)} \\
&= \xi\left( z,  \frac{ \theta_C }{\sqrt{n^{(C 1)}}} - \frac{  \theta_E }{\sqrt{n^{(E 1)}}}  +  {\Lambda}^{(C)}(s)  -  {\Lambda}^{(E)}(s) \right).
\end{split}
\end{equation}
Then, by independence of $\Delta \tilde{\Psi}^{(\kappa)}(s_2(z,\theta_E,\theta_C),s_1)$ from $(\sqrt{n^{(\kappa)}} \Delta\Psi(s_1)^{(\kappa)}, \hat{\Theta}^{(E \kappa)}(s_1), \hat{\Theta}^{(C \kappa)}(s_1))$, and since $\lim_{n^{(\kappa)} \to \infty} \mathbb{P}(J_{02}^{(x \kappa)}(s_1)=1) = 1$ we approximately have for sufficiently large $n^{(\kappa)}$
\begin{align*}
\begin{split}
&\mathbb{P}\left( c_0 \leq Z_{11} < c_1, w_1 Z_{12} + w_2 Z_{2} \geq c_2 \right) \\
= &	\int_{c_0}^{c_1} dz \int_{\mathbb{R}^4} d\theta_E \int_{\mathbb{R}^4} d\theta_C f_{Z_{11},\hat{\Theta}^{(E \kappa)}(s_1), \hat{\Theta}^{(C \kappa)}(s_1)}(z,\theta_E,\theta_C) \\
& \qquad \qquad		\mathbb{P}\Bigg( w_1 \sqrt{n^{(1)}} \frac{\Delta\Psi^{(1)}(S_2) - \Delta\Psi^{(1)}(s_1)}{ {\varsigma}^{(1)}(S_2,s_1)}  + w_2 \sqrt{n^{(2)}} \frac{\Delta\Psi^{(2)}(S_2)}{\sigma^{(2)}(S_2)} \geq c_2 \Big| Z_{11} = z , \\ 
& \qquad \qquad \qquad   \hat{\Theta}^{(E \kappa)}(s_1) = \theta_E, \hat{\Theta}^{(C \kappa)}(s_1) = \theta_C \Bigg) \\
= & 	\int_{c_0}^{c_1} dz \int_{\mathbb{R}^4} d\theta_E \int_{\mathbb{R}^4} d\theta_C f_{Z_{11},\hat{\Theta}^{(E \kappa)}(s_1), \hat{\Theta}^{(C \kappa)}(s_1)}(z,\theta_E,\theta_C) \\
& \qquad \qquad		
	\mathbb{P}\Bigg( w_1 \frac{\Delta\tilde{\Psi}^{(1)}(s_2(z, \theta_E, \theta_C),s_1) }{ {\varsigma}^{(1)}(s_2(z, \theta_E, \theta_C),s_1)} 
			  + w_2 \sqrt{n^{(2)}} \frac{\Delta\Psi^{(2)}(s_2(z, \theta_E, \theta_C)) - \Delta\mu(s_2(z, \theta_E, \theta_C))}{\sigma^{(2)}(s_2(z, \theta_E, \theta_C))} \\
				& \qquad  \qquad  + w_2 \sqrt{n^{(2)}} \frac{\Delta\mu(s_2(z, \theta_E, \theta_C))}{\sigma^{(2)}(s_2(z, \theta_E, \theta_C))} 
								  + w_1 \sqrt{n^{(1)}} \frac{ \Delta\mu(s_2(z, \theta_E, \theta_C))  -\Delta\mu(s_1)}{\varsigma^{(1)}(s_2(z, \theta_E, \theta_C),s_1)}  \\
				& \qquad  \qquad  + w_1 \frac{   \sum_{x=C,E} (-1)^{I(x=E)} \cdot \theta_x \cdot \left( {\cal{L}}^{(x \kappa)}(s_2(z, \theta_E, \theta_C)) - {\cal{L}}^{(x \kappa)}(s_1) \right)    }{  {\varsigma}^{(1)}(s_2(z, \theta_E, \theta_C),s_1) }   
				\geq c_2 \Bigg) \\
= &	\int_{c_0}^{c_1} dz \int_{\mathbb{R}^4} d\theta_E \int_{\mathbb{R}^4} d\theta_C f_{Z_{11},\hat{\Theta}^{(E \kappa)}(s_1), \hat{\Theta}^{(C \kappa)}(s_1)}(z,\theta_E,\theta_C) \\
& \qquad \qquad		\Phi\Bigg( -c_2 + w_2 \sqrt{n^{(2)}} \frac{\Delta\mu(s_2(z, \theta_E, \theta_C))}{\sigma^{(2)}(s_2(z, \theta_E, \theta_C))} 
+ w_1 \sqrt{n^{(1)}} \frac{ \Delta\mu(s_2(z, \theta_E, \theta_C))  -\Delta\mu(s_1)}{\varsigma^{(1)}(s_2(z, \theta_E, \theta_C),s_1)}  \\ 
				& \qquad  \qquad  + w_1 \frac{   \sum_{x=C,E} (-1)^{I(x=E)} \cdot \theta_x \cdot \left( {\cal{L}}^{(x \kappa)}(s_2(z, \theta_E, \theta_C)) - {\cal{L}}^{(x \kappa)}(s_1) \right)    }{  {\varsigma}^{(1)}(s_2(z, \theta_E, \theta_C),s_1) }  \Bigg)
\end{split}
\end{align*}
where $f_{Z_{11},\hat{\Theta}^{(E \kappa)}(s_1), \hat{\Theta}^{(C \kappa)}(s_1)}(z,\theta_E,\theta_C)$ is the joint density of the vector $(Z_{11},\hat{\Theta}^{(E \kappa)}(s_1), \hat{\Theta}^{(C \kappa)}(s_1))$ that approximately follows the multivariate normal distribution given in \eqref{A8.Eq13}.\\
To derive an approximate level-$\alpha$ adaptive test for $H_0: \mu^{(E)}(s) = \mu^{(C)}(s)$ from \eqref{A8.Eq09} based on the rejection region $\hat{\cal{R}}$, we choose the critical bounds $c_1,c_2$ such that the stage-1 type one error rate $\mathbb{P}_{H_0}(\{ Z_{11} \geq c_1 \})$ equals $\alpha_1$ and such that the overall type one error rate $P_{H_0}(\hat{\cal{R}}) \approx P_{H_0}({\cal{R}})$ equals the desired significance level $\alpha>\alpha_1$. Since $\Delta\mu(\cdot)=0$ under $H_0$, we immediately obtain from \eqref{A8.Eq19} that $c_1=\Phi^{-1}(1-\alpha_1)$. We then have to choose $c_2$ such that $\mathbb{P}_{H_0}( c_0 \leq Z_{11} < c_1, w_1 Z_{12} + w_2 Z_{2} \geq c_2)$ equals $\alpha - \alpha_1$. This is tantamout to finding a constant $c_2$ such that
\begin{align*}
\begin{split}
\alpha - \alpha_1  
= &	\int_{c_0}^{c_1} dz \int_{\mathbb{R}^4} d\theta_E \int_{\mathbb{R}^4} d\theta_C f(z,\theta_E,\theta_C) \\
& \Phi\left( -c_2 + w_1 \frac{   \sum_{x=C,E} (-1)^{I(x=E)} \cdot \theta_x \cdot \left( {\cal{L}}^{(x \kappa)}(s_2(z, \theta_E, \theta_C)) - {\cal{L}}^{(x \kappa)}(s_1) \right)    }{  {\varsigma}^{(1)}(s_2(z, \theta_E, \theta_C),s_1) }  \right),
\end{split}
\end{align*}
where $f(z,\theta_E,\theta_C)$ is the density of the 9D random vector 
\begin{align*}
\begin{pmatrix} Z \\ {\Theta}_E \\ {\Theta}_C
\end{pmatrix}
\sim
{\cal{N}}
\left( 0_{9 \times 1} , \Sigma^2 \right), \text{ with }
\Sigma^2 \coloneqq  
  \begin{pmatrix} 
	1                                                                                                        &  \frac{  {\cal{L}}^{(E \kappa)}(s_1) \cdot {V}^{(E)}(s_1)  }{{\sigma}^{(1)}(s_1)}  &  \frac{   {\cal{L}}^{(C \kappa)}(s_1) \cdot {V}^{(C)}(s_1)   }{{\sigma}^{(1)}(s_1)}     \\
	\frac{  {V}^{(E)}(s_1) \cdot {\cal{L}}^{(E \kappa)}(s_1)^{\textsc{T}}   }{{\sigma}^{(1)}(s_1)}        	 &                                  {V}^{(E)}(s_1)       																			&  0_{4 \times 4}   \\
	\frac{  {V}^{(C)}(s_1) \cdot {\cal{L}}^{(C \kappa)}(s_1)^{\textsc{T}}  }{{\sigma}^{(1)}(s_1)}        	 &                                  0_{4 \times 4}           																			&  {V}^{(C)}(s_1)   \\
  \end{pmatrix}. 
\end{align*}
Note that this cannot be directly put into practice as it requires knowledge of various nuisance parameters. Instead we choose a random variable $\hat{c}_2$ such that
\begin{align*}
\alpha - \alpha_1  
= & \int_{c_0}^{c_1} dz \int_{\mathbb{R}^4} d\theta_E \int_{\mathbb{R}^4} d\theta_C f(z,\theta_E,\theta_C) \\
& \Phi\left( -\hat{c}_2 + w_1 \frac{   \sum_{x=C,E} (-1)^{I(x=E)} \cdot \theta_x \cdot \left( \hat{\cal{L}}^{(x \kappa)}(\hat{s}_2(z, \theta_E, \theta_C)) - \hat{\cal{L}}^{(x \kappa)}(s_1) \right)    }{  \hat{\varsigma}^{(1)}(\hat{s}_2(z, \theta_E, \theta_C),s_1) }  \right),
\end{align*}
where $f(z,\theta_E,\theta_C)$ is the density of a 9D random vector
\begin{align*}
\begin{pmatrix} Z \\ {\Theta}_E \\ {\Theta}_C
\end{pmatrix}
\sim
{\cal{N}}
\left( 0_{9 \times 1} , \Sigma^2 \right), \text{ with }
\Sigma^2 \coloneqq  
  \begin{pmatrix} 
	1                                                                                                        &  \frac{  \hat{\cal{L}}^{(E \kappa)}(s_1) \cdot \hat{V}^{(E)}(s_1)  }{\hat{\sigma}^{(1)}(s_1)}  &  \frac{   \hat{\cal{L}}^{(C \kappa)}(s_1) \cdot \hat{V}^{(C)}(s_1)   }{\hat{\sigma}^{(1)}(s_1)}     \\
	\frac{  \hat{V}^{(E)}(s_1) \cdot \hat{\cal{L}}^{(E \kappa)}(s_1)^{\textsc{T}}   }{\hat{\sigma}^{(1)}(s_1)}        	 &                                  \hat{V}^{(E)}(s_1)       																			&  0_{4 \times 4}   \\
	\frac{  \hat{V}^{(C)}(s_1) \cdot \hat{\cal{L}}^{(C \kappa)}(s_1)^{\textsc{T}}   }{\hat{\sigma}^{(1)}(s_1)}        	 &                                  0_{4 \times 4}           																			&  \hat{V}^{(C)}(s_1)   \\
  \end{pmatrix} 
\end{align*}
with $\hat{F}^{(x \kappa)}, \hat{\mu}^{(x \kappa)}, \hat{L}^{(x \kappa)}$ and $\hat{V}^{(x)}$ as defined in \eqref{eq:nuisance_parameter_estimates}, $\hat{\Theta}^{(x \kappa)}$ as defined in \eqref{eq:mv_martingale_with_nuisance_parameter_estimates} and $s_2$ as defined in \eqref{eq:final_analysis_date_function} and the derived quantities
\begin{align*}
\hat{\cal{L}}^{(C \kappa)}(s) &\coloneqq  \sqrt{ 1 + r^{(\kappa)}  } \cdot \hat{L}^{(C \kappa)}(s), \\
\hat{\cal{L}}^{(E \kappa)}(s) &\coloneqq  \sqrt{ 1 + 1/r^{(\kappa)}  } \cdot \hat{L}^{(E \kappa)}(s), \\
\hat{\sigma}^{(\kappa)}(s)^2 &\coloneqq \sum_{x=C,E} \hat{\cal{L}}^{(x \kappa)}(s) \cdot \hat{V}^{(x \kappa)}(s) \cdot \hat{\cal{L}}^{(x \kappa)}(s)^{\textsc{T}}, \\
\hat{\varsigma}^{(\kappa)}(s_2,s_1)^2 &\coloneqq \sum_{x=C,E} \hat{\cal{L}}^{(x \kappa)}(s_2) \cdot \left( \hat{V}^{(x \kappa)}(s_2) - \hat{V}^{(x \kappa)}(s_1) \right) \cdot \hat{\cal{L}}^{(x \kappa)}(s_2)^{\textsc{T}}\\
\hat{s}_2(z,\theta_E,\theta_C) &\coloneqq \xi(x_1,x_2)|_{x_1=z,x_2=\frac{ \theta_C }{\sqrt{n^{(C 1)}}} - \frac{  \theta_E }{\sqrt{n^{(E 1)}}}  +  \hat{\Lambda}^{(C)}(s)  -  \hat{\Lambda}^{(E)}(s)}
\end{align*}
We claim that $\mathbb{P}_{H_0}( c_0 \leq Z_{11} < c_1, w_1 Z_{12} + w_2 Z_{2} \geq \hat{c}_2) \to \alpha - \alpha_1$ as sample size increases. To see this, let
\begin{align*}
	\begin{split}
		\psi(z,\theta_E,\theta_C) 		&\coloneqq  \frac{   \sum_{x=C,E} (-1)^{I(x=E)} \cdot \theta_x \cdot \left( {\cal{L}}^{(x \kappa)}(s_2(z, \theta_E, \theta_C)) - {\cal{L}}^{(x \kappa)}(s_1) \right)    }{  {\varsigma}^{(1)}(s_2(z, \theta_E, \theta_C),s_1) }, \\
		\hat{\psi}(z,\theta_E,\theta_C) &\coloneqq  \frac{   \sum_{x=C,E} (-1)^{I(x=E)} \cdot \theta_x \cdot \left( \hat{\cal{L}}^{(x \kappa)}(\hat{s}_2(z, \theta_E, \theta_C)) - \hat{\cal{L}}^{(x \kappa)}(s_1) \right)    }{  \hat{\varsigma}^{(1)}(\hat{s}_2(z, \theta_E, \theta_C),s_1) }, \\
		G(c) &\coloneqq  \int_{c_0}^{c_1} dz \int_{\mathbb{R}^4} d\theta_E \int_{\mathbb{R}^4} d\theta_C f(z,\theta_E,\theta_C) \Phi\left( -c + w_1 \cdot \psi(z,\theta_E,\theta_C)  \right), \\ 
		\hat{G}(c) &\coloneqq  \int_{c_0}^{c_1} dz \int_{\mathbb{R}^4} d\theta_E \int_{\mathbb{R}^4} d\theta_C f(z,\theta_E,\theta_C) \Phi\left( -c + w_1 \cdot \hat{\psi}(z,\theta_E,\theta_C)  \right).
	\end{split}
\end{align*}
We need to show that 
\begin{equation}\label{eq:psi_convergence_II}
	\hat{\psi}(z,\theta_E,\theta_C) \underset{n\to \infty}{\overset{\mathbb{P}}{\longrightarrow}} {\psi}(z,\theta_E,\theta_C)
\end{equation}
for all $z,\theta_E,\theta_C$. First, we have $\hat{\varsigma}(\hat{s}_2(z,\theta_E,\theta_C),s_1) \to \varsigma(s_2(z,\theta_E,\theta_C),s_1)$ by Lemma \ref{lemma:cmp_extended} which can be applied because of the specification of $\hat{s}_2$. This statement also requires uniform convergence of $\hat{\varsigma}$ as defined in \eqref{eq:covariance_estimate_II}. This convergence holds because of the uniform convergence of the components of $\hat{\cal{L}}$ and $\hat{V}$. The former is a consequence of Lemma \ref{lemma:F_convergence}. The latter can be proven by decomposing $\hat{V} - V$ into $(\hat{[\Theta]} - [\Theta]) + ([\Theta] - V)$. The first of those two summands vanishes uniformly by appplying the arguments of Lemma \ref{lemma:conv_integral_product_estimators} uniformly over all upper bounds of the integral. Uniform convergence of the second summand follows from Rebolledos Martingale Central Limit Theorem. The desired convergence in \eqref{eq:psi_convergence_II} now follows from plugging those convergences together and application of Slutsky's Theorem.\\
Now, as \eqref{eq:psi_convergence_II} holds for all $z, \theta_E, \theta_C$, it follows that $\hat{G}(c) \underset{n\to \infty}{\overset{\mathbb{P}}{\longrightarrow}} {G}(c)$ for all $c \in R$ by Lemma \ref{lemma:contmap_domconv}. Moreover, $\hat{G}(c)$ and ${G}(c)$ are both continuous and strictly decreasing in $c$. Thus, their inverses $\hat{G}^{-1}(\alpha)$ and ${G}^{-1}(\alpha)$ exist, and from Lemma \ref{lemma:convinprob_inverse} we get
\begin{align}\label{eq:inverse_level_function_II}
	\hat{G}^{-1}(\alpha) \underset{n\to \infty}{\overset{\mathbb{P}}{\longrightarrow}} {G}^{-1}(\alpha).
\end{align}
Notice that $\lim_{n \to \infty} \mathbb{P}_{H_0}( c_0 \leq Z_{11} < c_1, w_1 Z_{12} + w_2 Z_{2} \geq \hat{c}_2) = \alpha - \alpha_1$ is a consequence of Eq. \eqref{eq:inverse_level_function_II} and Lemma \ref{lemma:contmap_domconv}. Indeed, choosing $\hat{c}_2\coloneqq \hat{G}^{-1}(\alpha-\alpha_1)$ yields $\hat{c}_2 \to {G}^{-1}(\alpha-\alpha_1)$ by \eqref{eq:inverse_level_function_II} and thus $G(\hat{c}_2) \to \alpha - \alpha_1$ in probability as $n \to \infty$ by Lemma \ref{lemma:contmap_domconv}. All in all, this finished the proof that the rejection region $\{ \hat{Z}_{11} \geq c_1 \} \cup \{    c_0 \leq \hat{Z}_{11} < c_1 ,   w_1 \hat{Z}_{12} + w_2 \hat{Z}_{22} \geq \hat{c}_2 \}$ yields the desired approximate adaptive level-$\alpha$ test of $H_0$.\\
\newpage

%Please insert appendices before the references.

%Modification of the sample size and the schedule of interim analyses in survival trials based on data inspections


\begin{thebibliography}{10}
	%
	\bibitem[Andersen et al.(1992) Andersen, P.K. and Borgan, O. and Gill, R.D. and Keiding N.]{Andersen:1991} Andersen, P.K. and Borgan, O. and Gill, R.D. and Keiding N. (1992) \textit{Statistical Models Based on Counting Processes}. Springer, New York.
	%
	\bibitem[Bauer(1989)Bauer, P.]{B89}Bauer, P. (1989). Multistage testing with adaptive designs.  \textit{Biom. Inf. in Med. Biol.} \textbf{20}, 130--136.
	%
	\bibitem[Bauer and K\"ohne(1994)Bauer, P. and K\"ohne, K.]{BK94}Bauer, P. and K\"ohne, K. (1994). Evaluation of experiments with adaptive interim analyses.  \textit{Biometrics} \textbf{50}, 1029--1041.
	%
	\bibitem[Bauer and Posch(2004)Bauer, P. and Posch, M.]{BauerPosch04}Bauer, P. and Posch, M. (2004). \textit{Letter to the editor: Modification of the sample size and the schedule of interim analyses in survival trials based on data inspections}. Statistics in Medicine \textbf{23}(8), 1333–1335.
	%
	\bibitem[Bauer et al.(2016) Bauer, P.]{B16} Bauer, P. (2016). Twenty-five years of confirmatory adaptive designs: opportunities and pitfalls. \textit{Statistics in Medicine} \textbf{35}(3), 325-347.
	%
	\bibitem[Billingsley (1999) Billingsley, P.]{Billingsley:1999} Billingsley, P. (1999). \textit{Convergence of Probability Measures}. John Wiley \& Sons.
	%
	\bibitem[Brannath et al.(2002) Brannath, W. and Posch, M. and Bauer, P.]{BPB02} Brannath, W., Posch, M., and Bauer, P. (2002). Recursive combination tests \textit{Journal of the American Statistical Association} \textbf{97}, 236-244.
	%
	\bibitem[Brannath et al.(2009) Brannath, W. and Zuber, E. and Branson, M. and Bretz, F. and Gallo, P. and Posch, M. and Racine-Poon, A.]{Bea09} Brannath, W., Zuber, E., Branson, M., Bretz, F., Gallo, P., Posch, M., and Racine-Poon, A. (2009). Confirmatory adaptive designs with Bayesian decision tools for a targeted therapy in oncology. \textit{Statistics in Medicine} \textbf{28}, 1445-1463. 
	%
	\bibitem[Br\"uckner and Brannath(2017) Br\"uckner, M. and Brannath, W.]{BB17} Br\"uckner, M. and Brannath, W. (2017) Sequential tests for non-proportional hazards data. \textit{Lifetime Data Analysis} \textbf{23}, 339-352.
	%
	\bibitem[Br\"uckner et al.(2018) Br\"uckner, M. and Burger, H.U. and Brannath, W.]{BBB18} Br\"uckner, M., Burger, H.U., and Brannath, W. (2018). Nonparametric adaptive enrichment designs using categorical surrogate data. \textit{Statistics in Medicine} \textbf{37}, 4507-4524. 
	%
	\bibitem[Carreras et al.(2015) Carreras, M. and Gutjahr, G. and Brannath, W.]{CGB15} Carreras, M., Gutjahr, G., and Brannath, W. (2015). Adaptive seamless designs with interim treatment selection: a case study in oncology. \textit{Statistics in Medicine} \textbf{34}, 1317-1333. 
	%
	\bibitem[Danzer et al.(2022)Danzer, M.F. and Terzer, T. and Berthold, F. and Faldum, A. and Schmidt, R.]{Danzer22} Danzer, M.F., Terzer, T., Berthold, F., Faldum, A., and Schmidt, R. (2022).
	Confirmatory adaptive group sequential designs for single-arm phase II studies with multiple time-to-event endpoints. \textit{Biometrical Journal} \textbf{64}(2), 312--342.
	%
	\bibitem[Danzer et al.(2024) Danzer, M.F., Faldum, A., Simon, T., Hero, B., and Schmidt, R.]{Danzer23} Danzer, M.F., Faldum, A., Simon, T., Hero, B., and Schmidt, R. (2024) Confirmatory adaptive group sequential designs for clinical trials with multiple time-to-event outcomes in Markov models. \textit{Biometrical Journal} \textbf{66}(7), e202300181.
	%
	\bibitem[Desseaux und Porcher(2007) Desseaux, K. and Porcher, R.]{DP07} Desseaux, K. and Porcher, R. (2007). Flexible two-stage design with sample size reassessment for survival trials.  \textit{Statistics in Medicine} \textbf{26}, 5002-5013.
	%
	\bibitem[Feld et al.(2021)Feld, J. and Faldum, A. and Schmidt, R.]{bib7}Feld, J., Faldum, A., and Schmidt, R. (2021). Adaptive group sequential survival comparisons based on log-rank and
	pointwise test statistics. \textit{Statistical Methods in Medical Research} \textbf{30}(12), 2562--2581.
	%
	\bibitem[Fleming and Lin(2000) Fleming, T.R. and Lin, D.Y.]{FL00} Fleming, T.R. and Lin, D.Y. (2000). Survival Analysis in Clinical Trials: Past Developments and Future Directions. \textit{Biometrics} \textbf{56}, 971-983.
	%
	\bibitem[Friede et al.(2011) Friede, T. and Parsons, N. and Stallard, N. and Todd, S. and Valdes Marquez, E. and Chataway, J. and Nicholas, R.]{Fea11} Friede, T., Parsons, N., Stallard, N., Todd, S., Valdes Marquez, E., Chataway, J., and Nicholas, R. (2011). Designing a seamless phase II/III clinical trial using early outcomes for treatment selection: An application in multiple sclerosis. \textit{Statistics in Medicine} \textbf{30}, 1528-1540. 
	%
	\bibitem[Friede et al.(2012) Friede, T. and Parsons, N. and Stallard, N.]{F12} Friede, T., Parsons, N., and Stallard, N (2012). A conditional error function approach for subgroup selection in adaptive clinical trials. \textit{Statistics in Medicine}, \textbf{31}, 4309-4320. 
	%
	\bibitem[Hommel(2001) Hommel, G.]{H01} Hommel, G. (2001). Adaptive modifications of hypotheses after an interim analysis. \textit{Biometrical Journal} \textbf{43}, 581-589.
	%
	\bibitem[Irle and Sch\"afer(2012) Irle, S. and Sch\"afer, H.]{IS12} Irle, S. and Sch\"afer, H. (2012). Interim Design Modifications in Time-to-Event Studies. \textit{Journal of the Statistical American Association} \textbf{107}, 341-348. 
	%
	\bibitem[Jahn-Eimermacher and Ingel(2009) Jahn-Eimermacher, A. and Ingel, K.]{J-EI09} Jahn-Eimermacher, A. and Ingel, K. (2009) Adaptive trial design: A general methodology for censored time to event data. \textit{ Contemporary Clinical Trials.} \textbf{30}, 171-177.
	%
	\bibitem[Jenkins et al.(2011) Jenkins, M. and Stone, A. and Jennison, C.]{JSJ11} Jenkins, M., Stone, A., and Jennison, C. (2011). An adaptive seamless phase II/III design for oncology trials with subpopulation selection using correlated survival endpoints. \textit{Pharmaceutical Statistics} \textbf{10}, 347-356.
	%
	\bibitem[Jennison and Turnbull(2000)Jennison, C. and Turnbull, B.W.]{bib3}Jennison, C. and Turnbull, B.W. (2000).  \textit{Group sequential methods with applications to clinical trials}. Chapman \& Hall, Boca Raton, London, New York, Washington, D.C.
	%
	\bibitem[J\"orgens et al.(2019) J\"orgens, S. and Wassmer, G. and K\"onig, F. and Posch, M.]{Jea19} J\"orgens, S., Wassmer, G., K\"onig, F., and Posch, M. (2019). Nested combination tests with a time‐to‐event endpoint using a short‐term endpoint for design adaptations. \textit{Pharmaceutical Statistics} \textbf{18}, 329-350. 
	%
	\bibitem[Lehmacher and Wassmer(1999) Lehmacher, W. and Wassmer, G.]{LW99} Lehmacher, W. and Wassmer, G. (1999). Adaptive Sample Size Calculations in Group Sequential Trials. \textit{Biometrics} \textbf{55}, 1286-1290.
	%
	\bibitem[Lenglart(1977)Lenglart, E.]{bib6}Lenglart, E. (1977). Relation de domination entre deux processus.  \textit{Annales de l'I. H. P., section B.} \textbf{13}(2), 171--179.
	%
	\bibitem[Magirr et al.(2016) Magirr, D. and Jaki, T. and Koenig, F. and Posch, M.]{Mea16} Magirr, D., Jaki, T., Koenig, F., and Posch, M. (2016). Sample size reassessment and hypothesis testing in adaptive survival trials. \textit{PLoS ONE} \textbf{11}, e0146465. 
	%
	\bibitem[Meller et al.(2019) Meller, M. and Beyersmann, J. and Rufibach, K.]{Meller:2019} Meller, M. and Beyersmann, J. and Rufibach, K. (2019). Joint modeling of progression-free and overall survival and computation of correlation measures. \textit{Statistics in Medicine} \textbf{38}, 4270-4289. 
	%
	\bibitem[Metha et al.(2014) Metha, C.R. and Schäfer, H. and Daniel, H. and Irle, S.]{Mea14} Metha, C.R., Schäfer, H., Daniel, H., and Irle, S. (2014). Biomarker driven population enrichment for adaptive oncology trials with time to event endpoints. \textit{Statistics in Medicine} \textbf{33}, 4515-4531. 
	%
	\bibitem[Olschewski and Schumacher(1986)Olschewski, M. and Schumacher, M.]{OS86}Olschewski, M. and Schumacher, M. (1986). Sequential analysis of survival times in clinical trials.  \textit{Biometrical Journal} \textbf{28}, 278--293.
	%
	\bibitem[Proschan and Hunsberger(1995) Proschan, M.A. and Hunsberger, S.A.]{PH95} Proschan, M.A. and Hunsberger, S.A. (1995). Designed extension of studies based on conditional power. \textit{Biometrics} \textbf{51}, 1315-1324.
	%
	\bibitem[Rauch et al.(2016) Rauch, G. and Sch\"uler, S. and Wirths, M. and Englert, S. and Kieser, M.]{Rea16} Rauch, G., Sch\"uler, S., Wirths, M., Englert, S., and Kieser, M. (2016). Adaptive Designs for Two Candidate Primary Time-to-Event Endpoints. \textit{Statistics in Biopharmaceutical Research} \textbf{8}, 207-216. 
	%
	\bibitem[Rebolledo(1980)Rebolledo, R.]{bib5}Rebolledo, R. (1980). Central limit theorems for local martingales.  \textit{Zeitschrift f\"ur Wahrscheinlichkeitstheorie und verwandte Gebiete} \textbf{51}, 269--286.
	% 
	\bibitem[Sch\"afer and M\"uller(2001)Sch\"afer, H. and M\"uller, H. H.]{SM01}Sch\"afer, H. and M\"uller, H. H. (2001). Modification of the sample size and the schedule of interim analyses in survival trials based on data inspections.  \textit{Statist. Med.} \textbf{20}, 3741--3751.
	%
	\bibitem[Schmidt et al.(2018) Schmidt, R. and Faldum, A. and Kwiecien, R.]{S18} Schmidt, R., Faldum, A., and Kwiecien, R. (2018) Adaptive designs for the one-sample log-rank test. \textit{Biometrics} \textbf{74}(2), 529-537 
	%
	\bibitem[Sellke and Siegmund(1983)Sellke, T. and Siegmund, D.]{SS83}Sellke, T. and Siegmund, D. (1983). Sequential analysis of the proportional hazards model.  \textit{Biometrika} \textbf{70}, 315--326.
	%
	\bibitem[Stallard(2010)Stallard N.]{S10} Stallard, N. (2010). A confirmatory seamless phase II/III clinical trial design incorporating short-term endpoint information. \textit{Statistics in Medicine} \textbf{29}, 959-971. 
	%
	\bibitem[Tsiatis(1981) Tsiatis, A.A.]{T81} Tsiatis, A.A. (1981). The asymptotic joint distribution of the efficient scores test for the proportional hazards model calculated over time. \textit{Biometrika} \textbf{68}, 311-315.
	%
	\bibitem[Tsiatis(1982) Tsiatis, A.A.]{T82} Tsiatis, A.A. (1982). Repeated significance testing for a general class of statistics used in censored survival analysis. \textit{Journal of the American Statistical Association} \textbf{77}, 855-861.
	%
	\bibitem[van der Vaart and Wellner(1992) van der Vaart, A.W. and Wellner, J.A.]{vanderVaart:1996} van der Vaart, A.W. and Wellber, J.A. (1996) \textit{Weak convergence and empirical processes}. Springer Nature, Switzerland.
	%
	\bibitem[Wassmer(2006)Wassmer, G.]{W06}Wassmer, G. (2006). Planning and analyzing adaptive group sequential survival trials.  \textit{Biometrical Journal} \textbf{48}, 714--729.
	%
	\bibitem[Wassmer and Brannath(2016) Wassmer, G. and Brannath, W.]{BW16} Wassmer, G. and Brannath, W. (2016) \textit{ Group sequential and confirmatory adaptive designs in clinical trials}. Springer International Publishing, Switzerland. 
	%
	%\bibitem[Bauer and Bauer(1994)Bauer, P. and Bauer, M.M.]{bib1}Bauer, P. and  Bauer, M. M. (1994). Testing equivalence simultaneously for location and  dispersion of two normally distributed %populations.  \textit{Biometrical  Journal} \textbf{36}, 643--660.
	%\bibitem[Farrington, C. P. and Andrews, N. (2003)]{bib2}Farrington, C.P. and Andrews, N. (2003). Outbreak detection:
	%Application to infectious disease surveillance. In: Monitoring the Health of Populations (eds. R. Brookmeyer and D. F. Stroup), Oxford University Press, Oxford,\break 203--231.
	%\bibitem[Rencher(1998)Rencher, A.C.]{bib3}Rencher, A. C. (1998).  \textit{Multivariate Statistical Inference and Applications}. Wiley, New  York. 
\end{thebibliography}
\end{document}